%% file: arXiv_v1.tex
\documentclass[aps,prl,reprint,superscriptaddress,nofootinbib]{revtex4-2}

\makeatletter
\AtBeginDocument{\let\auto@bib@innerbib\@empty}
\makeatother

\usepackage{amsmath,amsthm,amssymb,mathtools,bm}
\usepackage{color}

\usepackage{tikz}
\usetikzlibrary{arrows.meta,positioning,calc,decorations.pathreplacing}
\usepackage[
  colorlinks=false,
  breaklinks=true,
  pdfborder={0 0 1}
]{hyperref}

\definecolor{aqorange}{RGB}{205,120,25}
\definecolor{imblue}{RGB}{55,105,200}
\definecolor{parblue}{RGB}{65,125,205}

\usetikzlibrary{decorations.pathreplacing}

\newcommand{\ket}[1]{|#1\rangle}
\newcommand{\bra}[1]{\langle #1|}
\newcommand{\proj}[1]{|#1\rangle\langle #1|}

\theoremstyle{definition}

\newtheorem{proposition}{Proposition}
\newtheorem{lemma}{Lemma}
\newtheorem{theorem}{Theorem}
\newtheorem{corollary}{Corollary}
\newtheorem{definition}{Definition}

\begin{document}

\title{Resource Theory of Aquaternionicity: Every Pure State Is a Replicable Resource for Exact Simulation of Arbitrary Quantum Operations}


\author{Hayato Arai}
\email{h.arai6626@gmail.com, haya.arai@ntt.com}
\affiliation{NTT Communication Science Laboratories, NTT, Inc., 3-1 Morinosato Wakamiya, Atsugi, Kanagawa 243-0198, Japan}

\author{Yasuaki Nakayama}
\email{yasuaki.nakayama@ntt.com}
\affiliation{NTT Communication Science Laboratories, NTT, Inc., 3-1 Morinosato Wakamiya, Atsugi, Kanagawa 243-0198, Japan}

\begin{abstract}
An \emph{alchemical resource theory} admits a nonfree state that can be exactly replicated by free operations and whose finitely many copies enable exact simulation of arbitrary quantum instruments. The only known nontrivial examples, imaginarity and parity asymmetry, exhaust the qubit case (arXiv:2609.12988) and exhibit an all-or-nothing property: any nonmaximal resource exactly simulates only free unitaries. We introduce \emph{aquaternionicity} {for even-dimensional qudits \(d\ge4\)}, based on a quaternionic structure represented by an antiunitary \(\Theta\) with \(\Theta^2=-I\). It is extremal in two senses: every pure state is alchemical, and alchemicality is maximally abundant under mixing. {Within this framework,} for any nontrivial \(d\)-dimensional resource theory and pure alchemical state \(g\), the admissible pure-mixing set satisfies \(\dim\mathcal M_g\le 2d-4,\) which aquaternionicity saturates for every pure state. Conversely, these two extremal properties characterize the quaternionic structure. Because every pure state is alchemical, the initial resource can be chosen to match the known noise structure. Moreover, when the noisy resource states remain within a {common isotropic subspace}, the same free processor works exactly for all of them while returning the resource unchanged. This enables recalibration-free exact simulation, including universal quantum computation, and is realized for standard dephasing and dissipative noise, where aquaternionicity substantially outperforms imaginarity and parity asymmetry.
\end{abstract}

\maketitle


In quantum resource theories, nonfree states can enable operations that are otherwise restricted by free operations \cite{ChitambarGour2019}. Magic-state quantum computation provides a paradigmatic example, although magic states are typically consumed in implementing non-Clifford operations \cite{BravyiKitaev2005}. Here we consider a stronger task: a nonfree state is required to be exactly replicable by free operations and, together with suitable free operations, to enable the exact simulation of arbitrary quantum instruments. Such a state provides a replenishable resource for universal quantum information processing{~\cite{Aberg2014,GarciaDiaz2018,LostaglioMuller2019,MarvianSpekkens2019,KuroiwaYamasaki2020,ZhangLiLuo2024,SonEtAl2026}}. This capability was formalized as \emph{alchemical resource theory} in the inverse-problem approach of Ref.~\cite{InverseDraft}. Remarkably, for qubits this inverse problem admits only two nontrivial theories, up to local changes of basis: imaginarity \cite{HickeyGour2018,Wu2021} and parity asymmetry \cite{GourSpekkens2008,MarvianSpekkens2014,InverseDraft}.

Exact simulation in these theories exhibits an all-or-nothing property {\cite{MarvianSpekkensWAY2012,NakayamaTakeuchiAkibue2026,HokkyoTajima2026,InverseDraft}}: any nonmaximal resource state can exactly simulate only free unitaries. Thus, exact simulation of nonfree dynamics requires the resource to remain in the alchemical class. This raises a broader question for \emph{qudits}: \emph{how broadly can replicable resources for exact universal simulation populate the quantum state space?} We address this from two complementary perspectives: which pure states can be alchemical, and how much mixing around a pure alchemical resource \(g\) can preserve alchemicality. For the latter, let \(\mathcal M_g\) be the set of pure states \(h\) such that every nontrivial mixture \((1-t)g+th\), \(0<t<1\), remains alchemical. For qudits with local dimension \(d\), we establish the universal bound
\begin{equation}
 \dim\mathcal M_g\le 2d-4.
 \label{eq:intro-bound}
\end{equation}
{For even \(d\ge4\), neither imaginarity nor parity asymmetry attains this bound.}

A theory attaining both extrema emerges naturally by focusing on the Jordan-algebraic structure underlying quantum state spaces. Imaginarity may be viewed as an asymmetry resource theory within complex quantum mechanics whose set of free states realizes real quantum theory, \(H_d(\mathbb R)\subset H_d(\mathbb C)\) \cite{HickeyGour2018}. In the Euclidean Jordan-algebraic description of quantum state spaces, alongside the real and complex Hermitian matrix algebras underlying real and complex quantum theory, one also finds the quaternionic Hermitian algebra \(H_n(\mathbb H)\) {\cite{Jordan1934,BarnumWilce2014,BarnumGraydonWilce2020,SonodaAraiHayashi2025,Finkelstein1962,Adler1995}}. This suggests replacing the real structure underlying imaginarity by its quaternionic counterpart. The algebra \(H_n(\mathbb H)\) has a standard representation on a \(2n\)-dimensional complex Hilbert space as the operators invariant under an antiunitary \(\Theta\) satisfying \(\Theta^2=-I\) {\cite{Dyson1962,Baez2012,Graydon2013,PinedaMontoyaRuaMunoz2026}}.
{Antiunitary symmetries have previously been treated resource-theoretically in time-reversal frameness and special antiunitary asymmetry \cite{GourSandersTurner2009,BuFanPatiWu2016}, and the spinorial time-reversal setting was already noted to make every pure state resourceful \cite{GourSandersTurner2009}; more recently, time-reversal symmetry breaking has also been formulated as a dynamical resource theory \cite{MiyazakiKuroiwaMurao2026}.}
This motivates \emph{aquaternionicity}, the asymmetry resource theory whose free states satisfy \(\Theta\rho\Theta^{-1}=\rho\), with the corresponding \(\Theta\)-covariant free operations. For \(m\) elementary systems, \(\Theta_m=\Theta^{\otimes m}\) and \(\Theta_m^2=(-1)^mI\), yielding alternating quaternionic and real structures for odd and even composites \cite{BarnumGraydonWilce2020}.

Aquaternionicity provides an extremal answer to the question above. Most strikingly, \emph{every pure state is a replicable resource for the exact simulation of arbitrary quantum operations}. {As shown below, combining the support criterion for alchemicality with Kramers orthogonality \cite{FruchartYaoVitelli2023}, \(\langle\Theta\psi|\psi\rangle=0\), valid for every \(\ket{\psi}\) when \(\Theta^2=-I\), establishes this statement.} Moreover, aquaternionicity attains equality in Eq.~\eqref{eq:intro-bound} for every pure resource. Alchemical resources are not restricted to pure states: any state whose support is orthogonal to its \(\Theta\)-image retains the same exact simulation capability. Aquaternionicity also inherits operational capabilities established for imaginarity, including resource distillation, assisted state conversion, and channel discrimination {\cite{Wu2021,WuImaginarityPRA2021,Wu2024}}. Conversely, up to unitary equivalence, the joint requirements that every pure state be alchemical and that Eq.~\eqref{eq:intro-bound} be saturated for every pure resource characterize aquaternionicity.
{The inverse-problem landscape and the corresponding geometry of pure alchemical states are summarized schematically in Fig.~\ref{fig:inverse-landscape}, while Table~\ref{tab:resource-theory-comparison} compares the three theories considered here.}

\begin{figure}[t]
\centering
\makebox[\columnwidth][c]{
\hspace*{-0.129cm}
\begin{tikzpicture}[
scale=0.97,
transform shape,
>=Latex,
font=\normalsize,
line cap=round,
line join=round,
box/.style={draw=black,rounded corners=2pt,fill=gray!6,minimum width=3.65cm,minimum height=0.95cm,align=center},
theory/.style={draw=black!50,rounded corners=2pt,fill=white,minimum width=2.40cm,minimum height=1.10cm,align=center},
aqtheory/.style={draw=aqorange!85,rounded corners=2pt,fill=aqorange!7,minimum width=3.30cm,minimum height=1.10cm,align=center},
smallnote/.style={font=\footnotesize,align=center}
]

\node[box,minimum width=5.77cm,minimum height=1.00cm] (inverse) at (3.80,5.45) {Alchemical resource theory\\[0.5mm]\footnotesize replication $+$ exact universal simulation};

\node[box,draw=black!50] (qubit) at (1.35,4.00) {local dimension $d=2$\\[0.5mm]\footnotesize multi-qubit system};
\node[box,minimum width=4.40cm] (qudit) at (6.00,4.00) {even local dimension $d\ge4$\\[0.5mm]\footnotesize multi-\textbf{qudit} system};

\draw[->,thick,black] (inverse.south -| qubit.north) -- (qubit.north);
\draw[->,thick,black] (inverse.south -| qudit.north) -- (qudit.north);

\node[theory] (im) at (0.75,2.20) {\textbf{Imaginarity}\\[0.5mm]\footnotesize $K^2=I$};
\node[theory] (par) at (3.45,2.20) {\textbf{Parity}\\[0.5mm]\footnotesize $Z^2=I$};
\node[aqtheory] (aq) at (6.60,2.20) {\textbf{\color{aqorange}Aquaternionicity}\\[0.5mm]\footnotesize $\Theta^2=-I$};

\draw[->,black!50] (qubit.south) -- (im.north);
\draw[->,black!50] (qubit.south) -- (par.north);

\draw[->,black!50] (qudit.south) -- (im.north east);
\draw[->,black!50] (qudit.south) -- (par.north east);
\draw[->,thick,aqorange!80!black] (qudit.south) -- ([xshift=-0.22cm]aq.north);

\node[smallnote,text=red!80!black,font=\footnotesize\bfseries] at (7.3,2.95) {New theory};

\draw[gray!60,fill=gray!4] (0.75,0.95) circle (0.50);
\draw[gray!35] (0.25,0.95) arc (180:360:0.50 and 0.14);
\draw[gray!35] (0.25,0.95) arc (180:0:0.50 and 0.14);
\fill[imblue] (0.40,0.85) circle (1.8pt);
\fill[imblue] (1.10,1.05) circle (1.8pt);
\node[smallnote] at (0.7,0.295) {fine-tuned subset};
\node[smallnote] at (0.74,-0.03) {$\psi^T\psi=0$};

\draw[gray!60,fill=gray!4] (3.45,0.95) circle (0.50);
\draw[parblue,line width=1.0pt] (2.95,0.95) arc (180:360:0.50 and 0.14);
\draw[parblue,line width=1.0pt] (2.95,0.95) arc (180:0:0.50 and 0.14);
\node[smallnote] at (3.65,0.10) {codimension-1 subset\\${\langle\psi|Z|\psi\rangle=0}$};

\draw[aqorange!85,fill=aqorange!14,line width=0.9pt] (6.60,0.95) circle (0.50);
\draw[aqorange!70,line width=0.7pt] (6.10,0.95) arc (180:360:0.50 and 0.14);
\draw[aqorange!70,line width=0.7pt] (6.10,0.95) arc (180:0:0.50 and 0.14);
\draw[aqorange!70,line width=0.7pt] (6.60,0.45) arc (-90:90:0.14 and 0.50);
\node[smallnote,text=aqorange!90!black] at (6.60,0.10) {\textbf{all pure states}\\$\langle\Theta\psi|\psi\rangle=0$};

\end{tikzpicture}
}
\vspace{-0.24cm}
\caption{{Selected landscape for alchemical resource theories.} For multi-qubit systems, the inverse problem yields imaginarity and parity asymmetry. For even local dimension $d\ge4$, aquaternionicity emerges as an additional quaternionic solution, for which every pure state is alchemical.}
\label{fig:inverse-landscape}
\end{figure}

{The same support geometry also enables \emph{recalibration-free exact simulation} under structured noise: if all noisy realizations remain within a common isotropic subspace, a single free processor works exactly throughout the noise family. We demonstrate this mechanism for basis dephasing \cite{Helm2011,Yao2017} and pairwise multilevel amplitude damping \cite{LeungNielsenChuangYamamoto1997,ChessaGiovannetti2021}, where aquaternionicity admits larger noise-invariant alchemical supports than imaginarity and parity asymmetry.}

\begin{table*}[t]
\caption{
Comparison of imaginarity (Im), parity asymmetry (Par), and
aquaternionicity (Aq) for even local dimension \(d\ge4\).
Here \(K\) denotes complex conjugation in a fixed reference basis,
\(Z=I_{d/2}\oplus(-I_{d/2})\), and
\(\Theta\) is antiunitary with \(\Theta^2=-I\).
{In the \(\dim\mathcal M_g\) row, \(g\) denotes a pure alchemical state.}
}
\label{tab:resource-theory-comparison}

\centering
\footnotesize
\setlength{\tabcolsep}{3pt}
\renewcommand{\arraystretch}{1.18}

\begin{ruledtabular}
{
\begin{tabular}{c@{\hspace{0.65cm}}c@{\hspace{0.65cm}}c@{\hspace{0.65cm}}c}
Property
& Imaginarity
& \shortstack{Parity\\asymmetry}
& Aquaternionicity
\\
\hline

Free condition
& \(K\rho K^{-1}=\rho\)
& \(Z\rho Z=\rho\)
& \(\Theta\rho\Theta^{-1}=\rho\)
\\[1mm]

Pure alchemical
& \(\psi^{T}\psi=0\)
& \(\langle\psi|Z|\psi\rangle=0\)
& All pure states
\\[1mm]

\(\dim\mathcal M_g\)
& \(2d-6\)
& \(2d-5\)
& \(2d-4\)
\\[1mm]

Recalibration-free
& \shortstack{Suitably aligned\\supports}
& \shortstack{Suitably aligned\\supports}
& \shortstack{Noise-adapted pure state\\+ maximal mixing}
\\

\end{tabular}
}
\end{ruledtabular}
\end{table*}

\paragraph{Replicable resources and universal mixing bound.---}
\label{sec:framework}

Let the elementary system be a \(d\)-level qudit with Hilbert space
\(\mathcal H=\mathbb C^d\), and let
\(\mathcal H_m=\mathcal H^{\otimes m}\).
We denote by \(\mathcal D_m\) and \(\mathcal P_m\) the density operators and pure states on \(\mathcal H_m\), respectively.
{Unless explicitly specified as complex or affine, all dimensions are real.}
A resource theory is specified by nonempty closed convex sets of free states
\(\mathcal F_m\subseteq\mathcal D_m\), with
\(\mathcal F_1\neq\mathcal D_1\), satisfying the standard consistency conditions under tensor products, marginalization, and permutations.
{We call such a theory \emph{nontrivial} when \(\mathcal F_1\) is not a singleton; equivalently, its single-system free-state space is neither all of \(\mathcal D_1\) nor a one-point set.}
{Free operations are defined as the free-ancilla closure of completely free Kraus (CFK) instruments.}
Precise definitions are given in the Supplemental Material.

A nonfree state \(\omega\) is \emph{alchemical} if it is exactly replicable by free operations and enables the exact simulation of arbitrary finite-system quantum instruments using finitely many copies.
Namely, there exists a free channel
\(\Lambda_\omega^{\rm rep}\) such that
\[
\Lambda_\omega^{\rm rep}(\omega)=\omega\otimes\omega,
\]
and, for every finite system and every target instrument \(\{\Phi_x\}_x\), a finite \(k\) and a free instrument \(\{\Lambda_x\}_x\) such that
\[
\Lambda_x\!\left(\rho\otimes\omega^{\otimes k}\right)
=
\Phi_x(\rho)
\qquad
(\forall\rho,\ \forall x).
\]
Let \(\mathcal A\) denote the set of alchemical states.

We ask how broadly such resources can populate the quantum state space.
The two previously known examples of alchemical resource theories are imaginarity and parity asymmetry.
{For parity asymmetry in even dimension, we take \(Z:=I_{d/2}\oplus(-I_{d/2})\).}
In these theories, a pure state
\(g=\ket{\psi}\!\bra{\psi}\) is alchemical only when, respectively,
\[
\psi^T\psi=0,
\qquad\text{or}\qquad
\langle\psi|Z|\psi\rangle=0,
\]
Thus, only restricted subsets of the pure-state space are alchemical.
This raises the first question: can \emph{every} pure state be alchemical?

{The second question concerns mixing around a pure alchemical resource. For a pure \(g\in\mathcal A\), we define the
\emph{admissible pure-mixing set}}
\[
\mathcal M_g
:=
\left\{
h\in\mathcal P_1:
(1-t)g+th\in\mathcal A
\ \text{for every }0<t<1
\right\}.
\]
Our second question is how large \(\mathcal M_g\) can be.
We find the following universal bound.

\begin{theorem}[Universal mixing bound]
\label{thm:universal-mixing-bound}
For every nontrivial resource theory satisfying the above conditions and every pure alchemical state \(g\),
\begin{equation}
{\dim\mathcal M_g\le 2d-4.}
\label{eq:universal-mixing-bound}
\end{equation}
\end{theorem}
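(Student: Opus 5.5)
The plan is to reduce Theorem~\ref{thm:universal-mixing-bound} to a structural lemma about alchemical states, and then to a dimension count in the pure-state manifold \(\mathcal P_1\cong\mathbb{CP}^{d-1}\), which has real dimension \(2d-2\). The bound \(2d-4\) has codimension two. That is exactly the dimension of a projective hyperplane \(\{h:\langle\phi|h|\phi\rangle=0\}\cong\mathbb{CP}^{d-2}\), and also of any set cut out by one nontrivial complex condition on \(h\). So I aim to show that \(\mathcal M_g\) lies in the zero set of such a condition. For \(h\neq g\), the mixture \(\omega_t=(1-t)g+th\) has support exactly the two-dimensional span \(V_h=\mathrm{span}\{\psi,\chi\}\), where \(g=\proj{\psi}\) and \(h=\proj{\chi}\). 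Its face in \(\mathcal D_1\) is the full state space of \(V_h\), independent of \(t\). Hence membership of \(h\) in \(\mathcal M_g\) depends only on the line \(V_h\ni\psi\), and these lines form a \(\mathbb{CP}^{d-2}\) (after projecting out \(\psi\)).

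\emph{Step 1 (face lemma).} I will first show that an alchemical state \(\omega\) admits no decomposition \(\omega=pf+(1-p)\sigma\) with \(f\in\mathcal F_1\) and \(p>0\). Equivalently, no free state is supported inside \(\mathrm{supp}\,\omega\). The route is the all-or-nothing mechanism cited in the introduction. If \(\omega\ge pf\), then every free processing of \(\rho\otimes\omega^{\otimes k}\) contains, with weight \(p^k\), the branch \(\rho\otimes f^{\otimes k}\). On that branch the free instrument acts as a free operation on \(\rho\) alone, since it is composed with free ancillas. Exactness of the simulation for a nonfree unitary target \(U\) would then force a free operation to agree with \(U\cdot U^\dagger\) on a positive-weight component. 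Linearity and extremality of unitary channels would make \(U\) itself free, which is a contradiction. Making this extremality argument rigorous for instruments and for the free-ancilla closure of CFK instruments is the main technical point. I would first prove it for a single nonfree unitary channel, since unitary channels are extreme points of the set of channels.

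\emph{Step 2 (a free direction).} Using nontriviality (\(\mathcal F_1\) is not a singleton and \(\mathcal F_1\neq\mathcal D_1\)), together with closure of \(\mathcal F_1\) and tensor/permutation consistency, I will extract a free state whose support meets the two-dimensional subspaces \(V_h\) generically. The cleanest outcome would be a free state of rank at least \(d-1\). Its support \(S\) then intersects every \(V_h\) nontrivially. The intersection is all of \(V_h\) unless one complex condition on \(\chi\) (modulo \(\psi\)) holds, and such a condition has real codimension two. I then combine this with the face lemma. If \(V_h\) contains a vector \(v\) and there is a free state \(f'\) supported on \(\mathrm{span}\{v\}\) or inside \(V_h\), then \(h\notin\mathcal M_g\). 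I intend to use convexity of \(\mathcal F_1\): take the segment between two distinct free states, and use \(g\notin\mathcal F_1\), to produce such free states inside \(V_h\) for all \(h\) outside a codimension-two locus.

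\emph{Main obstacle.} Step 2 is the main obstacle. A general theory need not contain free pure states, nor free states of high rank, so I cannot directly guarantee that some free state sits inside \(V_h\). My first attempt will be to pass to \(\mathcal F_m\) for \(m\ge2\) and exploit replicability, \(\omega\mapsto\omega\otimes\omega\). If \(\omega_t\) is alchemical, so is \(\omega_t^{\otimes m}\), which strengthens the face lemma to "no free state on \(V_h^{\otimes m}\)". Partial traces of free multipartite states are free, which gives more room to locate free components. If this fails, I would fall back on a perturbative argument in \(t\). Exact replicability of \(\omega_t\) for all \(t\in(0,1)\) gives polynomial identities in \(t\) for a (possibly \(t\)-dependent) free channel. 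Compactness of the free channel set lets me take \(t\to0\) and \(t\to1\) and derive a single complex linear constraint relating \(\chi\) to the data of \(g\). That constraint would again confine \(\mathcal M_g\) to a set of real dimension at most \(2d-4\).
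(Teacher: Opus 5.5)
Your proposal has a genuine gap, and it sits in the framework of Steps 1--2, not only in the details of Step~2. Step~1 itself is true, and quicker than you suggest. A free channel sending $\omega$ to a pure nonfree state sends every state supported in $\operatorname{supp}\omega$ to that same pure state. This includes any free $f$ with $pf\le\omega$, which is impossible.

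The problem is that this face criterion is far too weak to give codimension two, even in the $m$-copy form you propose. Test it on aquaternionicity, where the bound is attained.
\begin{itemize}
\item A free state satisfies $\Theta_m f\Theta_m^{-1}=f$, so its support is $\Theta_m$-invariant.
\item Hence a free state supported in $V_h^{\otimes m}$ must live in $V_h^{\otimes m}\cap(\Theta V_h)^{\otimes m}=(V_h\cap\Theta V_h)^{\otimes m}$.
\item The subspace $V_h\cap\Theta V_h$ is $\Theta$-invariant, so it has even complex dimension. It is therefore nonzero only when $V_h=\operatorname{span}\{\psi,\Theta\psi\}$.
\end{itemize}
So your criterion excludes only a $\mathbb{CP}^1$ worth of $h$'s and leaves a set of full dimension $2d-2$. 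Meanwhile, every $h=\proj{\chi}$ with $\langle\Theta\psi|\chi\rangle\neq0$ gives a nonalchemical mixture whose support contains no free state, on any number of copies. Any argument whose only necessary condition is ``no free state in the support'' therefore cannot beat $2d-2$.

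Separately, Step~2 aims in the wrong direction. A free state whose support contains or meets $V_h$ does not produce a free state supported inside $V_h$, which is what your face lemma needs. The perturbative fallback does not identify any concrete constraint.

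The missing idea is a quantitative version of your face lemma, which is what the paper uses.
\begin{itemize}
\item Let $\rho=(1-t)g+th$ be alchemical. State universality gives a free channel $\Lambda$ with $\Lambda(\rho)=g$, and the effect $E=\Lambda^\dagger(g)$ satisfies $E\ge P_R$ with $R=\operatorname{supp}\rho$.
\item A free state inside $R$ would give overlap $1$; that is your face lemma. Instead, apply $\Lambda$ to a maximizer $\sigma_*$ of $\operatorname{Tr}(g\sigma)$ over $\mathcal F_1$, with optimal value $s_g<1$. Since $\Lambda(\sigma_*)$ is free,
\[
s_g\ge\operatorname{Tr}(E\sigma_*)\ge\operatorname{Tr}(P_R\sigma_*)=s_g+\operatorname{Tr}(P_U\sigma_*),\qquad U=R\cap\psi^\perp ,
\]
so $U\subseteq\ker\sigma_*$.
\item Thus the $\psi^\perp$-component of $\chi$ must lie in the kernel of the compression $C_g$ of $\sigma_*$ to $\psi^\perp$. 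This compression is nonzero because $\operatorname{Tr}C_g=1-s_g>0$.
\end{itemize}
This imposes at least one complex-linear condition, already at a single fixed $t$. It confines $\mathcal M_g$ to pure states on a subspace of complex dimension at most $d-1$, hence real dimension at most $2d-4$. In aquaternionicity one has $\sigma_*=\tfrac12(g+\Theta g\Theta^{-1})$, and the condition is exactly $\langle\Theta\psi|\chi\rangle=0$, which is precisely what your criterion misses.
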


{For even \(d\ge4\), imaginarity and parity asymmetry do not attain the bound:}
\[
\dim\mathcal M_g^{\rm Im}=2d-6,
\qquad
\dim\mathcal M_g^{\rm Par}=2d-5.
\]
{Although the gaps are only constant in \(d\), their relative significance is largest at small \(d\).}
We next construct a resource theory that satisfies both properties: every pure state is alchemical, and the universal bound is saturated for every pure state.


\paragraph{Aquaternionicity and maximal state-space abundance.---}
\label{sec:aquaternionicity}

{Let \(d=2n\ge4\)} and fix an antiunitary operator
\(\Theta:\mathcal H\to\mathcal H\) satisfying
\begin{equation}
\Theta^2=-I .
\label{eq:kramers-antiunitary}
\end{equation}
The Hermitian operators invariant under
\(X\mapsto\Theta X\Theta^{-1}\)
form the standard complex representation of the quaternionic Hermitian algebra
\(H_n(\mathbb H)\).
Motivated by this quaternionic structure, we define
\emph{aquaternionicity} as the asymmetry resource theory whose single-system free states are
\begin{equation}
\mathcal F_1^{\rm Aq}
:=
\left\{
\rho\in\mathcal D_1:
\Theta\rho\Theta^{-1}=\rho
\right\}.
\label{eq:aq-free-states-single}
\end{equation}
{
\noindent\emph{Remark (qubit degeneration).---}
The case \(d=2\) corresponds to \(n=1\), for which the quaternionic Hermitian algebra is simply \(H_1(\mathbb H)=\mathbb R\). In its standard complex representation, the \(\Theta\)-invariant Hermitian operators are therefore only scalar multiples of \(I_2\), so normalization leaves the single free state \(\mathcal F_1^{\rm Aq}=\{I/2\}\). Thus the \(d=2\) construction is a one-point free-state theory and hence trivial in the sense used here. The first nontrivial free-state space of this quaternionic form arises at \(n=2\), corresponding to \(d=4\); accordingly, aquaternionicity is defined here for \(d=2n\ge4\).
}
For \(m\) elementary systems, we take the canonical tensor-product extension
\[
\Theta_m:=\Theta^{\otimes m},
\]
with free states
\begin{equation}
\mathcal F_m^{\rm Aq}
:=
\left\{
\rho\in\mathcal D_m:
\Theta_m\rho\Theta_m^{-1}=\rho
\right\}.
\label{eq:aq-free-states}
\end{equation}
{We define the free instruments of aquaternionicity as the branchwise \(\Theta\)-covariant instruments. As shown in the Supplemental Material, this covariance-defined class coincides with the free-ancilla closure of the CFK class used in the general framework above.}
An explicit block-matrix representation of Eq.~\eqref{eq:aq-free-states-single} is given in the End Matter.

The key geometric consequence of
\(\Theta^2=-I\)
is Kramers orthogonality
{\cite{FruchartYaoVitelli2023}},
\begin{equation}
\langle\Theta\psi|\psi\rangle=0
\qquad
(\forall\,\ket{\psi}\in\mathcal H).
\label{eq:kramers-orthogonality}
\end{equation}
We call a subspace \(S\subseteq\mathcal H\) \emph{isotropic} when it is orthogonal to \(\Theta S\), i.e.,
\[
S\perp\Theta S .
\]
In aquaternionicity, alchemicality is characterized entirely by this support geometry:
\begin{equation}
{
\omega\in\mathcal A^{\rm Aq}
\quad\Longleftrightarrow\quad
\operatorname{supp}\omega
\perp
\Theta\,\operatorname{supp}\omega .
}
\label{eq:aq-alchemical-support}
\end{equation}
In particular, Eq.~\eqref{eq:kramers-orthogonality} makes every one-dimensional support isotropic.

\begin{theorem}[Extremality of aquaternionicity]
\label{thm:aq-extremality}
For aquaternionicity on
\(\mathcal H=\mathbb C^d\), {\(d=2n\ge4\)}:

\begin{enumerate}
\item
Every pure state is alchemical.

\item
For every pure state
\(g=\ket{\psi}\!\bra{\psi}\),
\begin{equation}
\mathcal M_g^{\rm Aq}
=
\left\{
\ket{\phi}\!\bra{\phi}:
\langle\Theta\psi|\phi\rangle=0
\right\},
\label{eq:aq-admissible-set}
\end{equation}
and hence
\begin{equation}
{
\dim\mathcal M_g^{\rm Aq}=2d-4
}
\label{eq:aq-uniform-saturation}
\end{equation}
for every pure \(g\).
\end{enumerate}
\end{theorem}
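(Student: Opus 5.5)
The plan is to take the support characterization Eq.~\eqref{eq:aq-alchemical-support} as given, since it is stated just before the theorem, and reduce both items to linear algebra involving the antiunitary \(\Theta\).

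For item~1, let \(g=\ket{\psi}\!\bra{\psi}\). Its support is \(\mathrm{span}\{\ket{\psi}\}\), and Kramers orthogonality Eq.~\eqref{eq:kramers-orthogonality} gives \(\langle\Theta\psi|\psi\rangle=0\). Antilinearity of \(\Theta\) then gives \(\Theta(c\psi)=\bar c\,\Theta\psi\perp\psi\) for every scalar \(c\), so the support is isotropic. We also need \(g\) to be nonfree. If it were free, then \(\Theta g\Theta^{-1}=\ket{\Theta\psi}\!\bra{\Theta\psi}\) would equal \(g\), which forces \(\Theta\psi\propto\psi\). This contradicts \(\Theta\psi\perp\psi\) together with \(\|\Theta\psi\|=\|\psi\|\neq0\). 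Hence every pure state lies in \(\mathcal A^{\rm Aq}\).

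For item~2, fix a unit vector \(\phi\) with \(h\neq g\); the case \(h=g\) is trivially included. For \(0<t<1\), the mixture \((1-t)g+th\) has support \(S=\mathrm{span}\{\psi,\phi\}\), which is independent of \(t\). So membership in \(\mathcal M_g\) is equivalent to \(S\perp\Theta S\). We still need nonfreeness here, and it is automatic: a free state has support invariant under \(\Theta\), and \(S\perp\Theta S\) with \(\dim\Theta S=\dim S\) excludes \(\Theta S=S\).

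Since \(\Theta S\) is spanned by \(\Theta\psi\) and \(\Theta\phi\), the condition \(S\perp\Theta S\) amounts to four pairings:
\begin{itemize}
\item \(\langle\Theta\psi|\psi\rangle=0\) and \(\langle\Theta\phi|\phi\rangle=0\), both of which hold by Kramers orthogonality;
\item \(\langle\Theta\psi|\phi\rangle=0\);
\item \(\langle\Theta\phi|\psi\rangle=0\).
\end{itemize}
The key step is showing that the last two conditions are equivalent. Using the antiunitary identity \(\langle\Theta x|\Theta y\rangle=\overline{\langle x|y\rangle}\) together with \(\Theta^2=-I\), we get
\[
\langle\Theta\phi|\psi\rangle=-\langle\Theta\phi|\Theta^2\psi\rangle=-\overline{\langle\phi|\Theta\psi\rangle}=-\langle\Theta\psi|\phi\rangle.
\]
So the bilinear form \((x,y)\mapsto\langle\Theta x|y\rangle\) is antisymmetric, and only one condition survives. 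This yields Eq.~\eqref{eq:aq-admissible-set}. I expect this symplectic antisymmetry to be the main point needing care, in particular the handling of antilinearity and of the conjugations.

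For the dimension count, the set of admissible \(\phi\) is the complex orthogonal complement of \(\Theta\psi\), which has complex dimension \(d-1\). Its projective space \(\mathbb{CP}^{d-2}\) has real dimension \(2(d-2)=2d-4\), which gives Eq.~\eqref{eq:aq-uniform-saturation}.
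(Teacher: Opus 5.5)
Your proposal is correct and follows essentially the same route as the paper. Both items are reduced to the support criterion: Kramers orthogonality makes every one-dimensional support isotropic and leaves the single linear condition \(\langle\Theta\psi|\phi\rangle=0\), whose solution set is \(\mathcal P((\Theta\psi)^\perp)\simeq\mathbb{CP}^{d-2}\). Your explicit computation \(\langle\Theta\phi|\psi\rangle=-\langle\Theta\psi|\phi\rangle\) spells out a step the paper leaves implicit in its appeal to the alternating form \(\Omega\). Your separate nonfreeness checks are redundant, since the support criterion already includes nonfreeness, but they are harmless.
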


Aquaternionicity also admits counterparts of the principal operational tasks established for imaginarity, including resource distillation, assisted state conversion, and resource-assisted channel discrimination
\cite{Wu2021,Wu2024};
details are given in the Supplemental Material.
The larger admissible mixing set has a simple geometric origin: the additional self-condition present in imaginarity and parity asymmetry becomes automatic in aquaternionicity by Kramers orthogonality.

We now ask whether the two extremal properties above merely hold for aquaternionicity or instead characterize its underlying structure.

\begin{theorem}[Characterization of aquaternionicity]
\label{thm:aq-characterization}
Consider a nontrivial resource theory {satisfying the framework assumptions introduced above}.
Suppose that

\begin{enumerate}
\item every pure state \(g\in\mathcal P_1\) is alchemical, and
\item the universal mixing bound is saturated for every pure state,
\begin{equation}
\dim\mathcal M_g=2d-4
\qquad
(\forall g\in\mathcal P_1).
\label{eq:uniform-saturation-assumption}
\end{equation}
\end{enumerate}

{Then \(d=2n\ge4\)} and \(\mathcal F_m=\mathcal F_m^{\rm Aq}\) for all \(m\ge1\), for an antiunitary \(\Theta\), unique up to an overall phase, satisfying \(\Theta^2=-I\).
\end{theorem}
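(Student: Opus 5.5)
The argument goes through the structure of alchemical theories inherited from the inverse-problem framework. The general classification there (Ref.~\cite{InverseDraft}) shows that a nontrivial alchemical theory is, up to local basis change, an asymmetry theory for a single order-two symmetry of the elementary system. Concretely, there is a map $\Sigma$ acting as $X\mapsto \Sigma X\Sigma^{-1}$, with $\Sigma$ either unitary or antiunitary and $\Sigma^2=\epsilon I$ for a phase $\epsilon$, such that
\[
\mathcal F_1=\{\rho:\Sigma\rho\Sigma^{-1}=\rho\},
\qquad
\mathcal F_m=\{\rho:\Sigma^{\otimes m}\rho\Sigma^{\otimes m\,-1}=\rho\}.
\]
The multi-system statement follows from the tensor-product and marginalization consistency conditions together with free-ancilla closure. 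I would take this reduction from the proof of Theorem~\ref{thm:universal-mixing-bound}, which must already isolate $\Sigma$ in order to bound $\mathcal M_g$. The same proof also supplies the support criterion: $\omega\in\mathcal A$ iff $\operatorname{supp}\omega\perp\Sigma\operatorname{supp}\omega$. This is the general form of Eq.~\eqref{eq:aq-alchemical-support}.

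\emph{Step 1: assumption~1 forces an antiunitary $\Sigma$ with $\Sigma^2=-I$.} By the support criterion, every pure state is alchemical iff $\langle\Sigma\psi|\psi\rangle=0$ for all $\psi$.
\begin{itemize}
\item If $\Sigma$ is unitary, this says the numerical range of $\Sigma$ is $\{0\}$, so $\Sigma=0$, which is impossible.
\item If $\Sigma$ is antiunitary, write $\Sigma=UK$. The condition becomes $\psi^T U^\dagger\psi=0$ for all $\psi$, so the bilinear form of $U^\dagger$ vanishes on the diagonal and $U^\dagger$ is antisymmetric, i.e.\ $U^T=-U$.
\end{itemize}
Then $\Sigma^2=U\bar U=-U U^\dagger=-I$. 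A unitary antisymmetric matrix exists only in even dimension, so $d=2n$. Nontriviality excludes $d=2$, because by the qubit-degeneration remark $\mathcal F_1$ is then the singleton $\{I/2\}$. Hence $d\ge4$.

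\emph{Step 2: uniqueness of $\Theta$ up to phase.} The free set determines the conjugation map $X\mapsto\Theta X\Theta^{-1}$ on $\mathcal F_1$. Since $\mathcal F_1$ spans the $\Theta$-invariant Hermitian operators, this map is fixed on that real subspace. I would recover it on all of $H_d(\mathbb C)$ as follows. The invariant subalgebra $H_n(\mathbb H)$ generates the full matrix algebra $M_d(\mathbb C)$, because its complex span is all of $M_d(\mathbb C)$ for quaternionic representations. Using this, the set $\mathcal F_1$ determines $\Theta$ up to composition with a unitary $V$ that commutes with $H_n(\mathbb H)$. Schur's lemma then forces $V$ to be a phase. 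Phases survive as the stated freedom, since $e^{i\alpha}\Theta$ still squares to $-I$ and induces the same conjugation. If the classification instead returned the same $\mathcal F_1$ from a second candidate $\Theta'$, then $\Theta'\Theta^{-1}$ is a unitary in the commutant, which is again a phase.

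\emph{Step 3: role of assumption~2, and the main obstacle.} Given the reduction, assumption~2 is then redundant. Theorem~\ref{thm:aq-extremality} computes $\dim\mathcal M_g=2d-4$ for the theory already obtained, so it serves only as a consistency check. The real obstacle is the reduction itself. If Ref.~\cite{InverseDraft} only classifies qubits, I cannot import the single-symmetry form for general $d$, and I must derive it inside the proof of the universal bound. There, assumption~2 is genuinely needed. The planned route is:
\begin{enumerate}
\item Show that $\mathcal M_g$ is cut out by real-linear conditions on $\phi$ of the form $\langle\chi_j|\phi\rangle=0$.
\item Show that attaining $2d-4$ requires exactly one complex condition, which defines a vector $\chi_g$.
\item Prove that the map $\psi\mapsto\chi_g$ is antilinear and isometric, using the fact that it is covariant under the free unitaries.
\item Call this map $\Theta$. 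Kramers orthogonality $\langle\chi_g|\psi\rangle=0$ from assumption~1 then gives $\Theta^2=-I$ exactly as in Step~1.
\end{enumerate}
I would attack item~3 first, using continuity of $g\mapsto\mathcal M_g$ together with Wigner's theorem.
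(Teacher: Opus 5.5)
There is a genuine gap at the first step: the single-symmetry reduction you import is not available for general $d$. The inverse-problem classification exists only for qubits. The proof of the universal mixing bound isolates no symmetry either; it is deliberately symmetry-free. It uses only an optimizer $\sigma_*$ of $\operatorname{Tr}(g\sigma)$ over $\mathcal F_1$ and the effect inequality $\Lambda^\dagger(g)\ge P_{\operatorname{supp}\rho}$ for a free channel with $\Lambda(\rho)=g$. These give only the containment $\mathcal M_{g,t}\subseteq\mathcal P(\mathbb C\psi\oplus\ker C_g)$. Likewise, the support criterion $\omega\in\mathcal A\Leftrightarrow\operatorname{supp}\omega\perp\Sigma\operatorname{supp}\omega$ is proved only for the three concrete theories.

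Your Steps 1 and 2 are therefore conditional on an unproven premise, and the claimed redundancy of assumption 2 is an artifact of that premise. Their algebra is fine. The $U^T=-U$, $\Theta^2=-I$ computation is exactly what one uses once an antiunitary is in hand. The commutant argument for uniqueness works if you replace ``complex span'' by ``generated algebra'', since the complex span of $H_n(\mathbb H)$ has complex dimension only $2n^2-n$.

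The composite claim also fails as stated, because (F2)--(F4) do not fix $\mathcal F_m$ from $\mathcal F_1$. Convex hulls of products of $\Theta$-invariant states form a consistent hierarchy with the same $\mathcal F_1$. Yet, with $\tau(X)=\Theta X\Theta^{-1}$ and pure $q,r$, the state $\tfrac12\bigl(q\otimes r+\tau(q)\otimes\tau(r)\bigr)\in\mathcal F_2^{\rm Aq}$ lies outside it. It has rank two, while any product of free states has rank at least four by Kramers degeneracy. The composite equality must come from the alchemical assumptions themselves.

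Your fallback plan points at the right object but lacks the mechanism. Items 1--2 are what saturation yields once two facts are in place, neither of which you set up:
\begin{enumerate}
\item By assumption 1 and state universality, all pure states are freely interconvertible. The effect inequality then makes $s_g=c$ independent of $g$ and gives $\sigma\le cI$ for every $\sigma\in\mathcal F_1$.
\item Saturation then forces $\operatorname{rank}C_g=1$ and a unique optimizer $\sigma_g^*=cg+(1-c)\kappa(g)$ with $\kappa(g)\perp g$.
\end{enumerate}
This $\kappa(g)$ is your $\chi_g$, and its orthogonality to $g$ comes from the optimizer structure, not directly from assumption 1.

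Item 3 is where the plan fails. Before the free cones are known you have no information about which unitaries are free, so covariance under them gives no leverage. Wigner's theorem needs preservation of $|\langle\psi|\phi\rangle|^2$, for which you offer no argument, and continuity of $g\mapsto\mathcal M_g$ is not established either.

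The paper instead linearizes $\kappa$ operationally, in this chain:
\begin{enumerate}
\item Uniqueness of the optimizer gives partner transport: $\Lambda(\kappa(g))=\kappa(h)$ for every free channel with $\Lambda(g)=h$. It also shows that replicators copy $\kappa(g)$.
\item Hence $c\,g_1\otimes\cdots\otimes g_m+(1-c)\,\kappa(g_1)\otimes\cdots\otimes\kappa(g_m)$ is free.
\item Run a free one-copy simulator of a POVM $\{E_x\}$ on $c\,h\otimes q\otimes r+(1-c)\,\kappa(h)\otimes\kappa(q)\otimes\kappa(r)$. Applying $\operatorname{Tr}(rW)\le c\operatorname{Tr}W$ to the reference output gives $\operatorname{Tr}[E'_x\kappa(h)]=\operatorname{Tr}(E_xh)$.
\item For an informationally complete POVM, this extends $\kappa$ to a linear order automorphism of the state space.
\item Kadison's theorem makes it a unitary or antiunitary conjugation. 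The relation $\kappa(g)\perp g$ leaves only an antiunitary $\Theta$ with $\Theta^2=-I$, via your Step 1 computation, so $\kappa=\tau$.
\end{enumerate}

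The same duality applied to $\sigma\otimes\bigl(c\,q\otimes r+(1-c)\,\tau(q)\otimes\tau(r)\bigr)$ shows that every free state is $\tau$-invariant. Applied to $\sigma_g^*$, this forces $c=\tfrac12$. Since each $\tfrac12(g+\tau(g))=\sigma_g^*$ is free, convexity then gives $\mathcal F_1=\mathcal F_1^{\rm Aq}$.

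For composites one repeats the duality with $\tfrac12 q^{\otimes m}\otimes r+\tfrac12\tau(q)^{\otimes m}\otimes\tau(r)$ to obtain $\mathcal F_m\subseteq\mathcal F_m^{\rm Aq}$. For the reverse inclusion, free channels with $\mathcal C_h(q)=h$ satisfy $\mathcal C_h(\tau(q))=\tau_m(h)$ by transport. This puts $\tfrac12(h+\tau_m(h))$ in $\mathcal F_m$ for every pure $h$.
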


Thus, the joint requirements that every pure state be alchemical and that the universal mixing bound be saturated determine the full aquaternionicity hierarchy. {Together with the CFK--covariance equivalence established in the Supplemental Material, this also fixes the associated free-instrument class to be the \(\Theta\)-covariant one.}


\paragraph{Recalibration-free exact simulation under noise.---}
\label{sec:noise}

We now turn the abundance of alchemical states in aquaternionicity into an operational advantage under noise.
Because every pure state is alchemical, the initial resource can be chosen solely for compatibility with a known noise structure, without any additional restriction on its direction.
More generally, Eq.~\eqref{eq:aq-alchemical-support} shows that every state supported on an isotropic subspace is alchemical.
{Crucially, this pointwise statement does not by itself imply uniformity: a priori, different resource states supported on the same subspace could require different simulation instruments. We prove the uniform statement formally in the Supplemental Material; the explicit controlled-Kraus processor realizing it is displayed in the End Matter.}

For a subspace \(S\subseteq\mathcal H\), let
\[
\mathcal D(S)
:=
\left\{
\omega\in\mathcal D_1:
\operatorname{supp}\omega\subseteq S
\right\}.
\]
If \(S\perp\Theta S\), then for every finite-system target instrument
\(\{\Phi_x:m\to m\}_x\), there exists a single free instrument
\(\{\Lambda_{x,S}\}_x\) satisfying
\begin{equation}
{
\Lambda_{x,S}(\rho\otimes\omega)
=
\Phi_x(\rho)\otimes\omega
}
\label{eq:uniform-support-simulation}
\end{equation}
for all
\(\rho\in\mathcal D_m\),
\(\omega\in\mathcal D(S)\), and \(x\).
The simulation map may depend on \(S\) and on the target instrument, but not on the particular resource state \(\omega\) within \(S\).
Moreover, the resource is returned unchanged and can therefore be reused.

This provides a simple mechanism for exact simulation under uncertain noise.
Consider a family of noise processes
\(\{\Gamma_\alpha\}_\alpha\)
whose structure is known while the realized parameter \(\alpha\) may vary.
If a pure resource \(g\) can be chosen such that
\begin{equation}
\operatorname{supp}\Gamma_\alpha(g)\subseteq S
\qquad
(\forall\alpha)
\label{eq:noise-family-support}
\end{equation}
for a common isotropic subspace \(S\), then the same instrument
\(\{\Lambda_{x,S}\}_x\) remains valid for every noisy realization.
The resource may become mixed or lose coherence within \(S\), yet neither tomography of the realized state nor retuning of the simulation map is required.
We refer to this as \emph{recalibration-free exact simulation}.

The mixing geometry introduced above is the simplest instance of this support principle.
For
\(g=\ket{\psi}\!\bra{\psi}\) and
\(h=\ket{\phi}\!\bra{\phi}\in\mathcal M_g^{\rm Aq}\),
the two-dimensional support
\(\operatorname{span}_{\mathbb C}\{\psi,\phi\}\)
is isotropic.
Hence all mixtures
\((1-t)g+th\)
can be used with the same exact simulation instrument.
Thus the saturation of \(\dim\mathcal M_g\) directly enlarges the set of mixing directions compatible with recalibration-free simulation.

\paragraph{Standard noise models.---}
We illustrate this mechanism with two standard noise structures.
{Let \(d=2n\ge4\)}, and choose an orthonormal basis
\(\{e_j,f_j\}_{j=1}^n\).
Let \(K\) be complex conjugation in this basis, i.e., \(Ke_j=e_j\) and \(Kf_j=f_j\), and choose the basis such that
\[
Ze_j=e_j,\quad Zf_j=-f_j,\qquad
\Theta e_j=f_j,\quad \Theta f_j=-e_j.
\]

For the three theories, a support on which every state is alchemical satisfies, respectively,
\[
S\perp KS,\qquad
S\perp ZS,\qquad
S\perp\Theta S.
\]

First consider basis dephasing,
\begin{equation}
\begin{aligned}
\Gamma_{\lambda}^{\rm deph}(X)
={}&
\lambda X
+(1-\lambda)\sum_{j=1}^{n}
\Bigl[
\proj{e_j}X\proj{e_j}\\
&\hspace{22mm}
+\proj{f_j}X\proj{f_j}
\Bigr].
\end{aligned}
\label{eq:basis-dephasing}
\end{equation}
with \(0\le\lambda<1\).
For aquaternionicity,
\[
E:=\operatorname{span}_{\mathbb C}\{e_1,\ldots,e_n\}
\]
is both noise invariant and isotropic.
Hence every state initially supported on \(E\) remains within
\(\mathcal D(E)\), even when coherence is lost and a pure state becomes mixed, and the same exact simulation instrument remains valid throughout the process.
In contrast, no nonzero coordinate subspace satisfies the corresponding imaginarity or parity-asymmetry condition.

As a dissipative example, consider pairwise multilevel amplitude damping
{\cite{LeungNielsenChuangYamamoto1997,ChessaGiovannetti2021}},
with
\begin{equation}
\begin{aligned}
A_0&=P_E+\sqrt{1-\gamma}\,P_F,\\
A_j&=\sqrt{\gamma}\,\ket{e_j}\!\bra{f_j},
\qquad j=1,\ldots,n .
\end{aligned}
\label{eq:pair-amplitude-damping}
\end{equation}
where
\(E=\operatorname{span}_{\mathbb C}\{e_j\}_{j=1}^n\),
\(F=\operatorname{span}_{\mathbb C}\{f_j\}_{j=1}^n\),
\(P_E\) and \(P_F\) are the orthogonal projectors onto \(E\) and \(F\), respectively,
and \(0<\gamma\le1\).
These Kraus operators describe the relaxation
\(f_j\rightarrow e_j\).
For aquaternionicity, \(E\) again provides an \(n\)-dimensional noise-invariant alchemical support.
For parity asymmetry no nonzero support has both properties, whereas for imaginarity the maximal dimension is
\(\lfloor n/2\rfloor\).
{Detailed derivations are given in the Supplemental Material; the corresponding noise-model calculations are summarized in the End Matter.}

The separation is already visible for \(d=4\).
Table~\ref{tab:d4-noise-comparison} gives the maximal complex dimension of a noise-invariant alchemical support on which one exact simulation instrument remains valid.

\begin{table}[t]
\caption{
Maximal complex dimension of a noise-invariant alchemical support
admitting a single exact simulation instrument for \(d=4\).
}
\label{tab:d4-noise-comparison}
\begin{ruledtabular}
\begin{tabular}{lccc}
Noise
& \(\mathrm{Im}\)
& \(\mathrm{Par}\)
& \(\mathrm{Aq}\)
\\
\hline
Basis dephasing
& \(0\)
& \(0\)
& \(2\)
\\
Pair amplitude damping
& \(1\)
& \(0\)
& \(2\)
\end{tabular}
\end{ruledtabular}
\end{table}

{These examples do not imply a universal ordering under arbitrary noise.}

\paragraph{Conclusion.---}

We introduced aquaternionicity, an alchemical resource theory associated with a Kramers antiunitary satisfying \(\Theta^2=-I\), and showed that it is extremal in the abundance of resources for exact universal simulation.  Every pure state is alchemical, and for every pure resource the admissible mixing set saturates the universal bound \(\dim\mathcal M_g=2d-4.\)
Conversely, these two extremal properties characterize the full composite aquaternionicity hierarchy.  The same geometry also yields recalibration-free exact simulation: when noisy realizations remain within a common isotropic support, a single free simulation map works exactly without identifying the realized resource state or retuning the protocol.

Several questions remain open.  First, it is important to identify concrete physical realizations of aquaternionicity and to determine how its resource-theoretic advantages survive realistic imperfections.  Kramers pairs are themselves physical objects, arising naturally in systems with antiunitary symmetries satisfying \(\Theta^2=-I\){~\cite{HasanKane2010}}, suggesting that the underlying structure need not be merely formal; the open problem is to turn this structure into an experimentally controllable resource theory with suitable state preparation, free operations, and noise models.  {Second, it remains unknown whether the extremal degree of recalibration robustness identified here is specific to the quaternionic/Kramers structure.}  More generally, one may ask whether other alchemical resource theories exist whose resource states occupy comparably large regions of state space and support exact simulation uniformly over nontrivial noisy families.  Classifying such theories would extend the inverse problem from identifying replicable universal resources to identifying resource structures that are intrinsically robust to uncertainty in their physical realization.

\paragraph{Acknowledgments.---}
The authors used OpenAI's ChatGPT (GPT-5.5, GPT-5.6, and GPT-6) to assist with discussions, verification of mathematical proofs, and drafting and revising the manuscript. The concept of aquaternionicity and its underlying scientific idea were discovered by the authors without AI assistance. The AI was used under detailed and iterative instructions specifying the mathematical arguments, logical structure, and intended presentation. All AI-assisted outputs, including mathematical arguments and manuscript text, were thoroughly and independently checked, revised, and approved by the authors; the final proofs, scientific claims, and presentation were determined by the authors.
H.A.~was supported by JSPS KAKENHI Grant Nos.~25KJ0043, 26K17031, and 26K21941.


\clearpage
\onecolumngrid
\section*{End Matter}

\paragraph{Explicit quaternionic realization.---}
The single-system structure of aquaternionicity can be represented explicitly by choosing
\(\Theta=J_nK\), where
\(J_n=\left(\begin{smallmatrix}0&-I_n\\ I_n&0\end{smallmatrix}\right)\)
and \(K\) denotes complex conjugation.
Then \(\Theta^2=-I\), and a Hermitian operator is invariant under
\(X\mapsto\Theta X\Theta^{-1}\) precisely when
\[
X=
\begin{pmatrix}
A&B\\
-\overline B&\overline A
\end{pmatrix},
\qquad
A=A^\dagger,\quad B=-B^T .
\]
This is the standard complex representation of the quaternionic Hermitian algebra
\(H_n(\mathbb H)\) \cite{Baez2012}.
The bilinear form
\(\Omega(x,y):=\langle\Theta x|y\rangle\)
is alternating, so \(\Omega(x,x)=0\) for every \(x\).
Consequently, an isotropic subspace \(S\perp\Theta S\) has complex dimension at most \(n=d/2\).
If equality holds, then
\(\mathcal H=S\oplus\Theta S\).

For \(d=4\) (\(n=2\)), choose the ordered basis
\(\{e_1,e_2,f_1,f_2\}\) such that
\(\Theta e_j=f_j\) and
\(\Theta f_j=-e_j\).
Then
\[
J_2=
\begin{pmatrix}
0&0&-1&0\\
0&0&0&-1\\
1&0&0&0\\
0&1&0&0
\end{pmatrix}.
\]
Writing
\(A=\left(\begin{smallmatrix}a&z\\ \overline z&b\end{smallmatrix}\right)\)
and
\(B=\left(\begin{smallmatrix}0&w\\-w&0\end{smallmatrix}\right)\),
an aquaternionicity-free state has the explicit form
\[
\rho=
\begin{pmatrix}
a&z&0&w\\
\overline z&b&-w&0\\
0&-\overline w&a&\overline z\\
\overline w&0&z&b
\end{pmatrix},
\]
where \(a,b\in\mathbb R\), \(z,w\in\mathbb C\),
\(\rho\ge0\), and \(2(a+b)=1\).
Before normalization these matrices have six real parameters, reproducing
\(\dim_{\mathbb R}H_2(\mathbb H)=6\); the trace-one free-state set therefore has affine dimension \(5\).
{The excluded \(d=2\) one-point degeneration is discussed immediately after the definition of aquaternionicity in the main text.}

\paragraph{Mixing geometry.---}
The different values of \(\dim\mathcal M_g\) in imaginarity, parity asymmetry, and aquaternionicity follow directly from their support conditions.
Let \(d=2n\ge4\),
let \(g=\ket{\psi}\!\bra{\psi}\) be pure and alchemical, and let
\(h=\ket{\phi}\!\bra{\phi}\).
For \(0<t<1\), the support of
\((1-t)g+th\) is
\(\operatorname{span}_{\mathbb C}\{\psi,\phi\}\).

For imaginarity, alchemicality requires
\(\psi^T\phi=0\) and \(\phi^T\phi=0\), while
\(\psi^T\psi=0\) already holds for \(g\).
The first is one complex linear condition and the second one further complex condition, giving
\(\dim\mathcal M_g^{\rm Im}=2d-6\).

For parity asymmetry with
\(Z=I_n\oplus(-I_n)\), the corresponding conditions are
\(\langle\psi|Z|\phi\rangle=0\) and
\(\langle\phi|Z|\phi\rangle=0\).
They impose two and one real conditions, respectively, yielding
\(\dim\mathcal M_g^{\rm Par}=2d-5\).

For aquaternionicity, the self-condition
\(\langle\Theta\phi|\phi\rangle=0\)
holds automatically by Kramers orthogonality.
The only remaining condition is
\(\langle\Theta\psi|\phi\rangle=0\), and hence
\[
\mathcal M_g^{\rm Aq}
=
\left\{\ket{\phi}\!\bra{\phi}:\langle\Theta\psi|\phi\rangle=0\right\},
\qquad
\dim\mathcal M_g^{\rm Aq}=2d-4 .
\]

{
\paragraph{Uniform simulation on an isotropic support.---}
We record here the explicit processor underlying
Eq.~\eqref{eq:uniform-support-simulation}; its freeness, normalization,
and uniform action on the full support are proved in the Supplemental Material.
Extend an isotropic subspace \(S\) to an \(n\)-dimensional maximal isotropic
subspace \(L\), and let \(P=P_L\) and
\(\overline P=P_{\Theta L}\), so that \(P+\overline P=I\).
For a target instrument
\(\Phi_x(X)=\sum_\mu A_{x\mu}XA_{x\mu}^\dagger\)
on \(m\) systems, define
\(A_{x\mu}^{\#}:=\Theta_mA_{x\mu}\Theta_m^{-1}\) and
\begin{equation}
K_{x\mu}
=
A_{x\mu}\otimes P
+
A_{x\mu}^{\#}\otimes\overline P .
\label{eq:em-controlled-kraus}
\end{equation}
With
\(\Lambda_{x,L}(X):=\sum_\mu K_{x\mu}XK_{x\mu}^\dagger\),
this single processor satisfies
\[
\Lambda_{x,L}(\rho\otimes\omega)
=
\Phi_x(\rho)\otimes\omega
\qquad
\bigl(\omega\in\mathcal D(L)\bigr),
\]
and hence applies in particular to every \(\omega\in\mathcal D(S)\).
}

\paragraph{Operational protocol for recalibration-free simulation.---}
The uniform construction above yields a two-stage protocol.
Suppose that the structure of a noise family
\(\{\Gamma_\alpha\}_\alpha\)
is known, while the realized parameter \(\alpha\) is not.

In the \emph{design stage}, choose a pure alchemical resource \(g\) and an isotropic subspace \(S\) such that
\(\operatorname{supp}\Gamma_\alpha(g)\subseteq S\)
for every allowed \(\alpha\).
Extend \(S\), if necessary, to an \(n\)-dimensional isotropic subspace \(L\), and compile
\(\{\Lambda_{x,L}\}_x\)
for the desired target instrument using
Eq.~\eqref{eq:em-controlled-kraus}.
This stage may depend on the known noise structure, but not on the subsequently realized value of \(\alpha\).

In the \emph{operation stage}, prepare \(g\), let the unknown noise realization act, and {feed the resulting state
\(\Gamma_\alpha(g)\)
directly into the precompiled simulator.}
Since
\(\Gamma_\alpha(g)\in\mathcal D(L)\),
\[
\Lambda_{x,L}
\bigl(\rho\otimes\Gamma_\alpha(g)\bigr)
=
\Phi_x(\rho)\otimes\Gamma_\alpha(g)
\]
for every allowed \(\alpha\).
After the initial design, neither estimation of the noise parameter nor tomography of the realized resource state is required.
The resource may change its purity and coherence while the same exact simulation instrument remains valid.
This separation between the design and operation stages is what we mean by \emph{recalibration-free}.

\paragraph{Four-level dephasing example.---}
The protocol is particularly transparent for \(d=4\).
Using the Kramers pairs above, let
\(E=\operatorname{span}_{\mathbb C}\{e_1,e_2\}\).
Then \(E\perp\Theta E\), and
\(\dim_{\mathbb C}E=2=d/2\).

Choose
\(\ket{\psi}=(\ket{e_1}+\ket{e_2})/\sqrt2\)
and
\(g=\ket{\psi}\!\bra{\psi}\).
Under the basis-dephasing channel of
Eq.~\eqref{eq:basis-dephasing}, including the noiseless endpoint
\(\lambda=1\), the resource becomes
\[
\omega_\lambda
=
\frac12
\left(
\proj{e_1}
+
\proj{e_2}
+
\lambda\ket{e_1}\!\bra{e_2}
+
\lambda\ket{e_2}\!\bra{e_1}
\right).
\]
As \(\lambda\) decreases, the resource loses coherence and becomes mixed, but
\(\operatorname{supp}\omega_\lambda\subseteq E\)
for every \(\lambda\).

For a target instrument
\(\Phi_x(X)=\sum_\mu A_{x\mu}XA_{x\mu}^\dagger\),
compile once the processor with
\(K_{x\mu}
=
A_{x\mu}\otimes P_E+
A_{x\mu}^{\#}\otimes P_{\Theta E}\).
It then satisfies
\[
\Lambda_{x,E}(\rho\otimes\omega_\lambda)
=
\Phi_x(\rho)\otimes\omega_\lambda
\]
for every \(0\le\lambda\le1\).
Hence the same processor remains exact throughout the dephasing process without learning \(\lambda\) or recalibrating the simulation map.

\paragraph{Standard-noise calculations.---}
We finally summarize the calculations underlying
Table~\ref{tab:d4-noise-comparison}, using the basis
\(\{e_j,f_j\}_{j=1}^n\)
{introduced in the noise analysis of the main text}.

For basis dephasing with \(0\le\lambda<1\), every invariant support is a coordinate subspace.
In imaginarity, any nonzero coordinate subspace contains a real basis vector and therefore cannot satisfy
\(S\perp KS\).
In parity asymmetry, a coordinate basis vector is a \(Z\)-eigenvector and hence cannot belong to a nonzero subspace satisfying
\(S\perp ZS\).
For aquaternionicity,
\(E=\operatorname{span}_{\mathbb C}\{e_1,\ldots,e_n\}\)
is both invariant and isotropic.
Thus the maximal complex dimensions are
\[
\dim_{\mathbb C}S_{\max}^{\rm Im}=0,\qquad
\dim_{\mathbb C}S_{\max}^{\rm Par}=0,\qquad
\dim_{\mathbb C}S_{\max}^{\rm Aq}=n .
\]

For pairwise amplitude damping, any nonzero component in
\(F=\operatorname{span}_{\mathbb C}\{f_1,\ldots,f_n\}\)
generates the corresponding \(e_j\) through the jump operators.
For aquaternionicity this is incompatible with isotropy, so an invariant alchemical support must lie in \(E\); the full \(E\) is admissible and has dimension \(n\).
For parity asymmetry, no nonzero invariant support satisfies the alchemical condition.
For imaginarity, an invariant alchemical support must again lie in \(E\), where it must be totally isotropic for the symmetric bilinear form \(x^Ty\).
Its maximal dimension is \(\lfloor n/2\rfloor\).
Therefore
\begin{equation}
\dim_{\mathbb C}S_{\max}^{\rm Im}
=
\left\lfloor\frac n2\right\rfloor,
\qquad
\dim_{\mathbb C}S_{\max}^{\rm Par}=0,
\qquad
\dim_{\mathbb C}S_{\max}^{\rm Aq}=n .
\label{eq:em-damping-dimensions}
\end{equation}
For \(d=4\) (\(n=2\)), these results give
\((0,0,2)\) for basis dephasing and
\((1,0,2)\) for pairwise amplitude damping, as reported in
Table~\ref{tab:d4-noise-comparison}.


\clearpage
\onecolumngrid

\setcounter{section}{0}
\setcounter{equation}{0}
\setcounter{assumption}{0}
\setcounter{capability}{0}
\setcounter{proposition}{0}
\setcounter{lemma}{0}
\setcounter{theorem}{0}
\setcounter{corollary}{0}
\setcounter{definition}{0}

\newcommand{\Herm}{\operatorname{Herm}}
\newcommand{\spanR}{\operatorname{span}_{\mathbb R}}
\newcommand{\id}{\operatorname{id}}

\begin{center}
{\large\bfseries Supplemental Material for ``Resource Theory of Aquaternionicity: Every Pure State Is a Replicable Resource for Exact Simulation of Arbitrary Quantum Operations''}
\end{center}

\input{PRL_supplement_body_v1.tex}

\end{document}

%% file: PRL_supplement_body_v1.tex
%

\setcounter{section}{0}
\setcounter{secnumdepth}{1}
\renewcommand{\thesection}{S\arabic{section}}
\renewcommand{\theequation}{S\arabic{equation}}
\renewcommand{\theassumption}{S\arabic{assumption}}
\renewcommand{\thecapability}{S\arabic{capability}}
\renewcommand{\theproposition}{S\arabic{proposition}}
\renewcommand{\thelemma}{S\arabic{lemma}}
\renewcommand{\thetheorem}{S\arabic{theorem}}
\renewcommand{\thecorollary}{S\arabic{corollary}}
\renewcommand{\thedefinition}{S\arabic{definition}}

\section{General resource-theoretic framework}
\label{sec:sm-framework}

We first state the general framework used throughout this Supplemental Material.
We follow the alchemical resource theory introduced in Ref.~\cite{InverseDraft},
with the free-operation class closed under the addition of free ancillary
systems as described below.  Unless explicitly specified as complex or
affine, all dimensions below are real.

\subsection{Free states and completely free Kraus operations}

Let the elementary system have Hilbert space
\(\mathcal H=\mathbb C^d\), and let
\(\mathcal H_m=\mathcal H^{\otimes m}\).
We denote by \(\mathcal D_m\) and \(\mathcal P_m\) the sets of density
operators and pure states on \(\mathcal H_m\), respectively.

For each \(m\ge1\), let
\(\mathcal F_m\subseteq\mathcal D_m\)
be the set of free states.  We assume:

\begin{enumerate}
\item[(F1)]
\(\mathcal F_m\) is nonempty, closed, and convex for every \(m\), and
\(\mathcal F_1\neq\mathcal D_1\).

\item[(F2)]
If \(\rho\in\mathcal F_m\) and
\(\sigma\in\mathcal F_n\), then
\(\rho\otimes\sigma\in\mathcal F_{m+n}\).

\item[(F3)]
Every marginal of a free state is free.

\item[(F4)]
The hierarchy \(\{\mathcal F_m\}_m\) is invariant under permutations
of the elementary systems.
\end{enumerate}

{Under (F1), we call the theory \emph{nontrivial} when \(\mathcal F_1\) is not a singleton; equivalently, the single-system free-state space is neither all of \(\mathcal D_1\) nor a one-point set.}

Define the free cone by
\[
\mathcal C_m
:=
\{a\rho:a\ge0,\ \rho\in\mathcal F_m\}.
\]
{
A linear operator
\(A:\mathcal H_m\to\mathcal H_n\)
is a \emph{completely free Kraus} (CFK) operator if
\begin{equation}
(A\otimes I_r)\,
\mathcal C_{m+r}\,
(A^\dagger\otimes I_r)
\subseteq
\mathcal C_{n+r}
\qquad
(\forall r\ge0),
\label{eq:sm-cfk}
\end{equation}
i.e., the branch generated by \(A\) preserves the free cone even in
the presence of an arbitrary reference system of \(r\) elementary
subsystems.

A completely positive (CP) map
\(\mathcal E:m\to n\) is called \emph{CFK} if it admits a Kraus
representation
\begin{equation}
\mathcal E(X)=\sum_\alpha A_\alpha X A_\alpha^\dagger
\label{eq:sm-cfk-map}
\end{equation}
in which every \(A_\alpha\) is a CFK operator.  A trace-preserving CFK
map is called a \emph{CFK channel}.

A quantum instrument
\(\{\mathcal E_x:m\to n\}_{x\in\mathsf X}\) is a finite collection of
CP trace-nonincreasing maps such that
\(\sum_{x\in\mathsf X}\mathcal E_x\) is a channel.  For an input state
\(\rho\), outcome \(x\) occurs with probability
\(p_x=\operatorname{Tr}\mathcal E_x(\rho)\), and, when \(p_x>0\), the
conditional output state is \(\mathcal E_x(\rho)/p_x\).  We call the
instrument a \emph{CFK instrument} if every branch \(\mathcal E_x\) is
a CFK CP map, i.e., if every branch admits a Kraus representation
consisting only of CFK operators.
}

The CFK condition is stable under sequential composition and tensor
products.  In particular, the identity and permutations of elementary
systems are CFK under the assumptions above.

\subsection{Free-ancilla closure}

We allow free ancillary systems to be supplied at no resource cost.
An instrument
\(\{\Lambda_x:m\to n\}_x\)
is called \emph{free} if there exist an integer \(a\ge0\), a free
ancillary state \(f\in\mathcal F_a\), and a CFK instrument
\(\{\widetilde\Lambda_x:m+a\to n\}_x\)
such that
\begin{equation}
\Lambda_x(X)
=
\widetilde\Lambda_x(X\otimes f)
\qquad
(\forall X).
\label{eq:sm-free-ancilla}
\end{equation}
We denote the resulting class of free instruments by
\(\mathcal O(m\to n)\).

{
The elementary closure properties of the free-operation class are the
same as in Ref.~\cite{InverseDraft} and will be used without further
proof:
\begin{itemize}
\item
\(\mathcal O\) is closed under sequential composition and tensor
products, and every branch
\(\Lambda_x\in\mathcal O(m\to n)\) is completely free-state
preserving, i.e.,
\[
(\Lambda_x\otimes\operatorname{id}_r)(\mathcal C_{m+r})
\subseteq
\mathcal C_{n+r}
\qquad(\forall r\ge0).
\]
\end{itemize}
}
Free-state preparation is contained in \(\mathcal O\) by supplying the
state as a free ancilla.  We do not otherwise assume that discarding a
subsystem is a free operation for an arbitrary hierarchy; whenever
reduced states are used below, their freeness follows from the
state-space assumption (F3).

\subsection{Alchemical states}

We use the updated notion of alchemicality of Ref.~\cite{InverseDraft},
in which universal simulation is required on arbitrary finite systems.

\begin{definition}[Alchemical state]
\label{def:sm-alchemical}
A state
\(\omega\in\mathcal D_1\setminus\mathcal F_1\)
is \emph{alchemical} if it satisfies:

\begin{enumerate}
\item[(A1)]
\emph{Exact replication.}
There exists a free channel
\(\mathcal R_\omega\in\mathcal O(1\to2)\) such that
\[
\mathcal R_\omega(\omega)=\omega^{\otimes2}.
\]

\item[(A2)]
\emph{Exact universal simulation.}
For every finite \(m\ge1\) and every quantum instrument
\(\{\mathcal N_x:m\to m\}_x\), there exist a finite
\(k\ge1\) and a free instrument
\(\{\Phi_x:m+k\to m\}_x\) such that
\begin{equation}
\Phi_x
\bigl(
\rho\otimes\omega^{\otimes k}
\bigr)
=
\mathcal N_x(\rho)
\qquad
(\forall\rho\in\mathcal D_m,\ \forall x).
\label{eq:sm-alchemical-simulation}
\end{equation}
\end{enumerate}

We write
\[
\mathcal A
:=
\{\omega\in\mathcal D_1:
\omega\text{ is alchemical}\}.
\]
\end{definition}

{
We will repeatedly use the following elementary consequences of
{replication and universal simulation, all of which are established in}
Ref.~\cite{InverseDraft}:
\begin{itemize}
\item
{Finite-copy simulation reduces to effective one-copy simulation:}
for every \(\omega\in\mathcal A\), every finite \(m\), and every target
instrument \(\{\mathcal N_x:m\to m\}_x\), there exists a free
instrument \(\{\widetilde\Phi_x:m+1\to m\}_x\) such that
\[
\widetilde\Phi_x(\rho\otimes\omega)=\mathcal N_x(\rho).
\]

\item
Every alchemical state is state universal.  In particular, for every
\(\omega\in\mathcal A\) and every \(\sigma\in\mathcal D_1\), there
exists a free channel \(\Lambda_{\omega\to\sigma}\) with
\[
\Lambda_{\omega\to\sigma}(\omega)=\sigma.
\]
More generally, an arbitrary finite-system state can be prepared by
choosing the corresponding replacement channel as the target.

\item
Alchemicality is transferred by free interconversion: if
\(p\in\mathcal A\) and free channels satisfy
\(C(\omega)=p\) and \(D(p)=\omega\), then \(\omega\in\mathcal A\).

\item
Consequently, whenever at least one alchemical state exists,
a single-system state is alchemical if and only if it can be freely
converted into every single-system state.
\end{itemize}
}

\subsection{Admissible mixing sets}

{
Let \(g\in\mathcal A\cap\mathcal P_1\) be a pure alchemical state.
For a fixed mixing strength \(0<t<1\), define
\[
\widehat{\mathcal M}_{g,t}
:=
\left\{
\sigma\in\mathcal D_1:
(1-t)g+t\sigma\in\mathcal A
\right\},
\qquad
\mathcal M_{g,t}
:=
\widehat{\mathcal M}_{g,t}\cap\mathcal P_1 .
\]
Thus \(\widehat{\mathcal M}_{g,t}\) allows an arbitrary mixed
second component, while \(\mathcal M_{g,t}\) is its pure-state
restriction.

The corresponding all-strength sets are
\begin{equation}
\widehat{\mathcal M}_g
:=
\bigcap_{0<t<1}\widehat{\mathcal M}_{g,t},
\qquad
\mathcal M_g
:=
\bigcap_{0<t<1}\mathcal M_{g,t}
=
\widehat{\mathcal M}_g\cap\mathcal P_1 .
\label{eq:sm-mixing-sets}
\end{equation}
Equivalently,
\[
\mathcal M_g
=
\left\{
h\in\mathcal P_1:
(1-t)g+th\in\mathcal A
\quad\forall\,0<t<1
\right\},
\]
which is the admissible pure-mixing set used in the main text.
}

\section{Universal bound on admissible mixing}
\label{sec:sm-general-bound}

We now prove the universal mixing bound stated in the main text.
The argument uses only the single-system free-state set and the fact
that free channels preserve free states; no symmetry, antiunitary
structure, or Jordan-algebraic assumption is required.

\subsection{Support constraint from a pure output}

\begin{lemma}[Effect associated with a pure output]
\label{lem:sm-effect-support}
Let \(\Lambda\) be a channel and suppose that
\(\Lambda(\rho)=g\), where \(g=\ket{\psi}\!\bra{\psi}\) is pure.
Writing \(S=\operatorname{supp}\rho\) and
\(E=\Lambda^\dagger(g)\), one has
\[
0\le E\le I,
\qquad
E\ge P_S ,
\]
where \(P_S\) denotes the projector onto \(S\).
\end{lemma}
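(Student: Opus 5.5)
We work with the Heisenberg-picture dual \(\Lambda^\dagger\) of the channel \(\Lambda\). Since \(\Lambda\) is completely positive and trace preserving, \(\Lambda^\dagger\) is completely positive and unital: \(\Lambda^\dagger(I)=I\). Because \(g=\ket{\psi}\!\bra{\psi}\) is a rank-one projector, \(0\le g\le I\). Applying the positive map \(\Lambda^\dagger\) to \(g\ge0\) and to \(I-g\ge0\) gives
\[
0\le E\le \Lambda^\dagger(I)=I .
\]
This settles the first claim.

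For the support inequality, we first turn \(\Lambda(\rho)=g\) into a statement about \(E\). By duality,
\[
\operatorname{Tr}(\rho E)=\operatorname{Tr}\bigl(\Lambda(\rho)g\bigr)=\operatorname{Tr}(g^2)=1 .
\]
Since \(\operatorname{Tr}\rho=1\), it follows that
\[
\operatorname{Tr}\bigl(\rho(I-E)\bigr)=0 .
\]
Now \(I-E\ge0\) by the first part and \(\rho\ge0\). Write the spectral decomposition \(\rho=\sum_i p_i\proj{v_i}\) with all \(p_i>0\) and \(\operatorname{span}\{v_i\}=S\). Then
\[
\sum_i p_i\langle v_i|(I-E)|v_i\rangle=0 ,
\]
and each term is nonnegative, so each term vanishes. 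Hence \((I-E)^{1/2}v_i=0\), i.e., \((I-E)v_i=0\) for every \(i\). Therefore \(I-E\) annihilates \(S\), so \(E\) acts as the identity on \(S\).

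It remains to deduce the operator inequality \(E\ge P_S\). Because \(E\) is Hermitian and \(Ev=v\) for all \(v\in S\), the subspace \(S\) is invariant under \(E\), and so is \(S^\perp\). Thus \(E\) is block diagonal:
\[
E=P_S\oplus E' ,
\]
where \(E'\) is the compression of \(E\) to \(S^\perp\). From \(0\le E\) we get \(E'\ge0\). Consequently
\[
E-P_S=0\oplus E'\ge0 ,
\]
which is \(E\ge P_S\).

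The only step needing care is passing from \(\operatorname{Tr}(\rho(I-E))=0\) to \((I-E)|_S=0\). This uses the positivity \(I-E\ge0\) from the first part together with the standard fact that a positive operator with vanishing expectation on a vector annihilates that vector. The rest is routine.
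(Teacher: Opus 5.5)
Your proof is correct and follows essentially the same route as the paper: both use $\operatorname{Tr}[(I-E)\rho]=0$ with $I-E\ge0$ to get $(I-E)P_S=0$, then use Hermiticity to write $E=P_S+(I-P_S)E(I-P_S)\ge P_S$. Your spectral-decomposition argument and your derivation of $0\le E\le I$ from positivity and unitality of $\Lambda^\dagger$ only spell out steps that the paper leaves implicit.
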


\begin{proof}
Since
\[
\operatorname{Tr}[(I-E)\rho]
=
1-\operatorname{Tr}[g\Lambda(\rho)]
=
0
\]
and \(I-E\ge0\), we have
\((I-E)^{1/2}\rho^{1/2}=0\).
Because \(\rho\) is strictly positive on its support,
\((I-E)P_S=0\).
Hermiticity then gives
\(P_SE(I-P_S)=0\), and hence
\[
E
=
P_S+(I-P_S)E(I-P_S)
\ge P_S .
\]
\end{proof}

{
\begin{theorem}[Universal mixing bound]
\label{thm:sm-universal-mixing-bound}
For every pure alchemical state \(g\in\mathcal P_1\),
\begin{equation}
\dim\mathcal M_g\le2d-4 .
\label{eq:sm-universal-mixing-bound}
\end{equation}
\end{theorem}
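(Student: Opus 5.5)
The plan is to fix one mixing strength $0<t<1$ and show that already $\mathcal M_{g,t}$ lies in a real submanifold of $\mathcal P_1\cong\mathbb{CP}^{d-1}$ of dimension $2d-4$. Since $\mathcal M_g\subseteq\mathcal M_{g,t}$, this gives Eq.~\eqref{eq:sm-universal-mixing-bound}. The tools are state universality of alchemical states (from the list of consequences quoted from Ref.~\cite{InverseDraft}), Lemma~\ref{lem:sm-effect-support}, and the fact that free channels preserve free states.

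\emph{Step 1 (a fidelity ceiling).} Write $g=\ket{\psi}\!\bra{\psi}$ and define
\[
c_g:=\max_{f\in\mathcal F_1}\langle\psi|f|\psi\rangle .
\]
The maximum is attained because $\mathcal F_1$ is compact. Fix a maximizer $f_\star$. We have $c_g<1$, since $\langle\psi|f|\psi\rangle=1$ forces $f=g$, and $g$ is nonfree. For the same reason $\operatorname{supp}f_\star\not\subseteq\operatorname{span}_{\mathbb C}\{\psi\}$. Hence $V:=P_{\psi^\perp}\operatorname{supp}f_\star\neq\{0\}$, where $P_{\psi^\perp}$ is the projector onto the orthogonal complement of $\psi$.

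\emph{Step 2 (apply the lemma).} Let $h=\ket{\phi}\!\bra{\phi}\in\mathcal M_{g,t}$, and set $\omega=(1-t)g+th\in\mathcal A$. By state universality there is a free channel $\Lambda$ with $\Lambda(\omega)=g$. Let $S=\operatorname{supp}\omega=\operatorname{span}_{\mathbb C}\{\psi,\phi\}$ and $E=\Lambda^\dagger(g)$. Lemma~\ref{lem:sm-effect-support} gives $E\ge P_S$. Because $\Lambda$ maps free states to free states, $\Lambda(f_\star)\in\mathcal F_1$, so
\[
\operatorname{Tr}[P_Sf_\star]\le\operatorname{Tr}[Ef_\star]=\langle\psi|\Lambda(f_\star)|\psi\rangle\le c_g .
\]
Decompose $\phi=a\psi+b\phi'$ with $\phi'\perp\psi$ a unit vector. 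If $b=0$ then $h=g$. Otherwise $P_S=\proj{\psi}+\proj{\phi'}$, so the display reads
\[
c_g+\langle\phi'|f_\star|\phi'\rangle\le c_g .
\]
Positivity of $f_\star$ then forces $f_\star\phi'=0$, i.e. $\phi'\perp\operatorname{supp}f_\star$. Since $\phi'\perp\psi$, this is equivalent to $\phi'\perp V$.

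\emph{Step 3 (dimension count).} It follows that every $h\in\mathcal M_{g,t}$ satisfies
\[
\phi\in W:=\operatorname{span}_{\mathbb C}\{\psi\}\oplus(\psi^\perp\cap V^\perp).
\]
Because $V\neq\{0\}$, $W$ has complex dimension at most $d-1$. The pure states with vectors in $W$ therefore form a projective space of complex dimension at most $d-2$, i.e. real dimension at most $2d-4$. Hence $\dim\mathcal M_g\le\dim\mathcal M_{g,t}\le2d-4$.

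The step I expect to need the most care is Step 2. The point is to choose the comparison free state as a fidelity maximizer, so that the slack in $\operatorname{Tr}[Ef_\star]\le c_g$ is exactly zero and the bound $E\ge P_S$ becomes a rigid linear constraint on $\phi$. Otherwise one only gets an inequality, which yields no dimension reduction. One should also check that $\Lambda$ depends on $\phi$ but $f_\star$ and $V$ do not. This is what makes the constraint subspace $W$ uniform over $h$.
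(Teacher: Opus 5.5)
Your proof is correct and is essentially the paper's argument: the paper also fixes a single $t$, picks a maximizer $\sigma_*$ of the free overlap $s_g$, and combines state universality, Lemma~\ref{lem:sm-effect-support}, and free-state preservation to force $\operatorname{supp}\rho\cap\psi^\perp\subseteq\ker\sigma_*$. Your subspace $W$ coincides with its $V_g=\mathbb C\psi\oplus\ker C_g$, since $\psi^\perp\cap V^\perp=\psi^\perp\cap\ker f_\star=\ker C_g$, and your observation $V\neq\{0\}$ is its $\operatorname{rank}C_g\ge1$, which the paper obtains from $\operatorname{Tr}C_g=1-s_g>0$.
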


\begin{proof}
We prove a stronger fixed-\(t\) statement from which
Eq.~\eqref{eq:sm-universal-mixing-bound} follows immediately.

Fix
\[
g=\ket{\psi}\!\bra{\psi}\in\mathcal A\cap\mathcal P_1
\]
and \(0<t<1\).
Consider the maximal overlap of \(g\) with the free-state set,
\[
s_g
:=
\max_{\omega\in\mathcal F_1}\operatorname{Tr}(g\omega).
\]
The maximum is attained because \(\mathcal F_1\) is compact.
Choose any optimizer
\[
{
\sigma_*
\in
\operatorname*{arg\,max}_{\sigma\in\mathcal F_1}
\operatorname{Tr}(g\sigma).
}
\]
Since \(g\) is pure and nonfree, \(s_g<1\).

Let
\[
Q_g:=I-g,
\qquad
C_g
:=
\left.
Q_g{\sigma_*}Q_g
\right|_{\psi^\perp}.
\]
The operator \(C_g\) is positive and acts on the
\((d-1)\)-dimensional space \(\psi^\perp\).
We claim that every
\(\sigma\in\widehat{\mathcal M}_{g,t}\)
is supported on the common subspace
\begin{equation}
V_g
:=
\mathbb C\psi\oplus\ker C_g,
\qquad
\operatorname{supp}\sigma\subseteq V_g
\quad
(\forall\sigma\in\widehat{\mathcal M}_{g,t}).
\label{eq:sm-common-support}
\end{equation}

To prove the claim, take
\(\sigma\in\widehat{\mathcal M}_{g,t}\) and set
\[
\rho=(1-t)g+t\sigma.
\]
By definition, \(\rho\) is alchemical.
By state universality, there exists a free channel
\(\Lambda\) satisfying
\[
\Lambda(\rho)=g.
\]
{Because both coefficients in \(\rho\) are strictly positive, let
\[
R_\rho:=\operatorname{supp}\rho
=
\mathbb C\psi+\operatorname{supp}\sigma .
\]}
Set \(E=\Lambda^\dagger(g)\).
Lemma~\ref{lem:sm-effect-support} gives
\[
{E\ge P_{R_\rho}.}
\]
Since \({\sigma_*}\) is free and \(\Lambda\) preserves free states,
\(\Lambda({\sigma_*})\) is free. Therefore, by the definition of
\(s_g\),
\[
s_g
\ge
\operatorname{Tr}[g\Lambda({\sigma_*})]
=
\operatorname{Tr}(E{\sigma_*})
\ge
{\operatorname{Tr}(P_{R_\rho}{\sigma_*}).}
\]

{Now let
\[
U_\rho:=R_\rho\cap\psi^\perp.
\]
Since \(\mathbb C\psi\subseteq R_\rho\), we have the orthogonal decomposition
\(R_\rho=\mathbb C\psi\oplus U_\rho\), and hence
\[
P_{R_\rho}=g+P_{U_\rho}.
\]}
Using
\(\operatorname{Tr}(g{\sigma_*})=s_g\),
the preceding inequality becomes
\[
s_g
\ge
s_g+{\operatorname{Tr}(P_{U_\rho}{\sigma_*}).}
\]
The last term is nonnegative, so
\[
{\operatorname{Tr}(P_{U_\rho}{\sigma_*})=0.}
\]
Positivity of \({\sigma_*}\) then implies
\[
{U_\rho\subseteq\ker{\sigma_*}.}
\]
In particular,
{\(U_\rho\subseteq\ker C_g\).}
Since
{\(\operatorname{supp}\sigma\subseteq R_\rho\), every vector in
\(\operatorname{supp}\sigma\) decomposes into a component along
\(\psi\) and a component in \(U_\rho\). Hence
\[
\operatorname{supp}\sigma
\subseteq
\mathbb C\psi\oplus U_\rho
\subseteq
\mathbb C\psi\oplus\ker C_g
=
V_g,
\]}
which proves Eq.~\eqref{eq:sm-common-support}.

It remains to bound the dimension of \(V_g\).
Because
\[
\operatorname{Tr}C_g
=
1-s_g
>
0,
\]
we have
\(\operatorname{rank}C_g\ge1\).
Therefore
\[
\dim_{\mathbb C}V_g
=
1+\dim_{\mathbb C}\ker C_g
=
d-\operatorname{rank}C_g
\le d-1.
\]
Consequently,
\begin{equation}
\dim\mathcal M_{g,t}
\le
2\dim_{\mathbb C}V_g-2
=
2(d-\operatorname{rank}C_g)-2
\le
2d-4,
\label{eq:sm-fixed-t-bound}
\end{equation}
because the pure states supported on an \(r\)-dimensional complex
subspace form \(\mathbb{CP}^{r-1}\), of real dimension \(2r-2\).
Similarly,
\[
\dim_{\rm aff}\widehat{\mathcal M}_{g,t}
\le
(\dim_{\mathbb C}V_g)^2-1
\le
(d-1)^2-1.
\]

Finally,
\[
\mathcal M_g
=
\bigcap_{0<t<1}\mathcal M_{g,t}
\subseteq
\mathcal M_{g,t}
\]
for every fixed \(t\in(0,1)\).
Combining this inclusion with
Eq.~\eqref{eq:sm-fixed-t-bound} proves
Eq.~\eqref{eq:sm-universal-mixing-bound}.
\end{proof}

The proof gives the stronger geometric statement
\eqref{eq:sm-common-support}: for every optimizer \({\sigma_*}\) of
the free overlap with \(g\), all fixed-\(t\) admissible mixing states
are confined to a proper complex subspace \(V_g\subsetneq\mathcal H\).
}

\subsection{Independence of the mixing strength}

{
We will use the following elementary observation in the proposition
below.  If a channel \(C\) maps a state \(\omega\) to a pure state
\(p\), then it maps every state supported on
\(\operatorname{supp}\omega\) to the same \(p\).  Indeed, for
\(E=C^\dagger(p)\), the argument of
Lemma~\ref{lem:sm-effect-support} gives
\(E=I\) on \(\operatorname{supp}\omega\), and hence
\(\operatorname{Tr}[pC(\eta)]=1\) for every
\(\eta\) supported there.
}

\begin{proposition}[Mixing-strength independence]
\label{prop:sm-mixing-rate-independence}
Let \(g\in\mathcal A\cap\mathcal P_1\) and
\(\sigma\in\mathcal D_1\).
If
\[
(1-t_0)g+t_0\sigma\in\mathcal A
\]
for one \(t_0\in(0,1)\), then
\[
(1-t)g+t\sigma\in\mathcal A
\qquad
\text{for every }0<t<1.
\]
Consequently,
\begin{equation}
\widehat{\mathcal M}_{g,t}
=
\widehat{\mathcal M}_{g},
\qquad
\mathcal M_{g,t}
=
\mathcal M_g
\qquad
(0<t<1).
\label{eq:sm-mixing-rate-equality}
\end{equation}
\end{proposition}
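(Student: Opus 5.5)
The plan is to use the interconversion criterion recorded after Definition~\ref{def:sm-alchemical}: if $p\in\mathcal A$ and there are free channels $C,D$ with $C(\omega)=p$ and $D(p)=\omega$, then $\omega\in\mathcal A$. Set
\[
\omega_{t}:=(1-t)g+t\sigma\qquad(0<t<1),
\]
so that $\omega_{t_0}\in\mathcal A$ by hypothesis. For an arbitrary $t\in(0,1)$ I will exhibit free channels in both directions between $\omega_t$ and $\omega_{t_0}$, and then apply the criterion with $p=\omega_{t_0}$ and $\omega=\omega_t$.

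\emph{Common support.} Both coefficients in $\omega_t$ and $\omega_{t_0}$ are strictly positive. Hence
\[
\operatorname{supp}\omega_t=\mathbb C\psi+\operatorname{supp}\sigma=\operatorname{supp}\omega_{t_0}
\]
for all $t\in(0,1)$. This is the only place where the restriction $0<t<1$ enters.

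\emph{From $\omega_{t_0}$ to $\omega_t$.} Since $\omega_{t_0}$ is alchemical, it is state universal. This gives a free channel $C:=\Lambda_{\omega_{t_0}\to\omega_t}$ with $C(\omega_{t_0})=\omega_t$.

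\emph{From $\omega_t$ to $\omega_{t_0}$.} This is the step that needs care, because $\omega_t$ is not yet known to be alchemical and so cannot be assumed state universal. I would route through the pure state $g$.
\begin{itemize}
\item State universality of $\omega_{t_0}$ gives a free channel $\Lambda$ with $\Lambda(\omega_{t_0})=g$.
\item Since $g$ is pure, the observation stated just before the proposition (via Lemma~\ref{lem:sm-effect-support}, $\Lambda^\dagger(g)=I$ on $\operatorname{supp}\omega_{t_0}$) shows that $\Lambda$ sends every state supported on $\operatorname{supp}\omega_{t_0}$ to $g$.
\item By the common-support step, this includes $\omega_t$, so $\Lambda(\omega_t)=g$.
\item Since $g\in\mathcal A$ is state universal, there is a free channel $\Lambda_{g\to\omega_{t_0}}$.
\end{itemize}
Set $D:=\Lambda_{g\to\omega_{t_0}}\circ\Lambda$. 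It is free because $\mathcal O$ is closed under sequential composition, and it satisfies $D(\omega_t)=\omega_{t_0}$.

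\emph{Conclusion.} The interconversion criterion now yields $\omega_t\in\mathcal A$. Nonfreeness of $\omega_t$, which Definition~\ref{def:sm-alchemical} requires, also follows directly: if $\omega_t$ were free, then $g=\Lambda(\omega_t)$ would be free, because free channels preserve free states, contradicting $g\in\mathcal A$. Since $t_0$ and $t$ are arbitrary, membership in $\widehat{\mathcal M}_{g,t}$ does not depend on $t$. Hence $\widehat{\mathcal M}_{g,t}=\widehat{\mathcal M}_g$ by Eq.~\eqref{eq:sm-mixing-sets}, and intersecting with $\mathcal P_1$ gives $\mathcal M_{g,t}=\mathcal M_g$, which is Eq.~\eqref{eq:sm-mixing-rate-equality}.
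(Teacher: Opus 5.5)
Your proof is correct and follows the paper's argument. In both, the key step is that the free channel sending $\omega_{t_0}$ to the pure state $g$ also sends every state supported on $\operatorname{supp}\omega_{t_0}$, in particular $\omega_t$, to $g$; the free-interconversion criterion then finishes the proof. The only difference is that the paper pivots directly on $p=g$, using $\Lambda(\omega_t)=g$ and state universality of $g$ for the reverse direction, so your detour through $\omega_{t_0}$ via $\Lambda_{g\to\omega_{t_0}}\circ\Lambda$ is unnecessary but harmless.
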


\begin{proof}
{
Let
\[
\rho_{t_0}=(1-t_0)g+t_0\sigma.
\]
Because \(\rho_{t_0}\) is alchemical, state universality provides a
free channel \(C\) such that
\[
C(\rho_{t_0})=g.
\]
Since \(0<t_0<1\),
\[
\operatorname{supp}\rho_{t_0}
=
\operatorname{supp}g+\operatorname{supp}\sigma.
\]
By the observation above, the same channel therefore satisfies
\[
C(g)=C(\sigma)=g.
\]
Hence, for every \(0\le t\le1\),
\[
C((1-t)g+t\sigma)=g.
\]
Conversely, because \(g\) is alchemical, it can be freely converted
into \((1-t)g+t\sigma\).  Thus each mixture is freely
interconvertible with \(g\), and the free-interconversion property
summarized in Sec.~\ref{sec:sm-framework} implies that it is
alchemical.  Equation~\eqref{eq:sm-mixing-rate-equality} follows.
}
\end{proof}

\section{Aquaternionicity and proof of extremality}
\label{sec:sm-aquaternionicity}

We now specialize the general framework to aquaternionicity and prove
the extremality result stated in the main text.

\subsection{The aquaternionicity hierarchy}

{Let \(d=2n\ge4\)} and let
\(\Theta:\mathcal H\to\mathcal H\)
be antiunitary with
\[
\Theta^2=-I .
\]
For \(m\ge1\), define
\[
\Theta_m:=\Theta^{\otimes m},
\qquad
\tau_m(X):=\Theta_mX\Theta_m^{-1}.
\]
Then
\(\Theta_m^2=(-1)^mI\), while
\(\tau_m^2=\operatorname{id}\)
on operators.

The aquaternionicity free states are
\begin{equation}
\mathcal F_m^{\rm Aq}
:=
\left\{
\rho\in\mathcal D_m:
\tau_m(\rho)=\rho
\right\}.
\label{eq:sm-aq-free-states}
\end{equation}

These sets are nonempty, closed, and convex and contain the maximally
mixed state.  Tensor-product and permutation consistency follow
directly from
\(\Theta_{m+n}=\Theta_m\otimes\Theta_n\).
Moreover,
\begin{equation}
\operatorname{Tr}_n\circ\tau_{m+n}
=
\tau_m\circ\operatorname{Tr}_n ,
\label{eq:sm-partial-trace-cov}
\end{equation}
so every marginal of a free state is free.
Finally, \(\mathcal F_1^{\rm Aq}\neq\mathcal D_1\), because no pure
state is invariant under \(\tau_1\), as follows from Kramers
orthogonality below.
Thus the hierarchy satisfies the assumptions of
Sec.~\ref{sec:sm-framework}.
{For \(d=2\), the same fixed-point construction gives only \(\mathcal F_1^{\rm Aq}=\{I/2\}\) by Kramers degeneracy and is therefore excluded by the nontriviality convention.}

For later use, note that every vector obeys
\begin{equation}
\langle\Theta\psi|\psi\rangle=0 .
\label{eq:sm-kramers-orthogonality}
\end{equation}
Indeed, antiunitarity gives
\[
\langle\Theta\psi|\Theta^2\psi\rangle
=
\langle\Theta\psi|\psi\rangle,
\]
whereas \(\Theta^2=-I\) makes the left-hand side equal to
\(-\langle\Theta\psi|\psi\rangle\).

\subsection{Free operations and covariance}

{For aquaternionicity, we define an instrument to be free when each branch is \(\Theta\)-covariant. We now show that this concrete covariance definition coincides with the free-ancilla closure of the CFK class introduced in Sec.~\ref{sec:sm-framework}.}

\begin{lemma}[Completely free branches and covariance]
\label{lem:sm-aq-cfree-cov}
Let
\(\mathcal E:m\to \ell\)
be completely positive.
Then
\[
(\mathcal E\otimes\operatorname{id}_r)
(\mathcal C_{m+r}^{\rm Aq})
\subseteq
\mathcal C_{\ell+r}^{\rm Aq}
\qquad
(\forall r\ge0)
\]
if and only if
\begin{equation}
\mathcal E\circ\tau_m
=
\tau_\ell\circ\mathcal E .
\label{eq:sm-aq-covariance}
\end{equation}
\end{lemma}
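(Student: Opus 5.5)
The ``if'' direction is a direct computation; the ``only if'' direction reduces to showing that a map preserving the real subspace of $\tau$-invariant operators must commute with $\tau$. Throughout, let $\mathrm{Fix}_m:=\{X\in\Herm(\mathcal H_m):\tau_m(X)=X\}$. Since $\tau_m$ is a unital, trace-preserving, positivity-preserving real-linear involution on Hermitian operators, the free cone is $\mathcal C_m^{\rm Aq}=\mathrm{Fix}_m\cap\{X\ge0\}$. Also $\mathrm{Fix}_m$ is spanned over $\mathbb R$ by this cone, because every $X\in\mathrm{Fix}_m$ equals $(X+cI)-cI$ with $X+cI\ge0$ for large $c$, and $I\in\mathrm{Fix}_m$. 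Finally, since $\Theta_{m+r}=\Theta_m\otimes\Theta_r$, we have $\tau_{m+r}=\tau_m\otimes\tau_r$ on product operators, and by real-linear extension on all operators.

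\emph{Covariance $\Rightarrow$ complete freeness.} We first extend the hypothesis to $\mathcal E\otimes\id_r$. On product operators,
\[
(\mathcal E\otimes\id_r)\circ\tau_{m+r}(X\otimes Y)=\mathcal E(\tau_m X)\otimes\tau_r Y=\tau_{\ell+r}\circ(\mathcal E\otimes\id_r)(X\otimes Y),
\]
and both sides are real-linear. Products span the full operator space over $\mathbb C$; for the antilinear $\tau$ we therefore use that they span it over $\mathbb R$ as well, since $i(X\otimes Y)=(iX)\otimes Y$ is again a product. Hence $\mathcal E\otimes\id_r$ commutes with $\tau$. Now take $X\in\mathcal C^{\rm Aq}_{m+r}$. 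Complete positivity gives $(\mathcal E\otimes\id_r)(X)\ge0$, and covariance gives $\tau_{\ell+r}\bigl((\mathcal E\otimes\id_r)(X)\bigr)=(\mathcal E\otimes\id_r)(\tau_{m+r}X)=(\mathcal E\otimes\id_r)(X)$. So the image lies in $\mathcal C^{\rm Aq}_{\ell+r}$.

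\emph{Complete freeness $\Rightarrow$ covariance.} This is the main obstacle: preserving the invariant subspace does not by itself force commutation with $\tau$. Setting $r=0$ and using the spanning remark, $\mathcal E$ maps $\mathrm{Fix}_m$ into $\mathrm{Fix}_\ell$. We need the same for the anti-invariant part $\mathrm{Anti}_m:=\{X:\tau_m X=-X\}$, which is where the reference system comes in, and we use $r=1$. The key idea is that an anti-invariant operator can be tensored with an anti-invariant Hermitian operator on the reference to produce an invariant one. Choose a Hermitian $Y_0\ne0$ on $\mathcal H$ with $\tau_1(Y_0)=-Y_0$. Such an operator exists because $\mathcal F_1^{\rm Aq}\ne\mathcal D_1$, so $\tau_1\ne\id$ on Hermitian operators; concretely, $Y_0=g-\tau_1(g)$ for any pure $g$ is nonzero by Kramers orthogonality. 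For $X\in\mathrm{Anti}_m$ we then have $X\otimes Y_0\in\mathrm{Fix}_{m+1}$, so
\[
\mathcal E(X)\otimes Y_0=(\mathcal E\otimes\id_1)(X\otimes Y_0)\in\mathrm{Fix}_{\ell+1},
\]
which says $\tau_\ell(\mathcal E(X))\otimes(-Y_0)=\mathcal E(X)\otimes Y_0$. Pairing both sides with $Y_0$ on the second factor, i.e., applying $\operatorname{Tr}_2[(\cdot)(I\otimes Y_0)]$ and using $\operatorname{Tr}Y_0^2>0$, yields $\tau_\ell(\mathcal E(X))=-\mathcal E(X)$.

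It remains to assemble the two pieces. Every Hermitian $X$ decomposes as $X=\tfrac12(X+\tau_m X)+\tfrac12(X-\tau_m X)$, a sum of an invariant and an anti-invariant part. From the two preceding steps, $\mathcal E$ commutes with $\tau$ on each part, hence on all Hermitian operators. Since $\mathcal E$ is Hermiticity-preserving and complex-linear while $\tau$ is antilinear, this extends to arbitrary operators via $Z=H_1+iH_2$ with $H_1,H_2$ Hermitian: $\tau(\mathcal E(iH_2))=\tau(i\,\mathcal E(H_2))=-i\,\tau(\mathcal E(H_2))$, and the same identity holds with $\mathcal E$ and $\tau$ in the other order. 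This proves \eqref{eq:sm-aq-covariance}.
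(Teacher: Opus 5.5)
Your proof is correct and follows essentially the same route as the paper: split Hermitian operators into the $\pm1$ eigenspaces of $\tau_m$, use the $r=0$ condition together with the fact that the free cone spans the invariant subspace, and use the $r=1$ condition with a nonzero anti-invariant $Y$ on the reference to force $\tau_\ell(\mathcal E(X))=-\mathcal E(X)$ for anti-invariant $X$. The differences are cosmetic. You apply the spanning remark at level $m+1$, where the paper uses the explicit positive states $\omega_t=\frac{I_m}{d^m}\otimes\frac{I}{d}+tX\otimes Y$, and you make explicit both the choice $Y_0=g-\tau_1(g)$ and the extension from Hermitian to arbitrary operators, which the paper leaves implicit.
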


\begin{proof}
If Eq.~\eqref{eq:sm-aq-covariance} holds, then
\(\mathcal E\otimes\operatorname{id}_r\)
is covariant with respect to
\(\tau_{m+r}\) and \(\tau_{\ell+r}\), and hence maps every positive
fixed point to a positive fixed point.

Conversely, let
\[
V_m^\pm
:=
\{X=X^\dagger:\tau_m(X)=\pm X\}.
\]
The real span of the free cone is \(V_m^+\): for any
\(X\in V_m^+\), adding a sufficiently large multiple of the identity
makes it positive.
Complete free-state preservation therefore implies
\(\mathcal E(V_m^+)\subseteq V_\ell^+\).

Now take \(X\in V_m^-\), and choose a nonzero
\(Y\in V_1^-\).
Both \(X\) and \(Y\) are traceless.
For sufficiently small real \(t\),
\[
\omega_t
=
\frac{I_m}{d^m}\otimes\frac{I}{d}
+tX\otimes Y
\]
is a positive normalized fixed point of \(\tau_{m+1}\).
Applying
\(\mathcal E\otimes\operatorname{id}_1\)
to \(\omega_t\) and \(\omega_0\) and taking their difference shows that
\(\mathcal E(X)\otimes Y\)
is fixed by \(\tau_{\ell+1}\).
Since \(\tau_1(Y)=-Y\), this implies
\(\tau_\ell(\mathcal E(X))=-\mathcal E(X)\).
Thus \(\mathcal E\) preserves both eigenspaces of \(\tau_m\), which is
equivalent to Eq.~\eqref{eq:sm-aq-covariance}.
\end{proof}

To connect covariance with the CFK definition, define for an operator
\(A:\mathcal H_m\to\mathcal H_\ell\)
\[
A^\#
:=
\Theta_\ell A\Theta_m^{-1}.
\]

{
Applying this operation twice gives
\[
(A^\#)^\#=(-1)^{m+\ell}A.
\]
For a nonzero CFK Kraus operator \(A\), Lemma~\ref{lem:sm-aq-cfree-cov} applied to the single-Kraus map \(\operatorname{Ad}_A\) implies covariance. Hence \(\operatorname{Ad}_{A^\#}=\operatorname{Ad}_A\), so minimality of the single-Kraus representation gives \(A^\#=e^{i\phi}A\) for some phase \(e^{i\phi}\). Since \(\#\) is conjugate linear, this relation implies \((A^\#)^\#=A\). The identity above therefore shows that a nonzero raw CFK Kraus operator can exist only when \(m\equiv\ell\pmod 2\). This arity-parity obstruction is precisely why the free-ancilla closure is needed to represent opposite-parity covariant branches.
}

\begin{lemma}[Covariant maps of equal parity are CFK]
\label{lem:sm-aq-same-parity}
Suppose \(m\equiv\ell\pmod2\).
Every covariant completely positive map
\(\mathcal E:m\to\ell\)
admits a CFK Kraus representation.
\end{lemma}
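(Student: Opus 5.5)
The plan is to turn covariance of \(\mathcal E\) into a statement about its Choi operator, and then use the real-structure argument familiar from imaginarity to choose Kraus operators that are each individually \(\#\)-invariant. By Lemma~\ref{lem:sm-aq-cfree-cov}, applied to a single-Kraus map \(\operatorname{Ad}_A\), such operators are CFK.

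\emph{Step 1: covariance in Kraus form.} Fix any Kraus representation \(\mathcal E(X)=\sum_\alpha A_\alpha XA_\alpha^\dagger\). A direct computation gives
\[
\tau_\ell\circ\mathcal E\circ\tau_m^{-1}(X)=\sum_\alpha A_\alpha^\# X (A_\alpha^\#)^\dagger .
\]
Hence covariance \eqref{eq:sm-aq-covariance} says that \(\{A_\alpha^\#\}\) is another Kraus representation of the same map. Equivalently, the complex Kraus space \(\mathcal K=\operatorname{span}_{\mathbb C}\{A_\alpha\}\subseteq\mathrm{Hom}(\mathcal H_m,\mathcal H_\ell)\), which is intrinsic to \(\mathcal E\), is invariant under the conjugate-linear map \(\#\). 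The identity \((A^\#)^\#=(-1)^{m+\ell}A\), recorded just before this lemma, shows that under the parity hypothesis \(\#\) is an involution on \(\mathcal K\).

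\emph{Step 2: the Choi picture.} Canonical Kraus decompositions come from spectral decompositions of the Choi operator, so I would work with a vectorization \(A\mapsto|A\rangle\!\rangle\in\mathcal H_\ell\otimes\overline{\mathcal H_m}\). Under this identification, \(\#\) corresponds to an antiunitary \(\Xi\) on the Choi space, built from \(\Theta_\ell\) on the output factor and the conjugate of \(\Theta_m\) on the input factor. One checks that \(\Xi\) is antiunitary (\(\#\) preserves the Hilbert--Schmidt inner product up to complex conjugation) and that \(\Xi^2=(-1)^{m+\ell}I=I\).

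Covariance then reads \(\Xi J\Xi^{-1}=J\) for the Choi operator \(J=\sum_\alpha|A_\alpha\rangle\!\rangle\langle\!\langle A_\alpha|\). Consequently \(\Xi\) commutes with every spectral projector of \(J\), and each eigenspace \(W_\lambda\) with \(\lambda>0\) is \(\Xi\)-invariant. On \(W_\lambda\), \(\Xi\) is an antiunitary involution, so by the standard real-form argument \(W_\lambda\) has an orthonormal basis \(\{v_{\lambda,k}\}\) with \(\Xi v_{\lambda,k}=v_{\lambda,k}\). To build it, take \(v+\Xi v\), or \(i(v-\Xi v)\) if the former vanishes, and iterate Gram--Schmidt on the \(\Xi\)-invariant orthogonal complement.

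\emph{Step 3: Kraus operators.} Set \(B_{\lambda,k}\) to be the operator with \(|B_{\lambda,k}\rangle\!\rangle=\sqrt\lambda\,v_{\lambda,k}\). These form a Kraus representation of \(\mathcal E\), since \(J=\sum_{\lambda,k}\lambda|v_{\lambda,k}\rangle\!\rangle\langle\!\langle v_{\lambda,k}|\), and each satisfies \(B_{\lambda,k}^\#=B_{\lambda,k}\). Therefore \(\operatorname{Ad}_{B_{\lambda,k}}\) is covariant, and \(\operatorname{Ad}_{B_{\lambda,k}}\otimes\operatorname{id}_r=\operatorname{Ad}_{B_{\lambda,k}\otimes I_r}\) maps \(\mathcal C^{\rm Aq}_{m+r}\) into \(\mathcal C^{\rm Aq}_{\ell+r}\) for all \(r\ge0\), by the ``if'' direction of Lemma~\ref{lem:sm-aq-cfree-cov}. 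This is exactly condition \eqref{eq:sm-cfk}, so each \(B_{\lambda,k}\) is CFK.

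\emph{Main obstacle.} I expect the main difficulty to be Step~2: pinning down the vectorization convention so that \(\#\) corresponds to a genuine antiunitary \(\Xi\) on the Choi space commuting with \(J\), and correctly tracking the sign \(\Xi^2=(-1)^{m+\ell}\). For opposite parity one gets \(\Xi^2=-I\), no \(\Xi\)-fixed vectors exist, and the construction fails, consistent with the arity-parity obstruction noted above. I would first check the convention in the single-system case \(m=\ell=1\) with \(\Theta=J_nK\), and then extend to general \(m,\ell\) by tensoring.
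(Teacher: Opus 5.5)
Your proof is correct, but it takes a longer route than the paper. The paper never passes to the Choi operator. It takes an arbitrary Kraus representation $\{A_\alpha\}$ and uses covariance to see that $\{A_\alpha^\#\}$ represents the same map. It then symmetrizes directly, setting $K_{\alpha,0}=(A_\alpha+A_\alpha^\#)/2$ and $K_{\alpha,1}=(A_\alpha-A_\alpha^\#)/(2i)$. The parity hypothesis $(A^\#)^\#=A$ makes each $K_{\alpha,b}$ $\#$-fixed. The elementary identity $\operatorname{Ad}_{(A+B)/2}+\operatorname{Ad}_{(A-B)/(2i)}=\tfrac12(\operatorname{Ad}_A+\operatorname{Ad}_B)$ then gives $\sum_{\alpha,b}\operatorname{Ad}_{K_{\alpha,b}}=\tfrac12(\mathcal E+\tau_\ell\circ\mathcal E\circ\tau_m^{-1})=\mathcal E$. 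This is your ``$v+\Xi v$, $i(v-\Xi v)$'' step applied to the Kraus operators themselves, so it needs no orthogonality, eigenspace decomposition, or Gram--Schmidt.

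What your version buys is a minimal Kraus representation. Your $B_{\lambda,k}$ are Hilbert--Schmidt orthogonal and their number equals the Choi rank, whereas the paper's construction doubles the number of operators and can produce zero or linearly dependent ones (harmless for the lemma). Your $\Xi^2=(-1)^{m+\ell}I$ also makes the arity-parity obstruction transparent. The final CFK step, $\Theta_\ell B=B\Theta_m$ surviving $\otimes I_r$, is the same in both.

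Two small remarks. First, the step you flag as the main obstacle is not really one. You can simply define $\Xi$ as the transport of $\#$ through the vectorization isomorphism. Antiunitarity then follows from $\operatorname{Tr}[(A^\#)^\dagger B^\#]=\operatorname{Tr}[\Theta_m A^\dagger B\,\Theta_m^{-1}]=\overline{\operatorname{Tr}(A^\dagger B)}$, with no explicit tensor-factor convention needed.

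Second, the word ``equivalently'' in Step 1 overstates things. $\#$-invariance of the Kraus space is strictly weaker than covariance: $\operatorname{Ad}_A+2\operatorname{Ad}_{A^\#}$ with $A^\#\not\propto A$ has a $\#$-invariant Kraus space but is not covariant. You never use that direction, though, since Step 2 works with the Choi operator directly.
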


\begin{proof}
Take any Kraus representation
\(\mathcal E(X)=\sum_\alpha A_\alpha XA_\alpha^\dagger\).
Covariance implies that
\(\{A_\alpha^\#\}_\alpha\)
is another Kraus representation of the same map.
Since \(m\equiv\ell\pmod2\),
\((A_\alpha^\#)^\#=A_\alpha\).
Define
\[
K_{\alpha,0}
=
\frac{A_\alpha+A_\alpha^\#}{2},
\qquad
K_{\alpha,1}
=
\frac{A_\alpha-A_\alpha^\#}{2i}.
\]
Then \(K_{\alpha,b}^\#=K_{\alpha,b}\), and
\[
\sum_{\alpha,b}\operatorname{Ad}_{K_{\alpha,b}}
=
\frac12
\sum_\alpha
\left(
\operatorname{Ad}_{A_\alpha}
+
\operatorname{Ad}_{A_\alpha^\#}
\right)
=
\mathcal E.
\]
The relation
\(K_{\alpha,b}\Theta_m=\Theta_\ell K_{\alpha,b}\)
continues to hold after tensoring with any reference identity, so each
\(K_{\alpha,b}\) satisfies the CFK condition.
\end{proof}

\begin{proposition}[{Covariance equals the general CFK free class}]
\label{prop:sm-aq-free-cov}
{For the hierarchy
\(\{\mathcal F_m^{\rm Aq}\}_m\), an instrument
\(\{\mathcal E_x:m\to\ell\}_x\)
is branchwise \(\Theta\)-covariant if and only if it belongs to the free-ancilla closure of the CFK instruments defined in Sec.~\ref{sec:sm-framework}. Equivalently, each branch satisfies}
\[
\mathcal E_x\circ\tau_m
=
\tau_\ell\circ\mathcal E_x .
\]
\end{proposition}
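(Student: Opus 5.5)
We prove the two inclusions separately, reducing both to Lemma~\ref{lem:sm-aq-cfree-cov} and Lemma~\ref{lem:sm-aq-same-parity}.

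\emph{Free-ancilla closure implies covariance.} Suppose \(\mathcal E_x(X)=\widetilde{\mathcal E}_x(X\otimes f)\), where \(f\in\mathcal F_a^{\rm Aq}\) and \(\{\widetilde{\mathcal E}_x\}_x\) is a CFK instrument \(m+a\to\ell\). Each CFK Kraus operator satisfies Eq.~\eqref{eq:sm-cfk} for every \(r\). Summing over Kraus operators, each branch \(\widetilde{\mathcal E}_x\) is therefore completely free-cone preserving, and Lemma~\ref{lem:sm-aq-cfree-cov} gives \(\widetilde{\mathcal E}_x\circ\tau_{m+a}=\tau_\ell\circ\widetilde{\mathcal E}_x\). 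Since \(\tau_{m+a}(X\otimes f)=\tau_m(X)\otimes\tau_a(f)=\tau_m(X)\otimes f\), we obtain
\[
\mathcal E_x(\tau_m X)=\widetilde{\mathcal E}_x\bigl(\tau_{m+a}(X\otimes f)\bigr)=\tau_\ell\,\mathcal E_x(X).
\]
This holds first for Hermitian \(X\) and then for all \(X\) by conjugate-linear extension, since \(\tau\) commutes with taking adjoints.

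\emph{Covariance implies free-ancilla closure.} Let each branch \(\mathcal E_x\) be covariant.

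If \(m\equiv\ell\pmod 2\), Lemma~\ref{lem:sm-aq-same-parity} gives each branch a CFK Kraus representation. We then take \(a=0\), which is trivially allowed; alternatively, append and trace out \(I/d^2\) on two systems.

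If the parities differ, we adjoin one ancilla in the free state \(f=I/d\in\mathcal F_1^{\rm Aq}\) and set
\[
\widetilde{\mathcal E}_x:=\mathcal E_x\circ\operatorname{Tr}_{\rm anc},
\]
a map \(m+1\to\ell\). Then \(\widetilde{\mathcal E}_x(X\otimes f)=\mathcal E_x(X)\). By Eq.~\eqref{eq:sm-partial-trace-cov} the partial trace intertwines \(\tau_{m+1}\) with \(\tau_m\), so \(\widetilde{\mathcal E}_x\) is covariant. Because \(m+1\equiv\ell\pmod2\), Lemma~\ref{lem:sm-aq-same-parity} makes each \(\widetilde{\mathcal E}_x\) CFK.

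\emph{Instrument condition.} We must check that \(\{\widetilde{\mathcal E}_x\}_x\) is still an instrument. Its branches sum to \(\bigl(\sum_x\mathcal E_x\bigr)\circ\operatorname{Tr}_{\rm anc}\), which is trace preserving, so this holds.

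\emph{Main obstacle.} The key point is the opposite-parity case. Here the arity-parity obstruction noted before Lemma~\ref{lem:sm-aq-same-parity} forbids raw CFK Kraus operators. One extra free ancilla shifts the parity, and tracing it out (equivalently, using the Kraus operators \(A\otimes\bra{k}\)) stays covariant because of Eq.~\eqref{eq:sm-partial-trace-cov}.

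There is a subtlety: the individual operators \(A\otimes\bra{k}\) need not satisfy \(K^\#=K\). We sidestep this by not using them directly, and instead feeding the covariant map \(\widetilde{\mathcal E}_x\) into Lemma~\ref{lem:sm-aq-same-parity}, which re-symmetrizes the Kraus operators into the combinations \(K_{\alpha,b}\) with \(K_{\alpha,b}^\#=K_{\alpha,b}\).
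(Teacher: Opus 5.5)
Your proposal is correct and follows essentially the same route as the paper. The converse direction is identical: same parity via Lemma~\ref{lem:sm-aq-same-parity}, and opposite parity by appending \(f=I/d\), composing with the partial trace, and invoking Eq.~\eqref{eq:sm-partial-trace-cov}. The only small difference is in the forward direction, where you apply Lemma~\ref{lem:sm-aq-cfree-cov} directly to the CFK instrument on \(m+a\) systems and then use \(\tau_a(f)=f\). The paper instead cites the imported closure property that free branches are completely free-state preserving, so your version makes that step self-contained.
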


\begin{proof}
{Every branch in the free-ancilla closure of the CFK class is completely free-state preserving by the closure properties summarized in Sec.~\ref{sec:sm-framework}, and is therefore covariant by}
Lemma~\ref{lem:sm-aq-cfree-cov}.

Conversely, suppose each branch is covariant.
If
\(m\equiv\ell\pmod2\),
Lemma~\ref{lem:sm-aq-same-parity} gives a CFK realization directly.

If
\(m\not\equiv\ell\pmod2\),
append one elementary free ancilla
\(f=I/d\) and define
\[
\widehat{\mathcal E}_x(Y)
:=
\mathcal E_x(\operatorname{Tr}_F Y),
\qquad
\widehat{\mathcal E}_x:m+1\to\ell .
\]
Equation~\eqref{eq:sm-partial-trace-cov} implies that
\(\widehat{\mathcal E}_x\) is covariant.
Its input and output sizes now have the same parity, so
Lemma~\ref{lem:sm-aq-same-parity} gives a CFK realization.
Finally,
\[
\widehat{\mathcal E}_x(X\otimes f)
=
\mathcal E_x(X),
\]
which is exactly the free-ancilla realization of
Eq.~\eqref{eq:sm-free-ancilla}.
\end{proof}

\begin{corollary}[Free discard]
\label{cor:sm-aq-free-discard}
For the aquaternionicity hierarchy, the partial trace over any
collection of elementary systems is a free channel.
\end{corollary}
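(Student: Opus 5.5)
The plan is to realize the partial trace as a branchwise \(\Theta\)-covariant channel and then invoke Proposition~\ref{prop:sm-aq-free-cov}, which identifies covariant instruments with the free-ancilla closure of the CFK class. Permutations of elementary systems are already free, so it is enough to treat the trace over the last \(k\) elementary systems, \(\operatorname{Tr}_k:m+k\to m\). This is a single-branch channel, and its covariance is exactly the identity
\[
\operatorname{Tr}_k\circ\tau_{m+k}=\tau_m\circ\operatorname{Tr}_k ,
\]
i.e., Eq.~\eqref{eq:sm-partial-trace-cov} with \(n\) replaced by \(k\).

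I would verify this identity directly, since the paper only asserts it. Write \(\Theta_{m+k}=\Theta_m\otimes\Theta_k\), where \(\Theta_k\) is antiunitary. For product operators \(X=A\otimes B\),
\[
\tau_{m+k}(A\otimes B)=\tau_m(A)\otimes\Theta_kB\Theta_k^{-1},
\qquad
\operatorname{Tr}\bigl(\Theta_kB\Theta_k^{-1}\bigr)=\overline{\operatorname{Tr}B}.
\]
The last equality holds because for an antiunitary \(U\) and an orthonormal basis \(\{u_i\}\), the vectors \(\{Uu_i\}\) again form an orthonormal basis, and \(\langle Uu_i|UBU^{-1}Uu_i\rangle=\overline{\langle u_i|Bu_i\rangle}\).

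Both sides of the identity are conjugate-linear maps on operators. On products they agree:
\[
\tau_m(A)\,\overline{\operatorname{Tr}B}=\tau_m\bigl(A\operatorname{Tr}B\bigr),
\]
using the conjugate linearity of \(\tau_m\). Extending by conjugate linearity gives the identity for all \(X\). Hermiticity is irrelevant here, and complete positivity of the partial trace is standard.

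The only subtle point is parity. When \(k\) is odd, the input and output arities differ in parity, so no raw CFK Kraus representation exists, as shown by the arity-parity obstruction. Proposition~\ref{prop:sm-aq-free-cov} already handles this case by appending the free ancilla \(I/d\), so no separate argument is needed.

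Finally, a general collection of elementary systems is reduced to the last-\(k\) case by composing with a free permutation, and \(\mathcal O\) is closed under sequential composition. This proves the statement.
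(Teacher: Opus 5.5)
Your proposal is correct and follows the paper's proof. You establish covariance of the partial trace via Eq.~\eqref{eq:sm-partial-trace-cov} and then invoke Proposition~\ref{prop:sm-aq-free-cov}, which already handles the odd-parity case through the free ancilla. Your direct verification of the covariance identity is a useful addition. The reduction to the last $k$ systems via permutations is harmless but unnecessary, since the same product-operator argument gives $\operatorname{Tr}_R\circ\tau_{m+r}=\tau_m\circ\operatorname{Tr}_R$ directly for any subset $R$.
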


\begin{proof}
Equation~\eqref{eq:sm-partial-trace-cov} gives
\[
\operatorname{Tr}_R\circ\tau_{m+r}
=
\tau_m\circ\operatorname{Tr}_R .
\]
Thus the partial trace is covariant.
Proposition~\ref{prop:sm-aq-free-cov} therefore implies that it is free.
\end{proof}

{\subsection{Isotropic subspaces and reusable simulation}}

Define
\[
\Omega(x,y)
:=
\langle\Theta x|y\rangle .
\]
Because \(\Theta\) is antiunitary, \(\Omega\) is complex bilinear; by
Eq.~\eqref{eq:sm-kramers-orthogonality} it is alternating and
nondegenerate.
A subspace
\(S\subseteq\mathcal H\)
is \emph{isotropic} when
\[
\Omega(S,S)=0,
\]
equivalently,
\[
S\perp\Theta S.
\]
{
Every isotropic subspace has complex dimension at most
\(n=d/2\).
We call an \(n\)-dimensional isotropic subspace \(L\) a
\emph{maximal isotropic subspace} (equivalently, a Lagrangian
subspace). Any isotropic subspace can be extended to such an \(L\)
by completing Kramers pairs, and then
\[
\mathcal H=L\oplus\Theta L.
\]
For any subspace \(L\subseteq\mathcal H\), we write
\[
\mathcal D(L)
:=
\{\rho\in\mathcal D_1:\operatorname{supp}\rho\subseteq L\}.
\]
}

{
The next proposition is the central operational property of a maximal
isotropic sector.  Once such a subspace \(L\) is fixed, every state
{supported on \(L\) can be used as a reusable resource for simulating an
arbitrary finite-system quantum instrument.  Importantly, the same free
processor works uniformly for all \(\omega\in\mathcal D(L)\): it may
depend on \(L\) and on the target instrument, but not on the particular
resource state, and it returns that resource state unchanged.}
}

\begin{proposition}[Reusable universal simulation on a maximal isotropic sector]
\label{prop:sm-aq-uniform-simulation}
{
Let \(L\subseteq\mathcal H\) be a maximal isotropic subspace, i.e.,
\[
\dim_{\mathbb C}L=n,\qquad L\perp\Theta L.
\]
For every finite \(m\ge1\) and every quantum instrument
\(\{\mathcal N_x:m\to m\}_x\), there exists a free instrument
\(\{\Lambda_{x,L}:m+1\to m+1\}_x\) such that
\begin{equation}
\Lambda_{x,L}(\rho\otimes\omega)
=
\mathcal N_x(\rho)\otimes\omega
\label{eq:sm-aq-reusable-simulation}
\end{equation}
for every
\(\rho\in\mathcal D_m\),
\(\omega\in\mathcal D(L)\),
and \(x\).
}
\end{proposition}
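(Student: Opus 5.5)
The processor I would use is the controlled-Kraus construction recorded in the End Matter. Write each target branch as \(\mathcal N_x(X)=\sum_\mu A_{x\mu}XA_{x\mu}^\dagger\), with \(\sum_{x,\mu}A_{x\mu}^\dagger A_{x\mu}=I\). Let \(P=P_L\) and \(\overline P=P_{\Theta L}\); maximality of \(L\) gives \(P+\overline P=I\) and \(P\overline P=0\). Set \(A^\#_{x\mu}=\Theta_mA_{x\mu}\Theta_m^{-1}\) and
\[
K_{x\mu}=A_{x\mu}\otimes P+A^\#_{x\mu}\otimes\overline P,
\qquad
\Lambda_{x,L}(Y):=\sum_\mu K_{x\mu}YK_{x\mu}^\dagger .
\]
Three things need checking: that this is an instrument, that it is free, and that it acts correctly on every \(\omega\in\mathcal D(L)\).

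\emph{Instrument property.} In \(\sum_{x,\mu}K_{x\mu}^\dagger K_{x\mu}\) the cross terms contain \(P\overline P=0\) and vanish. What remains is
\[
\Bigl(\sum A^\dagger A\Bigr)\otimes P+\Bigl(\sum A^{\#\dagger}A^\#\Bigr)\otimes\overline P .
\]
Antiunitarity gives \(A^{\#\dagger}A^\#=\Theta_m(A^\dagger A)\Theta_m^{-1}\), so the second sum equals \(\Theta_m I\Theta_m^{-1}=I\). The total is therefore \(I\otimes(P+\overline P)=I\).

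\emph{Correct action.} If \(\operatorname{supp}\omega\subseteq L\), then \(P\omega=\omega P=\omega\) and \(\overline P\omega=0\). Hence \(K_{x\mu}(\rho\otimes\omega)K_{x\mu}^\dagger=A_{x\mu}\rho A_{x\mu}^\dagger\otimes\omega\), and summing over \(\mu\) gives Eq.~\eqref{eq:sm-aq-reusable-simulation}. This holds uniformly in \(\omega\), as the statement requires.

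\emph{Freeness.} This is the step I expect to need the most care, because it involves antiunitary bookkeeping on the composite system. By Proposition~\ref{prop:sm-aq-free-cov}, it suffices to show that each branch satisfies \(\Lambda_{x,L}\circ\tau_{m+1}=\tau_{m+1}\circ\Lambda_{x,L}\). I will in fact show the stronger relation \(K_{x\mu}^\#=K_{x\mu}\), where \(\Theta_{m+1}=\Theta_m\otimes\Theta\).

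First, \(\Theta P\Theta^{-1}\) is an orthogonal projector because antiunitary conjugation preserves Hermiticity and idempotence. Its range is \(\Theta L\), so \(\Theta P\Theta^{-1}=\overline P\). Likewise \(\Theta\overline P\Theta^{-1}\) projects onto \(\Theta^2L=L\), so \(\Theta\overline P\Theta^{-1}=P\).

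Second, \((A^\#)^\#=\Theta_m^2A\Theta_m^{-2}=A\), because \(\Theta_m^2=(-1)^mI\) is a real scalar.

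Combining these with the factorization \(\Theta_{m+1}(X\otimes Y)\Theta_{m+1}^{-1}=\Theta_mX\Theta_m^{-1}\otimes\Theta Y\Theta^{-1}\) gives
\[
K^\#=A^\#\otimes\overline P+A\otimes P=K .
\]
Consequently
\[
\tau_{m+1}(KYK^\dagger)=K^\#\,\tau_{m+1}(Y)\,K^{\#\dagger}=K\,\tau_{m+1}(Y)\,K^\dagger,
\]
which is branchwise covariance.

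Alternatively, because input and output sizes are both \(m+1\), Lemma~\ref{lem:sm-aq-same-parity} together with the same-parity CFK argument shows directly that each \(K_{x\mu}\) is a CFK operator, with no ancilla needed. The only subtle points are the antilinearity in the factorization of \(\Theta_{m+1}\) and the identification \(\Theta P\Theta^{-1}=\overline P\), which uses maximality through \(\mathcal H=L\oplus\Theta L\).
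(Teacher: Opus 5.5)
Your proposal is correct and matches the paper's proof: the same controlled-Kraus operators $K_{x\mu}=A_{x\mu}\otimes P+A^\#_{x\mu}\otimes\overline P$, completeness from $P\overline P=0$ and $P+\overline P=I$, freeness from $\Theta_{m+1}K_{x\mu}\Theta_{m+1}^{-1}=K_{x\mu}$ (the paper concludes CFK directly from this identity, i.e.\ your ``alternative'' route), and the correct action from $\omega=P\omega P$, $\overline P\omega=0$. One small correction to your closing remark: $\Theta P_L\Theta^{-1}=P_{\Theta L}$ holds for any subspace, and maximality of $L$ is needed only for $P+\overline P=I$ in the completeness check.
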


\begin{proof}
Let
\(P=P_L\) and
\(\bar P=P_{\Theta L}\), so that
\(P+\bar P=I\) and \(P\bar P=0\).
Choose a Kraus representation
\[
\mathcal N_x(X)
=
\sum_\mu
A_{x\mu}XA_{x\mu}^\dagger,
\qquad
\sum_{x,\mu}
A_{x\mu}^\dagger A_{x\mu}=I_m ,
\]
and define
\[
A_{x\mu}^\#
:=
\Theta_mA_{x\mu}\Theta_m^{-1},
\qquad
K_{x\mu}
:=
A_{x\mu}\otimes P
+
A_{x\mu}^\#\otimes\bar P .
\]
Orthogonality of the two sectors gives
\[
\sum_{x,\mu}
K_{x\mu}^\dagger K_{x\mu}
=
I_m\otimes(P+\bar P)
=
I_{m+1}.
\]
Moreover,
\((A_{x\mu}^\#)^\#=A_{x\mu}\)
and
\(\tau_1(P)=\bar P\), so
\[
\Theta_{m+1}K_{x\mu}\Theta_{m+1}^{-1}
=
K_{x\mu}.
\]
Thus every \(K_{x\mu}\) is CFK.

Set
\[
\Lambda_{x,L}
=
\sum_\mu
\operatorname{Ad}_{K_{x\mu}} .
\]
For
\(\omega\in\mathcal D(L)\),
we have
\(\omega=P\omega P\) and
\(\bar P\omega=\omega\bar P=0\), from which
Eq.~\eqref{eq:sm-aq-reusable-simulation} follows directly.
\end{proof}

{
\begin{corollary}[States in a maximal isotropic sector are alchemical]
\label{cor:sm-aq-lagrangian-alchemical}
Every state
\(\omega\in\mathcal D(L)\)
supported on a maximal isotropic subspace \(L\) is alchemical.
\end{corollary}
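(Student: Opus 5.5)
We verify the two defining conditions (A1) and (A2) of Definition~\ref{def:sm-alchemical} for a fixed \(\omega\in\mathcal D(L)\), after first checking that \(\omega\) is nonfree.

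\emph{Nonfreeness.} Suppose \(\tau_1(\omega)=\omega\). Since \(\operatorname{supp}\omega\subseteq L\), the state \(\tau_1(\omega)=\Theta\omega\Theta^{-1}\) has support \(\Theta\operatorname{supp}\omega\subseteq\Theta L\). Because \(L\perp\Theta L\), a nonzero operator cannot be supported on both subspaces. This contradicts \(\operatorname{Tr}\omega=1\), so \(\omega\notin\mathcal F_1^{\rm Aq}\).

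\emph{Universal simulation (A2).} This follows from Proposition~\ref{prop:sm-aq-uniform-simulation} with \(k=1\). Fix a target \(\{\mathcal N_x:m\to m\}_x\) and take the free instrument \(\Lambda_{x,L}\), which gives \(\Lambda_{x,L}(\rho\otimes\omega)=\mathcal N_x(\rho)\otimes\omega\). We then compose with the partial trace over the resource system. By Corollary~\ref{cor:sm-aq-free-discard} this partial trace is free, and \(\mathcal O\) is closed under sequential composition. The result is a free instrument \(\Phi_x\) with \(\Phi_x(\rho\otimes\omega)=\mathcal N_x(\rho)\), as required by Eq.~\eqref{eq:sm-alchemical-simulation}.

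\emph{Replication (A1).} This is the only step needing a small construction, and we again use Proposition~\ref{prop:sm-aq-uniform-simulation}, this time with the input system being one qudit. The plan is as follows.
\begin{enumerate}
\item Append the free ancilla \(I/d\in\mathcal F_1^{\rm Aq}\). Free-state preparation lies in \(\mathcal O\), and tensoring with the identity keeps the operation free.
\item Apply the proposition with \(m=1\) and target channel \(\mathcal N(X)=\operatorname{Tr}(X)\,\omega\), the replacement channel preparing \(\omega\). This channel is a legitimate one-outcome instrument \(1\to1\), so the proposition yields a free channel with \((I/d)\otimes\omega\mapsto\omega\otimes\omega\).
\end{enumerate}
Composing these steps gives a free channel \(\mathcal R_\omega\in\mathcal O(1\to2)\) with \(\mathcal R_\omega(\omega)=\omega^{\otimes2}\), up to a permutation of elementary systems, which is also free.

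The main point of care is that the processor \(\Lambda_{x,L}\) may depend on the target instrument. Here the target \(\mathcal N\) itself depends on \(\omega\), but this is allowed, since (A1) only asks for some free channel for the given \(\omega\). Uniformity over \(\mathcal D(L)\) is not needed for this corollary. The remaining task is to confirm that the proposition places no restriction on \(\mathcal N_x\), such as \(\Theta\)-covariance. Its proof builds \(K_{x\mu}\) from an arbitrary Kraus decomposition, so there is no such restriction, and with that the corollary follows.
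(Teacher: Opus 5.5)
Your proposal is correct and follows the paper's proof essentially step for step. You get nonfreeness from $L\perp\Theta L$ separating $\operatorname{supp}\omega$ from $\Theta\operatorname{supp}\omega$, one-copy simulation from the reusable processor of Proposition~\ref{prop:sm-aq-uniform-simulation} composed with the free partial trace of Corollary~\ref{cor:sm-aq-free-discard}, and replication by feeding the free data state $I/d$ into the processor compiled for the replacement channel $X\mapsto\operatorname{Tr}(X)\,\omega$.
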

}

\begin{proof}
Proposition~\ref{prop:sm-aq-uniform-simulation} gives, for every
finite-system target instrument, a free reusable processor satisfying
\[
\Lambda_{x,L}(\rho\otimes\omega)
=
\mathcal N_x(\rho)\otimes\omega .
\]
{Composing each branch with the free partial trace over the resource}
register, Corollary~\ref{cor:sm-aq-free-discard}, gives
\[
\widetilde\Lambda_{x,L}(\rho\otimes\omega)
=
\mathcal N_x(\rho),
\]
which is precisely the universal-simulation condition of
{Definition~\ref{def:sm-alchemical}, with one resource copy.}

It remains to verify replication.
Choose as target channel the replacement channel
\[
\mathcal N_\omega(X)=\operatorname{Tr}(X)\omega .
\]
Since \(I/d\in\mathcal F_1^{\rm Aq}\), define
\[
\mathcal R_\omega(X)
:=
\Lambda_{\mathcal N_\omega,L}(I/d\otimes X).
\]
The addition of the free data state and the processor are both free,
so \(\mathcal R_\omega\) is a free channel.
For \(X=\omega\),
\[
\mathcal R_\omega(\omega)
=
\omega\otimes\omega.
\]

Finally, \(\omega\) is nonfree.
If \(\tau_1(\omega)=\omega\), then
\[
\operatorname{supp}\omega
=
\Theta\,\operatorname{supp}\omega.
\]
The left-hand side is contained in \(L\), while the right-hand side is
contained in \(\Theta L\).
Since \(L\perp\Theta L\), this is impossible for a nonzero density
operator.
Hence \(\omega\notin\mathcal F_1^{\rm Aq}\), and \(\omega\) is
alchemical.
\end{proof}

\subsection{Characterization of alchemical states in aquaternionicity}

\begin{proposition}[Alchemical support criterion]
\label{prop:sm-aq-support-criterion}
For a single-system state \(\omega\),
\begin{equation}
\omega\in{\mathcal A^{\rm Aq}}
\quad\Longleftrightarrow\quad
\operatorname{supp}\omega
\perp
\Theta\,\operatorname{supp}\omega .
\label{eq:sm-aq-support-criterion}
\end{equation}
Equivalently,
\[
{\mathcal A^{\rm Aq}}
=
{\bigcup_{L\ {\rm maximal\ isotropic}}}
\mathcal D(L).
\]
In particular, every alchemical state has rank at most
\(d/2\).
\end{proposition}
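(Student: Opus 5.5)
The plan is to prove the two directions of the support criterion separately, and then read off the union description and the rank bound.

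\emph{Sufficiency.} Suppose $S:=\operatorname{supp}\omega$ satisfies $S\perp\Theta S$. Then $S$ is isotropic, so, by the extension statement in the subsection on isotropic subspaces, it is contained in some maximal isotropic subspace $L$. Hence $\omega\in\mathcal D(L)$, and Corollary~\ref{cor:sm-aq-lagrangian-alchemical} shows that $\omega$ is alchemical. This also gives the inclusion $\bigcup_L\mathcal D(L)\subseteq\mathcal A^{\rm Aq}$. Conversely, every $\omega\in\mathcal D(L)$ has its support inside $L$, which is isotropic, so the right-hand side of the equivalence coincides with $\bigcup_L\mathcal D(L)$. The rank bound then follows from $\dim_{\mathbb C}L=n=d/2$.

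\emph{Necessity.} This is the main step. Let $\omega\in\mathcal A^{\rm Aq}$ and $S=\operatorname{supp}\omega$. I will use the same mechanism as in Lemma~\ref{lem:sm-effect-support} and Theorem~\ref{thm:sm-universal-mixing-bound}.
\begin{enumerate}
\item By state universality, for every pure $g=\ket{\psi}\!\bra{\psi}$ there is a free channel $\Lambda$ with $\Lambda(\omega)=g$.
\item Lemma~\ref{lem:sm-effect-support} gives $E:=\Lambda^\dagger(g)\ge P_S$.
\item By Proposition~\ref{prop:sm-aq-free-cov}, $\Lambda$ is $\tau_1$-covariant. Covariance passes to the dual map, so $\Lambda^\dagger$ is covariant as well (using that $\tau_1$ preserves the Hilbert--Schmidt pairing up to complex conjugation, which is harmless for Hermitian operators). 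Therefore $\Lambda^\dagger(\tau_1(g))=\tau_1(E)\ge\tau_1(P_S)=P_{\Theta S}$.
\item Since $\tau_1(g)=\ket{\Theta\psi}\!\bra{\Theta\psi}$ is orthogonal to $g$ by Kramers orthogonality, we have $g+\tau_1(g)\le I$. Unitality of $\Lambda^\dagger$ then gives $E+\tau_1(E)\le I$, hence
\[
P_S+P_{\Theta S}\le I .
\]
\item A sum of two projections is bounded by $I$ exactly when they are orthogonal. Indeed, for $x\in S$ we get $\langle x|P_{\Theta S}|x\rangle\le0$, so $P_{\Theta S}x=0$. This yields $S\perp\Theta S$.
\end{enumerate}

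The one point needing care is the covariance of $\Lambda^\dagger$ and the identity $\tau_1(P_S)=P_{\Theta S}$. For the first, I will check $\operatorname{Tr}[\tau(X)Y]=\overline{\operatorname{Tr}[X\tau(Y)]}$ for the antiunitary conjugation $\tau$, and then use Hermiticity to conclude $\Lambda^\dagger\circ\tau_1=\tau_1\circ\Lambda^\dagger$. The second follows because $\Theta P_S\Theta^{-1}$ is a projection whose range is $\Theta S$. If the duality argument becomes awkward, the fallback is to argue directly on states: apply $\Lambda$ to $\tau_1(\omega)$ and to convex mixtures of $\omega$ and $\tau_1(\omega)$, and use the observation before Proposition~\ref{prop:sm-mixing-rate-independence}.
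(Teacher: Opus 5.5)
Your proposal is correct. The sufficiency direction, the union description and the rank bound match the paper's proof: you extend $S$ to a maximal isotropic $L$ and apply the corollary on $\mathcal D(L)$.

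For necessity your route differs from the paper's in form, though not in idea. The paper works in the Schr\"odinger picture. Covariance gives $\Lambda(\tau_1(\omega))=\tau_1(g)$, and Kramers orthogonality gives $g\perp\tau_1(g)$. Contractivity of the trace distance then forces $\tfrac12\|\omega-\tau_1(\omega)\|_1=1$, and the paper uses the standard fact that unit trace distance is equivalent to orthogonal supports.

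You instead work in the Heisenberg picture. You replace those two trace-norm facts with the paper's own effect-support lemma, and you exhibit the perfectly discriminating effect $E=\Lambda^\dagger(g)$ explicitly. The cost is the extra check that $\Lambda^\dagger$ commutes with the antiunitary conjugation. You handle this correctly via $\operatorname{Tr}[\tau(X)Y]=\overline{\operatorname{Tr}[X\tau(Y)]}$ and Hermiticity. In step 3 you also use, implicitly, that $\tau_1$ is order preserving; this holds because $\tau_1$ maps positive operators to positive operators.

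That dual-covariance step can be dropped entirely. Schr\"odinger covariance alone gives $\operatorname{Tr}[E\,\tau_1(\omega)]=\operatorname{Tr}[g\,\Lambda(\tau_1(\omega))]=\operatorname{Tr}[g\,\tau_1(g)]=|\langle\psi|\Theta\psi\rangle|^2=0$. Combined with $E\ge P_S$, this forces $\operatorname{Tr}[P_S\,\tau_1(\omega)]=0$, i.e. $\Theta S=\operatorname{supp}\tau_1(\omega)\perp S$. You then no longer need to pass through $P_S+P_{\Theta S}\le I$.
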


\begin{proof}
Suppose first that \(\omega\) is alchemical.
{By state universality, choose a free channel}
\(\Lambda\) mapping \(\omega\) to an arbitrary pure state
\(g=\ket{\psi}\!\bra{\psi}\).
By Proposition~\ref{prop:sm-aq-free-cov},
\[
\Lambda(\tau_1(\omega))
=
\tau_1(g).
\]
Kramers orthogonality gives
\(g\perp\tau_1(g)\), and therefore
\[
\frac12
\|g-\tau_1(g)\|_1
=
1.
\]
Contractivity of trace distance under \(\Lambda\) yields
\[
1
\le
\frac12
\|\omega-\tau_1(\omega)\|_1
\le1.
\]
{
Hence equality holds.  Unit trace distance is equivalent to perfect
distinguishability, i.e., orthogonality of the supports, and therefore
\[
\operatorname{supp}\omega
\perp
\operatorname{supp}\tau_1(\omega)
=
\Theta\,\operatorname{supp}\omega.
\]
}

Conversely, suppose
\(S=\operatorname{supp}\omega\)
is isotropic.
{Extend \(S\) to a maximal isotropic subspace \(L\supseteq S\).}
Thus
\(\omega\in\mathcal D(L)\), and
Corollary~\ref{cor:sm-aq-lagrangian-alchemical}
implies
\(\omega\in{\mathcal A^{\rm Aq}}\).

Finally, \(S\perp\Theta S\) and
\(\dim_{\mathbb C}S=\dim_{\mathbb C}\Theta S\) imply
\(2\dim_{\mathbb C}S\le d\), giving the rank bound.
{The union representation follows from the isotropic-extension property established above.}
\end{proof}

{
\begin{corollary}[Every pure state is alchemical]
\label{cor:sm-aq-all-pure-alchemical}
Every pure state in \(\mathcal P_1\) is alchemical in
aquaternionicity.
\end{corollary}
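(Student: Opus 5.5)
The plan is to read this off directly from Proposition~\ref{prop:sm-aq-support-criterion}, with Kramers orthogonality, Eq.~\eqref{eq:sm-kramers-orthogonality}, supplying the only input that is specific to pure states.

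Take a pure state $g=\ket{\psi}\!\bra{\psi}\in\mathcal P_1$. Its support is the one-dimensional subspace $S=\mathbb C\psi$. Because $\Theta$ is antiunitary, it maps complex lines to complex lines, so $\Theta S=\mathbb C\,\Theta\psi$. For arbitrary scalars $a,b\in\mathbb C$, the inner product $\langle a\Theta\psi|b\psi\rangle$ equals a scalar multiple of $\langle\Theta\psi|\psi\rangle$. By Eq.~\eqref{eq:sm-kramers-orthogonality}, this quantity vanishes. Therefore $S\perp\Theta S$, which says $S$ is isotropic. Equivalently, $\Omega(\psi,\psi)=0$ because $\Omega$ is alternating.

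The support criterion, Eq.~\eqref{eq:sm-aq-support-criterion}, then gives $g\in\mathcal A^{\rm Aq}$ directly. Alternatively, one can extend $S$ to a maximal isotropic subspace $L\ni\psi$ and apply Corollary~\ref{cor:sm-aq-lagrangian-alchemical}. The extension is possible by completing Kramers pairs: pick successive vectors orthogonal to the span of the chosen vectors and their $\Theta$-images, which works because $d=2n$. Nonfreeness is already contained in that corollary, and it can also be seen directly, since $\tau_1(g)=\proj{\Theta\psi}$ is orthogonal to $g$ and so cannot equal it.

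I expect no real obstacle. The only point needing care is that orthogonality of $\psi$ and $\Theta\psi$ must extend to the full complex lines they span. This holds because $\Theta$ is conjugate-linear, so $\Theta(c\psi)=\bar c\,\Theta\psi$ still lies on the line $\mathbb C\,\Theta\psi$.
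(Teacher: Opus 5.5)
Your proposal is correct and matches the paper's proof: Kramers orthogonality makes the one-dimensional support $\mathbb C\psi$ isotropic, and the support criterion of Proposition~\ref{prop:sm-aq-support-criterion} then gives alchemicality. Your extra remarks on the conjugate-linear extension to full complex lines, the Lagrangian-extension route, and the direct nonfreeness check are all correct, but they only unpack what that criterion already contains.
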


\begin{proof}
For every unit vector \(\psi\), Kramers orthogonality gives
\[
\langle\Theta\psi|\psi\rangle=0.
\]
Hence the one-dimensional support
\(\mathbb C\psi\) is isotropic, and
Proposition~\ref{prop:sm-aq-support-criterion} applies.
\end{proof}
}

\subsection{Saturation of the universal mixing bound}

\begin{proposition}[Uniform saturation]
\label{prop:sm-aq-saturation}
Let
\(g=\ket{\psi}\!\bra{\psi}\)
be any pure state.
Then
\begin{equation}
\mathcal M_g^{\rm Aq}
=
\mathcal P((\Theta\psi)^\perp),
\qquad
\dim\mathcal M_g^{\rm Aq}
=
2d-4.
\label{eq:sm-aq-saturation}
\end{equation}
More generally,
\[
\widehat{\mathcal M}_g^{\rm Aq}
=
{\mathcal A^{\rm Aq}}
\cap
\mathcal D((\Theta\psi)^\perp).
\]
\end{proposition}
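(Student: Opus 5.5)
The plan is to prove the mixed-state identity \(\widehat{\mathcal M}_g^{\rm Aq}=\mathcal A^{\rm Aq}\cap\mathcal D((\Theta\psi)^\perp)\) first, using the support criterion of Proposition~\ref{prop:sm-aq-support-criterion}, and then restrict to pure states. By Proposition~\ref{prop:sm-mixing-rate-independence} it suffices to fix one \(t\in(0,1)\). For \(\sigma\in\mathcal D_1\), the mixture \(\rho=(1-t)g+t\sigma\) has support \(R=\mathbb C\psi+\operatorname{supp}\sigma\). By Proposition~\ref{prop:sm-aq-support-criterion}, \(\rho\in\mathcal A^{\rm Aq}\) if and only if \(R\perp\Theta R\), i.e.\ \(\Omega(R,R)=0\).

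Since \(\Omega\) is complex bilinear, \(\Omega(R,R)=0\) splits into three conditions:
\begin{itemize}
\item \(\Omega(\psi,\psi)=0\), which holds automatically by Kramers orthogonality;
\item \(\Omega(\psi,x)=\langle\Theta\psi|x\rangle=0\) for all \(x\in\operatorname{supp}\sigma\), together with \(\Omega(x,\psi)=0\), which is the same condition up to sign because \(\Omega\) is alternating;
\item \(\Omega(\operatorname{supp}\sigma,\operatorname{supp}\sigma)=0\).
\end{itemize}
The second condition says \(\operatorname{supp}\sigma\subseteq(\Theta\psi)^\perp\), i.e.\ \(\sigma\in\mathcal D((\Theta\psi)^\perp)\). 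The third says \(\sigma\in\mathcal A^{\rm Aq}\), again by the support criterion. Every step is reversible, so the mixed-state identity follows.

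For pure \(h=\ket{\phi}\!\bra{\phi}\), the third condition is automatic by Kramers orthogonality (Corollary~\ref{cor:sm-aq-all-pure-alchemical}). Hence \(\mathcal M_g^{\rm Aq}=\mathcal P((\Theta\psi)^\perp)\).

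The dimension count is immediate. Since \(\Theta\psi\neq0\), the space \((\Theta\psi)^\perp\) has complex dimension \(d-1\), and its projective space \(\mathbb{CP}^{d-2}\) has real dimension \(2d-4\). This also confirms saturation of Theorem~\ref{thm:sm-universal-mixing-bound}.

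The main obstacle is only bookkeeping. One must state precisely that isotropy of a sum of subspaces decomposes into the pairwise conditions; this uses bilinearity of \(\Omega\), together with \(\Omega(y,x)=-\Omega(x,y)\) to avoid double-counting the cross term. One must also note that \(\psi\in(\Theta\psi)^\perp\), so \(g\) itself lies in \(\mathcal M_g^{\rm Aq}\), consistent with the definition.
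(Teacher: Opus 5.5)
Your proof is correct and follows the same route as the paper's. Both reduce alchemicality of the mixture to isotropy of its support $\mathbb C\psi+\operatorname{supp}\sigma$ via the support criterion, note that the self-conditions hold automatically by Kramers orthogonality, and are left with $\operatorname{supp}\sigma\subseteq(\Theta\psi)^\perp$, together with isotropy of $\operatorname{supp}\sigma$ in the mixed case. The differences are only cosmetic: you treat the mixed case first and spell out the bilinear decomposition explicitly. Your appeal to Proposition~\ref{prop:sm-mixing-rate-independence} is harmless but unnecessary, since the support of $(1-t)g+t\sigma$ is the same for every $t\in(0,1)$.
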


\begin{proof}
Let
\(h=\ket{\phi}\!\bra{\phi}\).
For \(0<t<1\),
\[
\operatorname{supp}[(1-t)g+th]
=
\operatorname{span}_{\mathbb C}\{\psi,\phi\}.
\]
By Proposition~\ref{prop:sm-aq-support-criterion}, this mixture is
alchemical if and only if this two-dimensional support is isotropic.

The self-conditions
\[
\langle\Theta\psi|\psi\rangle=0,
\qquad
\langle\Theta\phi|\phi\rangle=0
\]
hold automatically by Kramers orthogonality.
The only remaining condition is therefore
\[
\langle\Theta\psi|\phi\rangle=0.
\]
Hence
\[
\mathcal M_g^{\rm Aq}
=
\mathcal P((\Theta\psi)^\perp).
\]
The subspace
\((\Theta\psi)^\perp\)
has complex dimension \(d-1\), so its pure-state space is
\(\mathbb{CP}^{d-2}\), of real dimension
\(2d-4\).

For a mixed state \(\sigma\), the same support criterion shows that
\((1-t)g+t\sigma\)
is alchemical exactly when
\(\operatorname{supp}\sigma\) is isotropic and contained in
\((\Theta\psi)^\perp\), which gives the final statement.
\end{proof}

\section{Characterization of single-system aquaternionicity}
\label{sec:sm-characterization}

We now prove the converse direction of the main result.
Throughout this section, consider a nontrivial resource theory
satisfying the assumptions of Sec.~\ref{sec:sm-framework} and suppose
that

\begin{enumerate}
\item[(U1)]
every pure state \(g\in\mathcal P_1\) is alchemical;

\item[(U2)]
for every pure state \(g\),
\begin{equation}
\dim\mathcal M_g=2d-4 .
\label{eq:sm-uniform-saturation}
\end{equation}
\end{enumerate}

{By (U1) and state universality, every pair of pure states can be freely converted into one another.}

\subsection{Uniform maximal overlap and the partner map}

{
We now show that the maximal free overlap \(s_g\) introduced in the
proof of Theorem~\ref{thm:sm-universal-mixing-bound} is independent of
the pure state \(g\).
}

\begin{lemma}[Uniform maximal free overlap]
\label{lem:sm-constant-overlap}
There exists a constant \(0<c<1\) such that
\[
s_g=c
\qquad
(\forall g\in\mathcal P_1).
\]
Moreover, every free state \(\sigma\in\mathcal F_1\) satisfies
\[
\sigma\le cI .
\]
\end{lemma}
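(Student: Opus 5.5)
The plan is to use (U1) together with monotonicity of the maximal free overlap under free channels. For two pure states $g,h$, (U1) and state universality give a free channel $\Lambda$ with $\Lambda(g)=h$. I will show that $s_h\ge s_g$ and then apply the same argument with $g$ and $h$ swapped.

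To get the inequality, take an optimizer $\sigma_*$ for $g$ and set $E=\Lambda^\dagger(h)$. Lemma~\ref{lem:sm-effect-support}, applied with $\rho=g$ and support $\mathbb C\psi$, gives $E\ge g$. Since free channels preserve free states, $\Lambda(\sigma_*)\in\mathcal F_1$. Therefore
\[
s_h\ge \operatorname{Tr}[h\Lambda(\sigma_*)]=\operatorname{Tr}(E\sigma_*)\ge \operatorname{Tr}(g\sigma_*)=s_g .
\]
Symmetry then yields $s_g=s_h=:c$ for all pure $g$. This is the same mechanism as in the proof of Theorem~\ref{thm:sm-universal-mixing-bound}; the only change is that the target $h$ is now an arbitrary pure state rather than $g$ itself.

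The bounds on $c$ are direct. We have $c<1$ because every pure state is nonfree by (U1), and alchemical states are by definition nonfree, as noted in the universal-bound proof. We have $c>0$ because $\mathcal F_1$ is nonempty: for any $\sigma\in\mathcal F_1$, choose a unit vector $\psi$ in its support, so that $s_{\psi}\ge\langle\psi|\sigma|\psi\rangle>0$.

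For the operator inequality, take any $\sigma\in\mathcal F_1$ and any unit vector $\psi$. By definition of $s_g$ for $g=\ket\psi\!\bra\psi$,
\[
\langle\psi|\sigma|\psi\rangle=\operatorname{Tr}(g\sigma)\le s_g=c .
\]
Because this holds for every unit vector, the largest eigenvalue of $\sigma$ is at most $c$, i.e. $\sigma\le cI$.

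There is no real obstacle here. The one point to check is that the effect bound $E\ge P_S$ applies when the input $g$ is pure, so that $S=\mathbb C\psi$ and $P_S=g$. This holds because the lemma is stated for an arbitrary input state $\rho$. Note also that (U2) is not needed for this lemma.
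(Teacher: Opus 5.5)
Your proof is correct and follows the paper's argument essentially step for step: the effect bound $\Lambda^\dagger(h)\ge g$ from Lemma~\ref{lem:sm-effect-support} gives $s_h\ge s_g$, mutual free convertibility gives equality, and $\sigma\le cI$ follows by taking $g=\ket{\psi}\!\bra{\psi}$ for an arbitrary unit vector $\psi$. The only cosmetic difference is in the bound $c<1$: you obtain it directly from the nonfreeness of a pure alchemical state, whereas the paper argues that $c=1$ would make every pure state free and hence force $\mathcal F_1=\mathcal D_1$ by convexity, and both arguments are valid.
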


\begin{proof}
Suppose that a free channel \(\Lambda\) maps a pure state \(g\) to a
pure state \(h\).
Lemma~\ref{lem:sm-effect-support}, applied to
\(\Lambda(g)=h\), gives
\(\Lambda^\dagger(h)\ge g\).
Hence, for every free state \(\sigma\),
\[
s_h
\ge
\operatorname{Tr}[h\Lambda(\sigma)]
=
\operatorname{Tr}[\Lambda^\dagger(h)\sigma]
\ge
\operatorname{Tr}(g\sigma).
\]
Maximizing over \(\sigma\in\mathcal F_1\) gives
\(s_h\ge s_g\).
Since all pure states are mutually freely convertible, the reverse
inequality also holds, so \(s_g=c\) is independent of \(g\).

Nonemptiness of \(\mathcal F_1\) implies \(c>0\).
If \(c=1\), every pure state would be free, and convexity would imply
\(\mathcal F_1=\mathcal D_1\), contradicting nontriviality.
Thus \(0<c<1\).

Finally, for every unit vector \(\psi\) and every
\(\sigma\in\mathcal F_1\),
\(\langle\psi|\sigma|\psi\rangle\le c\).
Therefore \(\sigma\le cI\).
\end{proof}

\begin{proposition}[Unique partner state]
\label{prop:sm-partner}
For every pure state \(g\), there exists a unique pure state
\(\kappa(g)\) orthogonal to \(g\) such that
\begin{equation}
{
\operatorname*{arg\,max}_{\sigma\in\mathcal F_1}
\operatorname{Tr}(g\sigma)
}
=
\{\sigma_g^*\},
\qquad
\sigma_g^*
=
cg+(1-c)\kappa(g).
\label{eq:sm-partner-optimizer}
\end{equation}
Moreover,
\[
c\ge\frac12 .
\]
\end{proposition}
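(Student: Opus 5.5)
The plan is to extract everything from the fixed-$t$ estimate \eqref{eq:sm-fixed-t-bound} obtained in the proof of Theorem~\ref{thm:sm-universal-mixing-bound}, combined with (U2) and with the operator bound $\sigma\le cI$ from Lemma~\ref{lem:sm-constant-overlap}.

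\textbf{Step 1 (rank-one compression).} Fix a pure $g=\ket{\psi}\!\bra{\psi}$ and let $\sigma_*$ be any optimizer of $\operatorname{Tr}(g\sigma)$ over $\mathcal F_1$. By Lemma~\ref{lem:sm-constant-overlap}, the optimal value is $s_g=c$. The proof of Theorem~\ref{thm:sm-universal-mixing-bound} holds for every optimizer and gives
\[
\dim\mathcal M_g\le\dim\mathcal M_{g,t}\le 2(d-\operatorname{rank}C_g)-2,
\]
with $C_g=Q_g\sigma_*Q_g|_{\psi^\perp}$. Assumption (U2) states $\dim\mathcal M_g=2d-4$, which forces $\operatorname{rank}C_g\le1$. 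Since $\operatorname{Tr}C_g=1-c>0$, this means $\operatorname{rank}C_g=1$. Hence
\[
Q_g\sigma_*Q_g=(1-c)\ket{\chi}\!\bra{\chi}
\]
for some unit vector $\chi\perp\psi$.

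\textbf{Step 2 (form of the optimizer and $c\ge 1/2$).} Any vector $v\perp\psi,\chi$ satisfies $\langle v|\sigma_*|v\rangle=\langle v|Q_g\sigma_*Q_g|v\rangle=0$. Positivity of $\sigma_*$ then gives $\sigma_*v=0$. Therefore $\sigma_*$ is supported on $\operatorname{span}\{\psi,\chi\}$, where it has the matrix form
\[
\begin{pmatrix} c & z\\ \bar z & 1-c\end{pmatrix}.
\]
Lemma~\ref{lem:sm-constant-overlap} gives $\sigma_*\le cI$. Restricted to this two-dimensional block, this says
\[
\begin{pmatrix}0&-z\\-\bar z&2c-1\end{pmatrix}\ge0 .
\]
A positive semidefinite matrix with a vanishing diagonal entry has vanishing off-diagonal entries in that row, so $z=0$, and the remaining diagonal entry gives $2c-1\ge0$. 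Thus every optimizer has the form $\sigma_*=cg+(1-c)\ket{\chi}\!\bra{\chi}$ with $\chi\perp\psi$, and $c\ge\frac12$.

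\textbf{Step 3 (uniqueness).} Suppose there are two optimizers $cg+(1-c)\ket{\chi_1}\!\bra{\chi_1}$ and $cg+(1-c)\ket{\chi_2}\!\bra{\chi_2}$. The objective is linear and $\mathcal F_1$ is convex, so their midpoint is also an optimizer. By Step 1 applied to the midpoint, its compression $Q_g(\cdot)Q_g$ must have rank one. This compression equals $\frac{1-c}{2}\bigl(\ket{\chi_1}\!\bra{\chi_1}+\ket{\chi_2}\!\bra{\chi_2}\bigr)$, which has rank one only if $\chi_1$ and $\chi_2$ coincide up to a phase. Hence the optimizer is unique, and we set $\kappa(g):=\ket{\chi}\!\bra{\chi}$.

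The only delicate point is Step 1. One must check that the bound $\operatorname{rank}C_g\le1$ holds for an arbitrary optimizer and not merely for one chosen optimizer. This is satisfied because the proof of Theorem~\ref{thm:sm-universal-mixing-bound} explicitly allows any choice of optimizer. The remaining steps are $2\times2$ positivity arguments.
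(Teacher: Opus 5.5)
Your proof is correct. Steps 1 and 2 match the paper's argument. The paper also forces $\operatorname{rank}C_g=1$ from Eq.~\eqref{eq:sm-fixed-t-bound} and (U2). It then uses $cI-\sigma^*\ge0$ together with $\operatorname{Tr}[g(cI-\sigma^*)]=0$ to get $\sigma^*\psi=c\psi$, which is your $z=0$. Finally it reads off $c\ge 1/2$ from the second eigenvalue, $1-c\le c$, which is your $2c-1\ge0$.

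The genuine difference is the uniqueness step. The paper applies the common-support inclusion \eqref{eq:sm-common-support} separately to two putative optimizers $cg+(1-c)h_i$. For each, the subspace $V_g$ is exactly $h_i^\perp$, so $\mathcal M_g\subseteq\mathcal P(h_1^\perp\cap h_2^\perp)$. This set has dimension at most $2d-6$, contradicting (U2) when $d\ge3$; the case $d=2$ is disposed of separately, since the ray orthogonal to $g$ is unique there. You instead use convexity of $\mathcal F_1$: the midpoint of two optimizers is again an optimizer, so Step 1 forces its compression $\tfrac{1-c}{2}\bigl(\ket{\chi_1}\!\bra{\chi_1}+\ket{\chi_2}\!\bra{\chi_2}\bigr)$ to have rank one.

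Your route is more economical. It reuses only the rank consequence of the mixing bound, needs no identification of $V_g$, and works uniformly in $d$ with no case split. The paper's route relies only on the support inclusion for each optimizer individually. It also makes explicit that $\mathcal M_g$ is confined to $\mathcal P(\kappa(g)^\perp)$, i.e., the partner is precisely the excluded direction. This is the structure later realized as $\mathcal M_g^{\rm Aq}=\mathcal P((\Theta\psi)^\perp)$.
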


\begin{proof}
Choose any optimizer \(\sigma^*\) for \(g\).
{Using the notation introduced in the proof of Theorem~\ref{thm:sm-universal-mixing-bound},}
\[
C_g
=
\left.
Q_g\sigma^*Q_g
\right|_{\psi^\perp},
\qquad
g=\ket{\psi}\!\bra{\psi}.
\]
By Eq.~\eqref{eq:sm-fixed-t-bound} and the saturation assumption
\eqref{eq:sm-uniform-saturation},
\[
2d-4
=
\dim\mathcal M_g
\le
2(d-\operatorname{rank}C_g)-2 .
\]
Since \(C_g\neq0\), we must have
\(\operatorname{rank}C_g=1\).

Lemma~\ref{lem:sm-constant-overlap} gives
\(cI-\sigma^*\ge0\), while
\(\operatorname{Tr}[g(cI-\sigma^*)]=0\).
Hence
\(\sigma^*\psi=c\psi\), so the off-diagonal block between
\(\mathbb C\psi\) and \(\psi^\perp\) vanishes.
Because the compression to \(\psi^\perp\) has rank one,
\[
\sigma^*
=
cg+(1-c)h
\]
for some pure state \(h\perp g\).

Suppose two distinct pure states \(h_1,h_2\) occurred in this way.
{Applying the support inclusion~\eqref{eq:sm-common-support} separately to the two optimizers gives}
\[
\mathcal M_g
\subseteq
\mathcal P(h_1^\perp\cap h_2^\perp).
\]
For \(d\ge3\), the right-hand side has dimension at most
\(2d-6\), contradicting
Eq.~\eqref{eq:sm-uniform-saturation}.
For \(d=2\), the ray orthogonal to \(g\) is unique.
Thus the optimizer and \(h=\kappa(g)\) are unique.

Since the second eigenvalue of
\(\sigma_g^*\) is \(1-c\), while every free state is bounded above by
\(cI\), we have \(1-c\le c\), hence \(c\ge1/2\).
\end{proof}

\begin{lemma}[Maximal-overlap inequality on the free cone]
\label{lem:sm-free-cone-overlap}
For every pure state \(g\) and every
\(W\in\mathcal C_1\),
\begin{equation}
\operatorname{Tr}(gW)
\le
c\,\operatorname{Tr}W .
\label{eq:sm-free-cone-overlap}
\end{equation}
\end{lemma}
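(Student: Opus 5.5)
This is a direct rescaling of Lemma~\ref{lem:sm-constant-overlap}, so I will reduce the cone statement to the state statement by normalization. Every $W\in\mathcal C_1$ has the form $W=a\rho$ with $a\ge0$ and $\rho\in\mathcal F_1$, by the definition of the free cone. Since $\rho$ is a density operator, $\operatorname{Tr}W=a$.

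If $a=0$, then $W=0$. Both sides of Eq.~\eqref{eq:sm-free-cone-overlap} vanish, and there is nothing to prove.

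If $a>0$, linearity of the trace gives $\operatorname{Tr}(gW)=a\operatorname{Tr}(g\rho)$. By the definition of $s_g$ as the maximal overlap of $g$ with $\mathcal F_1$, we have $\operatorname{Tr}(g\rho)\le s_g$. Lemma~\ref{lem:sm-constant-overlap} identifies $s_g=c$ for every pure $g$. Hence
\[
\operatorname{Tr}(gW)\le a\,c=c\,\operatorname{Tr}W .
\]
Equivalently, the operator inequality $\rho\le cI$ from the same lemma yields $W\le c(\operatorname{Tr}W)I$. Taking the expectation in $g$ gives the claim.

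There is no real obstacle here; the only point needing care is that the constant $c$ is uniform in $g$. That uniformity is exactly what Lemma~\ref{lem:sm-constant-overlap} supplies under assumption (U1), and it is what makes the bound independent of the choice of pure state. The lemma is presumably recorded in cone form because later arguments will apply it to unnormalized outputs $\Lambda(\sigma)$ or to partial compressions of free operators.
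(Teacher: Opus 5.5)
Your proof is correct and matches the paper's argument: normalize $W=(\operatorname{Tr}W)\sigma$ with $\sigma\in\mathcal F_1$, treat $W=0$ separately, and apply the uniform bound $\operatorname{Tr}(g\sigma)\le s_g=c$ from Lemma~\ref{lem:sm-constant-overlap}. Your alternative remark via $\sigma\le cI$ is equivalent.
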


\begin{proof}
If \(W=0\), the statement is immediate.
Otherwise write
\[
W=(\operatorname{Tr}W)\sigma
\]
with \(\sigma\in\mathcal F_1\).
By Lemma~\ref{lem:sm-constant-overlap},
\[
\operatorname{Tr}(g\sigma)\le c.
\]
Multiplying by \(\operatorname{Tr}W\) gives
Eq.~\eqref{eq:sm-free-cone-overlap}.
\end{proof}

At this stage \(\kappa\) is defined only as a map on pure states; no
linearity or continuity is assumed.

\begin{lemma}[Transport of the partner state]
\label{lem:sm-partner-transport}
{
Let \(\Lambda\) be a free channel and let \(g,h\in\mathcal P_1\)
satisfy
\begin{equation}
\Lambda(g)=h.
\label{eq:sm-partner-transport-assumption}
\end{equation}
Then
\[
\Lambda(\kappa(g))
=
\kappa(h).
\]
}
Moreover, if
\(\mathcal R_g\) is any free replicator for \(g\), then
\begin{equation}
\mathcal R_g(\kappa(g))
=
\kappa(g)^{\otimes2}.
\label{eq:sm-partner-replication}
\end{equation}
\end{lemma}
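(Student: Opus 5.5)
The plan is to exploit the uniqueness of the optimizer in Proposition~\ref{prop:sm-partner}: a free channel sending $g$ to $h$ must send the unique free optimizer $\sigma_g^*$ to a free state that is at least as good an optimizer for $h$, hence to $\sigma_h^*$. Linearity then isolates the partner component. The argument only uses that $\Lambda$ maps $\mathcal F_1$ into $\mathcal F_1$ and that $\Lambda(g)=h$ is pure. I will state it in that generality so that it can be reused for the replication statement.

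\emph{Step 1 (transport).} Apply Lemma~\ref{lem:sm-effect-support} to $\Lambda(g)=h$. This gives $\Lambda^\dagger(h)\ge g$. Since $\sigma_g^*=cg+(1-c)\kappa(g)$ is free, $\Lambda(\sigma_g^*)\in\mathcal F_1$. We then estimate
\[
\operatorname{Tr}[h\Lambda(\sigma_g^*)]=\operatorname{Tr}[\Lambda^\dagger(h)\sigma_g^*]\ge\operatorname{Tr}(g\sigma_g^*)=c .
\]
By Lemma~\ref{lem:sm-constant-overlap}, $s_h=c$, so $\Lambda(\sigma_g^*)$ is a maximizer of the free overlap with $h$. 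By the uniqueness part of Proposition~\ref{prop:sm-partner}, $\Lambda(\sigma_g^*)=\sigma_h^*=ch+(1-c)\kappa(h)$. On the other hand, linearity and $\Lambda(g)=h$ give $\Lambda(\sigma_g^*)=ch+(1-c)\Lambda(\kappa(g))$. Since $c<1$, comparing the two expressions yields $\Lambda(\kappa(g))=\kappa(h)$.

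\emph{Step 2 (replication).} The difficulty here is that $\mathcal R_g$ has a two-system output, while $\kappa$ is defined only on $\mathcal P_1$. The partial trace is not assumed to be a free operation in the general framework, so I will avoid it as an operation and use only (F3). For $i=1,2$, consider the single-system channels $\Lambda_i:=\operatorname{Tr}_{\bar i}\circ\mathcal R_g$. Each $\Lambda_i$ maps $\mathcal F_1$ into $\mathcal F_1$: $\mathcal R_g$ maps free states to free states (states in $\mathcal F_2$), and every marginal of a state in $\mathcal F_2$ is free by (F3).

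Each $\Lambda_i$ also satisfies $\Lambda_i(g)=g$, since $\mathcal R_g(g)=g\otimes g$. The argument of Step 1 therefore applies verbatim with $h=g$. It gives that both one-system marginals of $\mathcal R_g(\kappa(g))$ equal the pure state $\kappa(g)$.

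A bipartite state whose marginal on one factor is pure must be a product with that pure state; hence $\mathcal R_g(\kappa(g))=\kappa(g)\otimes\kappa(g)$, which is Eq.~\eqref{eq:sm-partner-replication}.

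\emph{Main obstacle.} I expect the main obstacle to be precisely this reduction of the two-system replication statement to the single-system transport property. The rest is bookkeeping.
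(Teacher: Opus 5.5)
Your proposal is correct and follows essentially the paper's argument: free-state preservation plus uniqueness of the optimizer $\sigma_h^*$ isolates the partner component, and the replication claim comes from running the same reasoning on the marginals of $\mathcal R_g(\sigma_g^*)$ via (F3), then using that a bipartite state with pure marginals is a product. The only cosmetic difference is that you obtain $\operatorname{Tr}[h\Lambda(\sigma_g^*)]\ge c$ from Lemma~\ref{lem:sm-effect-support}, whereas the paper reads it off directly from $\Lambda(\sigma_g^*)=ch+(1-c)\Lambda(\kappa(g))$, which avoids that detour.
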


\begin{proof}
{
Since \(\sigma_g^*\) is free and \(\Lambda\) is a free channel,
\[
\Lambda(\sigma_g^*)
=
c\,\Lambda(g)+(1-c)\Lambda(\kappa(g))
=
ch+(1-c)\Lambda(\kappa(g))
\]
is again free, where
Eq.~\eqref{eq:sm-partner-transport-assumption} was used in the second
equality.
Its overlap with \(h\) satisfies
\[
\operatorname{Tr}\!\left[
h\,\Lambda(\sigma_g^*)
\right]
=
c+(1-c)\operatorname{Tr}\!\left[
h\,\Lambda(\kappa(g))
\right]
\ge c.
\]
On the other hand, \(c\) is the maximal overlap of any free state
with the pure state \(h\).  Since
\(\Lambda(\sigma_g^*)\in\mathcal F_1\), we also have
\[
\operatorname{Tr}\!\left[
h\,\Lambda(\sigma_g^*)
\right]
\le c.
\]
Therefore equality holds, so
\(\Lambda(\sigma_g^*)\) is a maximizing free state for \(h\).
By Proposition~\ref{prop:sm-partner}, this maximizer is unique and
equals
\[
\sigma_h^*
=
ch+(1-c)\kappa(h).
\]
Comparing this with
\[
\Lambda(\sigma_g^*)
=
ch+(1-c)\Lambda(\kappa(g))
\]
and using \(c<1\), we obtain
\[
\Lambda(\kappa(g))
=
\kappa(h).
\]
}

For the second statement, write
\(X=\mathcal R_g(\kappa(g))\).
Then
\[
\mathcal R_g(\sigma_g^*)
=
cg^{\otimes2}+(1-c)X
\]
is free.
Taking either marginal and using (F3) gives a free state of the form
\[
cg+(1-c)X_i .
\]
Its overlap with \(g\) is at least \(c\), so uniqueness of the
optimizer implies
\[
cg+(1-c)X_i
=
cg+(1-c)\kappa(g),
\]
hence \(X_i=\kappa(g)\) for both marginals.
A bipartite state with both marginals pure must be their product,
giving Eq.~\eqref{eq:sm-partner-replication}.
\end{proof}

\begin{lemma}[Correlated free states]
\label{lem:sm-correlated-free}
For every \(m\ge1\) and every collection of pure states
\(g_1,\ldots,g_m\),
\begin{equation}
\Omega_{g_1,\ldots,g_m}
:=
c\,g_1\otimes\cdots\otimes g_m
+
(1-c)\,
\kappa(g_1)\otimes\cdots\otimes\kappa(g_m)
\in\mathcal F_m .
\label{eq:sm-correlated-free}
\end{equation}
\end{lemma}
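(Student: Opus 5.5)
The plan is to obtain \(\Omega_{g_1,\ldots,g_m}\) as the image of a single optimizer \(\sigma_g^*\) under a free channel, and then to identify the image of the partner state through its marginals, exactly as in the second half of Lemma~\ref{lem:sm-partner-transport}. The case \(m=1\) is just Proposition~\ref{prop:sm-partner}, since \(\Omega_{g_1}=\sigma_{g_1}^*\in\mathcal F_1\). So fix \(m\ge1\), any pure state \(g\in\mathcal P_1\) (alchemical by (U1)), and the pure product target \(G:=g_1\otimes\cdots\otimes g_m\).

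\emph{Step 1 (a free channel \(1\to m\) producing \(G\)).} Use the state-universality consequence of (A1)--(A2) summarized in Sec.~\ref{sec:sm-framework}. Take the replacement instrument \(\mathcal N(X)=\operatorname{Tr}(X)\,G\) on \(m\) systems. This gives a free \(\Phi\) with \(\Phi(\rho\otimes g^{\otimes k})=G\) for all \(\rho\). Now insert a fixed free data state \(f\in\mathcal F_m\) (nonempty by (F1)) as a free ancilla. Precompose with \(k-1\) iterations of the free replicator \(\mathcal R_g\), tensored with identities. The result is a free channel \(\Lambda:1\to m\) with \(\Lambda(g)=G\). Free channels are closed under composition and tensor products, so \(\Lambda\) is free.

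\emph{Step 2 (apply \(\Lambda\) to the optimizer).} Since \(\sigma_g^*=cg+(1-c)\kappa(g)\) is free and \(\Lambda\) preserves free states,
\[
\Lambda(\sigma_g^*)=c\,G+(1-c)\,X\in\mathcal F_m,\qquad X:=\Lambda(\kappa(g)).
\]
It then suffices to show \(X=\kappa(g_1)\otimes\cdots\otimes\kappa(g_m)\).

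\emph{Step 3 (marginals).} For each \(i\), (F3) shows that the \(i\)-th single-system marginal
\[
c\,g_i+(1-c)X_i
\]
is free; here we only use freeness of marginals of free states, not freeness of discarding. Its overlap with \(g_i\) is at least \(c\). By Lemma~\ref{lem:sm-constant-overlap} it is therefore a maximizer, and uniqueness in Proposition~\ref{prop:sm-partner} forces \(X_i=\kappa(g_i)\), which is pure.

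\emph{Step 4 (from pure marginals to a product).} A state on \(m\) systems all of whose single-site marginals are pure equals their tensor product. To see this, purify: if \(\operatorname{Tr}_{\bar i}X=\proj{\phi_i}\), then every vector in the range of \(X\) factorizes as \(\phi_i\otimes(\cdot)\) on site \(i\); iterating over \(i\) yields \(X=\bigotimes_i\kappa(g_i)\). Substituting into Step 2 gives \(\Omega_{g_1,\ldots,g_m}\in\mathcal F_m\).

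The main obstacle is Step 1, namely making sure that preparation of an arbitrary multi-system pure product state from a single copy of \(g\) is genuinely a free channel in the free-ancilla-closed class. This requires a free data state on \(m\) systems and the replicator to supply \(k\) copies. The rest, including Steps 3 and 4, is a direct repetition of the uniqueness argument already used for \(\mathcal R_g(\kappa(g))\).
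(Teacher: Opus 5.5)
Your proof is correct, but it follows a different route from the paper's.

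\emph{The paper's route.} It fixes one pure state \(q\). It first builds \(c\,q^{\otimes m}+(1-c)\kappa(q)^{\otimes m}\in\mathcal F_m\) by iterating a replicator, using the second half of Lemma~\ref{lem:sm-partner-transport}, \(\mathcal R_q(\kappa(q))=\kappa(q)^{\otimes2}\). It then applies a tensor product of single-system free channels \(q\mapsto g_j\), each of which carries \(\kappa(q)\) to \(\kappa(g_j)\) by the first half of the same lemma.

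\emph{Your route.} You use one global preparation channel \(\Lambda:1\to m\) with \(\Lambda(g)=g_1\otimes\cdots\otimes g_m\). This is the reachability construction the paper only introduces later, in the subsection on global pure-state reachability. You then identify \(\Lambda(\kappa(g))\) directly: you rerun the marginal-plus-uniqueness argument on each of the \(m\) sites, and conclude with the fact that a state whose single-site marginals are all pure is their tensor product.

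\emph{Comparison.} The paper's argument is more modular. It uses both halves of Lemma~\ref{lem:sm-partner-transport} as black boxes and needs the pure-marginal argument only in the bipartite case already treated there. Yours is more direct and proves a slightly stronger intermediate fact: every free channel \(1\to m\) sending \(g\) to a pure product state sends \(\kappa(g)\) to the product of the partners. This is a product-state precursor of Lemma~\ref{lem:sm-global-partner-transport}. It cannot reach entangled pure targets, because their marginals are mixed and uniqueness of the optimizer no longer pins down \(\Lambda(\kappa(g))\). That is why the paper establishes the composite upper inclusion before treating general \(h\in\mathcal P_m\).
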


\begin{proof}
Fix a pure state \(q\).
Starting from
\(\sigma_q^*=cq+(1-c)\kappa(q)\)
and repeatedly applying a replicator for \(q\), together with
Lemma~\ref{lem:sm-partner-transport}, produces
\[
cq^{\otimes m}
+
(1-c)\kappa(q)^{\otimes m}
\in\mathcal F_m .
\]
For each \(j\), choose a free channel mapping \(q\) to \(g_j\).
Lemma~\ref{lem:sm-partner-transport} maps
\(\kappa(q)\) simultaneously to \(\kappa(g_j)\).
Applying the tensor product of these channels yields
Eq.~\eqref{eq:sm-correlated-free}.
\end{proof}

\subsection{Measurement duality and linear reconstruction}

\begin{lemma}[Measurement duality]
\label{lem:sm-measurement-duality}
{Fix a pure resource state \(q\).}
For any POVM
\(\{E_x\}_x\), let a free one-copy processor using \(q\) implement an
instrument having this POVM.
{If the same processor with resource state \(\kappa(q)\) induces the POVM}
\(\{E'_x\}_x\), then
\begin{equation}
\operatorname{Tr}[E'_x\kappa(h)]
=
\operatorname{Tr}(E_xh)
\qquad
(\forall h\in\mathcal P_1,\ \forall x).
\label{eq:sm-measurement-duality}
\end{equation}
\end{lemma}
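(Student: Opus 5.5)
The plan is to compare the two processors by feeding them a correlated free state from Lemma~\ref{lem:sm-correlated-free}. Fix the free one-copy processor \(\{\Phi_x\}_x\) for the given instrument. By hypothesis,
\[
\operatorname{Tr}\Phi_x(\rho\otimes q)=\operatorname{Tr}(E_x\rho),
\qquad
\operatorname{Tr}\Phi_x(\rho\otimes\kappa(q))=\operatorname{Tr}(E'_x\rho).
\]
Fix an arbitrary pure \(h\) and an auxiliary pure state \(r\). Lemma~\ref{lem:sm-correlated-free} with \(m=3\) gives
\[
\Omega_{h,q,r}=c\,h\otimes q\otimes r+(1-c)\,\kappa(h)\otimes\kappa(q)\otimes\kappa(r)\in\mathcal F_3 .
\]
The branch \(\Phi_x\otimes\operatorname{id}_1\) is completely free-state preserving. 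Hence
\[
W_x:=(\Phi_x\otimes\operatorname{id})(\Omega_{h,q,r})
=c\,\mathcal N_x(h)\otimes r+(1-c)\,Y_x\otimes\kappa(r)\in\mathcal C_{m'+1},
\]
where \(Y_x:=\Phi_x(\kappa(h)\otimes\kappa(q))\). The last register is included because the output \(\mathcal N_x(h)\) need not be pure, so the single-system overlap bound cannot be applied to it directly.

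Next I take the marginal of \(W_x\) on the last register. This marginal lies in \(\mathcal C_1\): by (F3) applied to the normalized state, together with the cone structure. Explicitly,
\[
\operatorname{Tr}_{\rm out}W_x=c\operatorname{Tr}(E_xh)\,r+(1-c)\operatorname{Tr}(E'_x\kappa(h))\,\kappa(r).
\]
Now apply Lemma~\ref{lem:sm-free-cone-overlap} with the pure state \(r\), using \(r\perp\kappa(r)\) from Proposition~\ref{prop:sm-partner}. Writing \(a_x=\operatorname{Tr}(E_xh)\) and \(b_x=\operatorname{Tr}(E'_x\kappa(h))\), this gives
\[
c\,a_x\le c\bigl[c\,a_x+(1-c)\,b_x\bigr],
\]
that is, \(c(1-c)a_x\le c(1-c)b_x\). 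Since \(0<c<1\) by Lemma~\ref{lem:sm-constant-overlap}, it follows that \(a_x\le b_x\) for every \(x\).

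Finally, both \(\{E_x\}\) and \(\{E'_x\}\) are POVMs, so \(\sum_x a_x=1=\sum_x b_x\). Together with the componentwise inequality \(a_x\le b_x\), this forces \(a_x=b_x\) for all \(x\), which is Eq.~\eqref{eq:sm-measurement-duality}.

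The step requiring the most care is the legitimacy of the marginal argument on unnormalized cone elements. One must check that \(W_x\) is a nonnegative multiple of a free state, which follows from the closure properties of \(\mathcal O\) listed in Sec.~\ref{sec:sm-framework}. One must then check that its partial trace over the output system stays in \(\mathcal C_1\). This is only a state-space statement via (F3); it does not require discarding to be a free operation.

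If \(W_x=0\), both \(a_x\) and \(b_x\) vanish, since both coefficients of \(W_x\) are nonnegative, and the inequality holds trivially. I expect no further obstacle, since the argument uses only the correlated free states and the uniform overlap constant \(c\).
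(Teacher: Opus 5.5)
Your proof is correct and follows essentially the same route as the paper. You feed the correlated free state $\Omega_{h,q,r}$ through the processor, read off the reference marginal as the free-cone element $c\,p_x r+(1-c)\,p'_x\kappa(r)$, apply Lemma~\ref{lem:sm-free-cone-overlap} with $r\perp\kappa(r)$ to obtain $p_x\le p'_x$, and conclude by normalization. The only cosmetic difference is that the paper fixes the concrete instrument $\sqrt{E_x}\,X\sqrt{E_x}$, whereas you work with an arbitrary instrument having this POVM, which fits the lemma's phrasing equally well.
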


\begin{proof}
Choose, for example, the target instrument
\[
\mathcal N_x(X)
=
\sqrt{E_x}\,X\,\sqrt{E_x}.
\]
{By effective one-copy simulation, let}
\(\{\Psi_x\}_x\) be a free processor implementing this instrument with
{resource state \(q\).}
{With resource state \(\kappa(q)\), define the induced POVM}
\(\{E'_x\}_x\).

Take arbitrary pure states \(h,r\).
By Lemma~\ref{lem:sm-correlated-free},
\[
\Omega_{h,q,r}
=
c\,h\otimes q\otimes r
+
(1-c)\,
\kappa(h)\otimes\kappa(q)\otimes\kappa(r)
\]
is free.
Apply the branch \(\Psi_x\) to the first two systems.
By branchwise complete freeness and marginal consistency, the
unnormalized state on the reference system is a free-cone element
\[
W_x
=
c\,p_xr
+
(1-c)\,p'_x\kappa(r),
\]
where
\[
p_x=\operatorname{Tr}(E_xh),
\qquad
p'_x=\operatorname{Tr}[E'_x\kappa(h)].
\]

{
By Lemma~\ref{lem:sm-free-cone-overlap},
\[
\operatorname{Tr}(rW_x)
\le
c\operatorname{Tr}W_x.
\]
Substituting
\[
W_x=cp_xr+(1-c)p'_x\kappa(r)
\]
and using \(r\perp\kappa(r)\) gives \(p_x\le p'_x\).
Since both POVMs are normalized,
\(\sum_xp_x=\sum_xp'_x=1\), and therefore
\(p_x=p'_x\) for every \(x\).
This proves Eq.~\eqref{eq:sm-measurement-duality}.
}
\end{proof}

{
Choose an informationally complete POVM
\(\{E_x\}_{x=1}^{d^2}\)
whose effects form a real basis of the Hermitian operators.
}

{
\begin{proposition}[Linear extension of \(\kappa\)]
\label{prop:sm-order-automorphism}
The map \(\kappa:\mathcal P_1\to\mathcal P_1\) extends uniquely to an
invertible real-linear map
\[
\kappa:\operatorname{Herm}(\mathcal H)\to
\operatorname{Herm}(\mathcal H).
\]
Both \(\kappa\) and \(\kappa^{-1}\) are positive and trace preserving.
Hence the extension is an affine automorphism of the quantum state space.
\end{proposition}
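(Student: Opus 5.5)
The plan is to define the extension through measurement duality, using the fixed informationally complete POVM \(\{E_x\}_{x=1}^{d^2}\) and the fixed pure resource \(q\). By Lemma~\ref{lem:sm-measurement-duality}, the processor that implements \(\{E_x\}\) with resource \(q\) induces a POVM \(\{E'_x\}\) when it is fed \(\kappa(q)\). This POVM does not depend on \(h\), and it satisfies \(\operatorname{Tr}[E'_x\kappa(h)]=\operatorname{Tr}(E_xh)\) for every pure \(h\).

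We then define a real-linear map \(T:\operatorname{Herm}(\mathcal H)\to\mathbb R^{d^2}\) by \(T(X)_x=\operatorname{Tr}(E'_xX)\), and a second map \(S(X)_x=\operatorname{Tr}(E_xX)\). Since \(\{E_x\}\) is a real basis, \(S\) is a linear isomorphism. The duality says \(T(\kappa(h))=S(h)\).

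The first key step is to show that \(\{E'_x\}\) is also informationally complete, so that \(T\) is injective. To do this, run the same construction with the roles of \(q\) and \(\kappa(q)\) exchanged. We need \(\kappa(\kappa(q))=q\), or at least that \(\kappa\) is surjective.

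Surjectivity follows from injectivity on rays: if \(\kappa(h_1)=\kappa(h_2)\), then \(S(h_1)=S(h_2)\), so \(h_1=h_2\). For involutivity, apply Proposition~\ref{prop:sm-partner} to \(\kappa(g)\). I would try to show that \(\sigma_{\kappa(g)}^*=c\kappa(g)+(1-c)g\), which requires \(c=1/2\). Alternatively, I would bypass involutivity with a dimension argument. The image \(\kappa(\mathcal P_1)\) satisfies \(T(\kappa(\mathcal P_1))=S(\mathcal P_1)\), which spans an affine space of dimension \(d^2-1\). Since \(T\) is linear on \(\operatorname{Herm}\), the real span of \(\kappa(\mathcal P_1)\) must be mapped by \(T\) onto all of \(\mathbb R^{d^2}\). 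That forces \(\operatorname{rank}T=d^2\), so \(T\) is invertible. I expect this to resolve the issue cleanly, because a linear map from a \(d^2\)-dimensional space onto \(\mathbb R^{d^2}\) is bijective.

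With \(T\) invertible, set \(\kappa:=T^{-1}\circ S\) on \(\operatorname{Herm}(\mathcal H)\). This map is real-linear and invertible, and it agrees with the partner map on pure states. Uniqueness of the extension is immediate, because pure states span \(\operatorname{Herm}\).

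Positivity of \(\kappa\) follows because every positive operator is a nonnegative combination of pure states, and \(\kappa\) maps pure states to pure states. For trace preservation, write \(X=\sum_i\lambda_ih_i\). Then \(\operatorname{Tr}\kappa(X)=\sum_i\lambda_i=\operatorname{Tr}X\).

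Positivity of \(\kappa^{-1}\) is the main obstacle. It requires every pure state to lie in \(\kappa(\mathcal P_1)\); equivalently, \(\kappa(\mathcal D_1)=\mathcal D_1\). The plan is as follows:
\begin{enumerate}
\item \(\kappa(\mathcal D_1)\subseteq\mathcal D_1\) is a compact convex set.
\item Its extreme points are the images \(\kappa(h)\), since \(\kappa\) is an affine bijection onto its image.
\item Every such image \(\kappa(h)\) is pure.
\item The affine dimension of \(\kappa(\mathcal D_1)\) is \(d^2-1\).
\end{enumerate}
The map \(h\mapsto\kappa(h)\) is continuous on \(\mathcal P_1\), being the restriction of a linear map, and it is injective. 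So \(\kappa(\mathcal P_1)\) is a compact subset of \(\mathcal P_1\) homeomorphic to \(\mathbb{CP}^{d-1}\), a closed manifold of the same dimension. By invariance of domain, \(\kappa(\mathcal P_1)\) is open and closed in the connected space \(\mathcal P_1\), hence equal to \(\mathcal P_1\). Then \(\kappa^{-1}\) maps pure states to pure states, so it is positive and trace preserving. Consequently \(\kappa\) is an affine automorphism of the state space.
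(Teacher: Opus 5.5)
Your proof is correct. Through the construction of the extension, and the positivity and trace preservation of $\kappa$, it matches the paper's argument in dual form. Your observation that $T$ is onto, because its image contains the span of $S(\mathcal P_1)$, is the paper's check that the $E'_x$ are real-linearly independent (pair $\sum_x a_xE'_x=0$ with $\kappa(h)$). Likewise, $\kappa:=T^{-1}\circ S$ is the paper's defining relation $\operatorname{Tr}[E'_x\kappa(X)]=\operatorname{Tr}(E_xX)$.

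The genuine difference is how you obtain positivity of $\kappa^{-1}$. The paper stays operational. It applies measurement duality to every binary POVM $\{E,I-E\}$, which yields an effect $E'$ with $\kappa^\dagger(E')=E$. Thus $(\kappa^{-1})^\dagger$ maps effects to effects, and self-duality of the positive cone transfers positivity to $\kappa^{-1}$. You instead prove the stronger structural fact that $\kappa$ permutes $\mathcal P_1$. A continuous injection of the closed connected $(2d-2)$-manifold $\mathcal P_1\cong\mathbb{CP}^{d-1}$ into itself has open image by invariance of domain and closed image by compactness, so it is onto.

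Your route uses nothing beyond ``linear, injective, pure-to-pure'' and makes no second appeal to the simulation lemma, but it imports a nontrivial topological theorem. The paper's route needs only elementary linear algebra and cone duality, at the cost of reusing the operational input on arbitrary binary POVMs.

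In writing it up, let the invariance-of-domain step carry the argument, since your items 1--4 alone would not force $\kappa(\mathcal D_1)=\mathcal D_1$. Also drop the exploratory asides. ``Surjectivity follows from injectivity on rays'' is a non sequitur as stated. The involutivity route via $\sigma^*_{\kappa(g)}$ is unavailable, because $c=1/2$ is only derived later, downstream of this proposition.
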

}

\begin{proof}
{
Let \(\{E'_x\}\) be the POVM obtained from
Lemma~\ref{lem:sm-measurement-duality}.
We first show that the effects \(\{E'_x\}\) are real-linearly
independent.
Suppose
\[
\sum_x a_x E'_x=0
\]
for real coefficients \(a_x\).
Then, for every pure state \(h\),
\[
0
=
\sum_xa_x
\operatorname{Tr}[E'_x\kappa(h)]
=
\sum_xa_x
\operatorname{Tr}(E_xh).
\]
Since pure projectors span the Hermitian space,
\[
\sum_xa_xE_x=0.
\]
The effects \(\{E_x\}\) are real-linearly independent, so
\(a_x=0\) for every \(x\).
Thus \(\{E'_x\}\) is also a real basis of the Hermitian operators.
}

{Define the extension of \(\kappa\) uniquely by}
\begin{equation}
\operatorname{Tr}[E'_x\kappa(X)]
=
\operatorname{Tr}(E_xX)
\qquad
(\forall x).
\label{eq:sm-kappa-definition}
\end{equation}
{Because \(\{E_x\}\) and \(\{E'_x\}\) are both real bases, Eq.~\eqref{eq:sm-kappa-definition} defines a linear bijection; hence \(\kappa\) is invertible.}
{For every pure state \(h\), Eq.~\eqref{eq:sm-measurement-duality}
shows that this extension agrees with the original \(\kappa(h)\).}
Since every positive operator is a nonnegative combination of pure
{projectors, \(\kappa\) is positive and trace preserving.}

{To prove positivity of \(\kappa^{-1}\), apply}
Lemma~\ref{lem:sm-measurement-duality} to an arbitrary binary POVM
\(\{E,I-E\}\).
The corresponding effect \(E'\) satisfies
{\(\kappa^\dagger(E')=E\).}
{Thus \((\kappa^\dagger)^{-1}\) maps every effect, and hence every positive}
operator, to a positive operator.
{Self-duality of the positive cone then implies that \(\kappa^{-1}\) is}
positive.
{Since \(\kappa\) is trace preserving, so is \(\kappa^{-1}\).}
\end{proof}

\subsection{Emergence of the quaternionic antiunitary}

{
Since \(\kappa\) and \(\kappa^{-1}\) are positive and trace preserving by Proposition~\ref{prop:sm-order-automorphism}, the adjoint \(\kappa^\dagger\) is a unital order automorphism of the Hermitian part of \(M_d(\mathbb C)\). Kadison's order-isomorphism theorem~\cite{Kadison1951} therefore implies that \(\kappa^\dagger\) is a Jordan \(*\)-automorphism. For the full matrix algebra, such an automorphism is either a \(*\)-automorphism or a \(*\)-antiautomorphism. Hence there exists a unitary \(U\) such that, on Hermitian operators,
\[
\kappa(X)=UXU^\dagger
\qquad\text{or}\qquad
\kappa(X)=UX^TU^\dagger .
\]
}

\begin{proposition}[Quaternionic antiunitary]
\label{prop:sm-quaternionic-antiunitary}
There exists an antiunitary \(\Theta\) such that
\begin{equation}
{\kappa(X)
=
\Theta X\Theta^{-1},}
\qquad
\Theta^2=-I .
\label{eq:sm-derived-theta}
\end{equation}
In particular, \(d\) is even.
\end{proposition}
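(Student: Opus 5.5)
We start from the Kadison dichotomy just established: on Hermitian operators either \(\kappa(X)=UXU^\dagger\) or \(\kappa(X)=UX^TU^\dagger\) for a unitary \(U\). The plan is to exclude the first alternative, identify the second with conjugation by an antiunitary, and then fix the sign of its square.

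\emph{Excluding the automorphic case.} For every pure \(g=\ket{\psi}\!\bra{\psi}\), Proposition~\ref{prop:sm-partner} gives \(\kappa(g)\perp g\). If \(\kappa(X)=UXU^\dagger\), then \(U\psi\perp\psi\) for every unit vector \(\psi\), so the numerical range of \(U\) is \(\{0\}\). This forces \(U=0\), a contradiction. Hence \(\kappa(X)=UX^TU^\dagger\). Writing \(K\) for complex conjugation in the reference basis and using \(X^T=\overline{X}=KXK\) for Hermitian \(X\), we get \(\kappa(X)=\Theta X\Theta^{-1}\) with \(\Theta:=UK\) antiunitary. This fixes \(\Theta\) only up to a phase, which does not affect \(\Theta^2\), because \(\Theta^2=U\overline U\) and \((e^{i\phi}U)\overline{(e^{i\phi}U)}=U\overline U\).

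\emph{Involutivity.} Next I show \(\kappa^2=\operatorname{id}\), i.e.\ \(\kappa(\kappa(g))=g\) for pure \(g\). The cleanest route is the optimizer structure. The state \(\sigma_g^*=cg+(1-c)\kappa(g)\) is free, and its overlap with \(h:=\kappa(g)\) is \(1-c\). I would instead use Lemma~\ref{lem:sm-correlated-free}, or argue directly as follows. The map \(\sigma\mapsto\Theta\sigma\Theta^{-1}\) preserves \(\mathcal F_1\). To see this, apply Lemma~\ref{lem:sm-partner-transport} to unitary-type free channels, or more simply use the measurement-duality relation \(\kappa^\dagger(E')=E\) together with the fact that free channels intertwine \(\kappa\). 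Since the free-set preservation is subtle, I would take the following route. For any free channel \(\Lambda\) between pure states, Lemma~\ref{lem:sm-partner-transport} gives \(\Lambda\circ\kappa=\kappa\circ\Lambda\) on pure inputs mapped to pure outputs. Apply this to the replicator and to the conversion channel \(C\) with \(C(g)=\kappa(g)\), which exists by (U1). This yields \(C(\kappa(g))=\kappa^2(g)\). By Lemma~\ref{lem:sm-effect-support}, the relation \(C(g)=\kappa(g)\) together with positivity of \(\kappa^{-1}\) should force \(\kappa^2(g)\perp\kappa(g)\).

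Combined with linearity this is still weak. So the main route I would actually commit to is the following. The state \(\sigma_{\kappa(g)}^*=c\kappa(g)+(1-c)\kappa^2(g)\) is free with maximal overlap. Applying the channel \(C'\) with \(C'(\kappa(g))=g\) and transporting gives \(C'(\kappa^2(g))=\kappa(g)\). Meanwhile \(C'(\sigma_g^*)=cC'(g)+(1-c)g\) is free and attains overlap \(c\) with \(g\) only if \(c=1-c\) or \(C'(g)=g\). I would push this to show that \(\kappa^2\) fixes every pure state. Then \(\Theta^2\) induces the identity on rays, so \(\Theta^2=\lambda I\) with \(|\lambda|=1\). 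Since \(\Theta^2\) commutes with \(\Theta\), we get \(\lambda=\overline\lambda\), so \(\lambda=\pm1\).

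\emph{Fixing the sign.} If \(\Theta^2=+I\), then \(\Theta\) is a real structure: there is an orthonormal basis with \(\Theta e_j=e_j\), and \(\psi=e_1\) satisfies \(\kappa(g)=g\). This contradicts \(\kappa(g)\perp g\). Hence \(\Theta^2=-I\), and Kramers orthogonality, whose dimension-counting consequence is the alternating nondegenerate form \(\Omega\), forces \(d\) to be even. The main obstacle is the involutivity step \(\kappa^2=\operatorname{id}\). If the channel-transport argument stalls, my fallback is Lemma~\ref{lem:sm-correlated-free}. Applying \(\kappa\)-covariance to \(\Omega_{g,\kappa(g)}\in\mathcal F_2\), together with its marginal and optimizer uniqueness, should identify \(\kappa(\kappa(g))\) as the unique partner of \(\kappa(g)\) lying in the support of \(\sigma_g^*\), namely \(g\).
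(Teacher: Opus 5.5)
Your exclusion of the automorphic case (the numerical-range argument is a valid substitute for the paper's eigenvector argument) is fine, and so is the identification $\Theta=UK$. Your last two steps also hold \emph{conditionally}: $\Theta^2=\pm I$ if $\kappa^2=\operatorname{id}$, and the exclusion of $+I$ via a real orthonormal basis. The gap is the involutivity step $\kappa^2=\operatorname{id}$. Your derivation of $\Theta^2=\pm I$ rests entirely on it, and you never establish it.

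\begin{itemize}
\item The transport lemma only yields $C(\kappa(g))=\kappa^2(g)$ and $\kappa^2(g)\perp\kappa(g)$. The latter is just $\kappa(h)\perp h$ for $h=\kappa(g)$, so it says nothing about whether $\kappa^2(g)=g$.
\item The optimizer-overlap route closes only if $c=1/2$. Then $\sigma_g^*$ has overlap $c$ with $\kappa(g)$, and uniqueness gives $\sigma_g^*=\sigma_{\kappa(g)}^*$. But in the paper $c=1/2$ is derived later, from this very proposition.
\item Your fallback invokes $\kappa$-invariance of $\mathcal F_1$ or $\mathcal F_2$. This is only proved downstream, and it uses $\Theta^2=-I$ (through the self-adjointness of $\tau$) and $c=1/2$.
\end{itemize}

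As written, the argument is therefore either incomplete or circular at its central step.

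The missing idea is that involutivity is a consequence, not a prerequisite. The orthogonality $\kappa(g)\perp g$, which you use only at the very end, already determines $\Theta^2$.

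\begin{enumerate}
\item With $\Theta=UK$, $\operatorname{Tr}[g\,\kappa(g)]=|\langle\psi|\Theta\psi\rangle|^2$. So $0=\langle\psi|\Theta\psi\rangle=\overline{\psi}^{T}U\overline{\psi}$, i.e.\ $z^TUz=0$ for all $z\in\mathbb C^d$.
\item Polarizing this quadratic form shows that the symmetric part $U+U^T$ vanishes, so $U^T=-U$.
\item Unitarity then gives $U^\dagger=-\overline{U}$, hence $\Theta^2=U\overline{U}=-UU^\dagger=-I$.
\item Finally, $\det U=\det U^T=(-1)^d\det U\neq0$ forces $d$ to be even.
\end{enumerate}

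This is exactly the paper's proof. It makes the resource-theoretic detour unnecessary, along with your separate exclusion of $\Theta^2=+I$; you already had every ingredient needed.
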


\begin{proof}
A unitary implementation is impossible.
Indeed, {if
\(\kappa(X)=UXU^\dagger\),}
an eigenvector \(\psi\) of \(U\) would satisfy
\[
{\kappa(\ket{\psi}\!\bra{\psi})
=
\ket{\psi}\!\bra{\psi},}
\]
contradicting
\(\kappa(g)\perp g\).

{Hence \(\kappa\) is induced by an antiunitary}
\(\Theta=UK\).
Because
{\(\kappa(g)\perp g\) for every pure}
\(g=\ket{\psi}\!\bra{\psi}\),
\[
\langle\psi|\Theta\psi\rangle=0
\qquad
(\forall\psi).
\]
Writing \(z=\overline\psi\), this becomes
\[
z^TU z=0
\qquad
(\forall z\in\mathbb C^d),
\]
which implies
\(U^T=-U\).
Since \(U\) is unitary,
\(U^\dagger=-\overline U\), and therefore
\[
U\overline U=-I.
\]
Thus
\[
\Theta^2
=
UKUK
=
U\overline U
=
-I.
\]
An invertible antisymmetric matrix exists only in even dimension, so
\(d\) is even.
\end{proof}

Set
\[
\tau(X)
:=
\Theta X\Theta^{-1}.
\]
On the real vector space of Hermitian operators,
{\(\tau=\kappa\), and hence}
\(\tau=\tau^{-1}=\tau^\dagger\).
In particular,
\begin{equation}
\operatorname{Tr}[X\tau(Y)]
=
\operatorname{Tr}[\tau(X)Y]
\qquad
(X,Y\ {\rm Hermitian}).
\label{eq:sm-tau-self-adjoint}
\end{equation}

The transformed POVM in
Lemma~\ref{lem:sm-measurement-duality} is therefore fixed uniquely:
\[
E'_x=\tau(E_x).
\]

\subsection{Recovery of the free-state set}

\begin{proposition}[All free states are quaternionic fixed points]
\label{prop:sm-free-invariant}
Every
\(\sigma\in\mathcal F_1\)
satisfies
\[
\tau(\sigma)=\sigma.
\]
\end{proposition}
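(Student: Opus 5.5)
The plan is to reuse the mechanism of Lemma~\ref{lem:sm-measurement-duality}, but with an arbitrary free state \(\sigma\in\mathcal F_1\) as the data input instead of a pure state \(h\). We fix a pure resource \(q\) and the informationally complete POVM \(\{E_x\}\) chosen above. By effective one-copy simulation, we pick a free processor \(\{\Psi_x\}_x\) which, with resource \(q\), implements the instrument \(\mathcal N_x(X)=\sqrt{E_x}X\sqrt{E_x}\). By Lemma~\ref{lem:sm-measurement-duality} and the identification \(E'_x=\tau(E_x)\), the same processor with resource \(\kappa(q)\) induces the POVM \(\{\tau(E_x)\}\). Both statements are linear in the data input, so they hold for mixed inputs such as \(\sigma\).

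Next we take any pure \(r\) and form \(\sigma\otimes\Omega_{q,r}\). This state is free by Lemma~\ref{lem:sm-correlated-free} and (F2), where
\[
\Omega_{q,r}=c\,q\otimes r+(1-c)\,\kappa(q)\otimes\kappa(r).
\]
We apply the branch \(\Psi_x\) to the first two systems (data and resource), leaving the reference untouched. Branchwise complete freeness says \((\Psi_x\otimes\mathrm{id}_1)(\mathcal C_{3})\subseteq\mathcal C_{2}\), and marginal consistency (F3) then shows that the unnormalized reference state
\[
W_x=c\,\operatorname{Tr}(E_x\sigma)\,r+(1-c)\,\operatorname{Tr}[\tau(E_x)\sigma]\,\kappa(r)
\]
lies in the free cone \(\mathcal C_1\). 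Here we use that \(\Psi_x\) outputs only the \(m=1\) data system, so tracing it out is a marginal.

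We then apply Lemma~\ref{lem:sm-free-cone-overlap} with the pure state \(r\), using \(r\perp\kappa(r)\):
\[
c^2\operatorname{Tr}(E_x\sigma)\le c\bigl[c\operatorname{Tr}(E_x\sigma)+(1-c)\operatorname{Tr}(\tau(E_x)\sigma)\bigr].
\]
This simplifies to \(\operatorname{Tr}(E_x\sigma)\le\operatorname{Tr}[\tau(E_x)\sigma]\), using \(0<c<1\). Summing over \(x\) gives \(1\) on both sides, because \(\tau\) is unital (\(\Theta\Theta^{-1}=I\)). Hence equality holds for every \(x\).

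Finally, Eq.~\eqref{eq:sm-tau-self-adjoint} turns this into
\[
\operatorname{Tr}(E_x\sigma)=\operatorname{Tr}[E_x\tau(\sigma)]\qquad(\forall x).
\]
Since \(\{E_x\}\) is a real basis of the Hermitian operators, \(\tau(\sigma)=\sigma\).

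The main delicate point is the second step: justifying that the processor's action on a mixed data input together with the correlated resource–reference state produces exactly the stated \(W_x\), and that \(W_x\) is free. This needs linearity of \(\Psi_x\) in the data register, the resource-dependent POVM identification from the measurement-duality lemma, and the fact that complete freeness of each branch is available with an untouched reference system.
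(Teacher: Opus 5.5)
Your proposal is correct and follows essentially the same route as the paper's proof. Both use the free state $\sigma\otimes\Omega_{q,r}$, the processor simulating a POVM with $E'_x=\tau(E_x)$, the free-cone overlap bound applied with $r$, normalization, and self-adjointness of $\tau$. The only difference is that you finish with a single informationally complete POVM, whereas the paper ranges over arbitrary binary POVMs; both are valid.

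There is one slip in your displayed inequality. Since $r\perp\kappa(r)$, you have $\operatorname{Tr}(rW_x)=c\operatorname{Tr}(E_x\sigma)$, so the left side should be $c\operatorname{Tr}(E_x\sigma)$, not $c^2\operatorname{Tr}(E_x\sigma)$. As written the inequality holds trivially and gives nothing. The corrected one reads $c(1-c)\operatorname{Tr}(E_x\sigma)\le c(1-c)\operatorname{Tr}[\tau(E_x)\sigma]$, which yields the conclusion you state.
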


\begin{proof}
Take arbitrary pure states \(q,r\).
Lemma~\ref{lem:sm-correlated-free}, together with
\(\kappa=\tau\), gives
\[
\Omega_{q,r}
=
cq\otimes r
+
(1-c)\tau(q)\otimes\tau(r)
\in\mathcal F_2 .
\]
Hence
\(\sigma\otimes\Omega_{q,r}\)
is free.

{Use the resource state \(q\) to simulate an arbitrary POVM}
\(\{E_x\}\), and apply the corresponding free processor to
{\(\sigma\) and the resource subsystem.}
Using
\(E'_x=\tau(E_x)\), the resulting unnormalized reference state is
\[
W_x
=
c\,a_xr
+
(1-c)b_x\tau(r)
\in\mathcal C_1,
\]
where
\[
a_x=\operatorname{Tr}(E_x\sigma),
\qquad
b_x=\operatorname{Tr}[\tau(E_x)\sigma].
\]
Lemma~\ref{lem:sm-free-cone-overlap}, applied with \(r\), gives
\(a_x\le b_x\).
Normalization of the two POVMs gives equality for every outcome.

Applying this to arbitrary binary POVMs yields
\[
\operatorname{Tr}(E\sigma)
=
\operatorname{Tr}[\tau(E)\sigma]
=
\operatorname{Tr}[E\tau(\sigma)]
\]
for every effect \(E\), where
Eq.~\eqref{eq:sm-tau-self-adjoint} was used in the last step.
Since effects span the Hermitian space,
\(\sigma=\tau(\sigma)\).
\end{proof}

\begin{theorem}[Single-system characterization]
\label{thm:sm-local-characterization}
Under assumptions (U1) and (U2),
{\(d=2n\ge4\)} and
\[
c=\frac12
\]
and
\begin{equation}
\mathcal F_1
=
\left\{
\sigma\in\mathcal D_1:
\Theta\sigma\Theta^{-1}=\sigma
\right\},
\qquad
\Theta^2=-I .
\label{eq:sm-local-characterization}
\end{equation}
Thus the single-system resource theory is aquaternionicity.
\end{theorem}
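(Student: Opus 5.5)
Most of the work is already done by the preceding results, so the plan is to assemble them and add the two missing inclusions. From Proposition~\ref{prop:sm-quaternionic-antiunitary} we have an antiunitary \(\Theta\) with \(\Theta^2=-I\) and \(\kappa=\tau\); in particular \(d\) is even. From Proposition~\ref{prop:sm-free-invariant} we have the inclusion \(\mathcal F_1\subseteq\{\sigma:\tau(\sigma)=\sigma\}\). Three things remain: the reverse inclusion, the value \(c=1/2\), and the exclusion of \(d=2\).

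For \(c=1/2\), I would use the fact that every free state is \(\tau\)-invariant. The optimizer \(\sigma_g^*=cg+(1-c)\kappa(g)=cg+(1-c)\tau(g)\) is free, so \(\tau(\sigma_g^*)=\sigma_g^*\). Since \(\tau^2=\mathrm{id}\), we have \(\tau(\sigma_g^*)=c\tau(g)+(1-c)g\). Comparing coefficients on the orthogonal pure states \(g\) and \(\tau(g)\) gives \(c=1-c\), hence \(c=1/2\).

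For the reverse inclusion, I would take an arbitrary \(\tau\)-invariant density operator \(\sigma\) and decompose it spectrally in a Kramers-adapted way. The eigenspaces of \(\sigma\) are \(\Theta\)-invariant, and \(\Theta\) restricted to each eigenspace is an antiunitary with square \(-I\). Each eigenspace therefore has even dimension and admits a basis of Kramers pairs \(\{\psi_j,\Theta\psi_j\}\), where \(\Theta\psi_j\perp\psi_j\) by Kramers orthogonality. This gives
\[
\sigma=\sum_j \lambda_j\bigl(\proj{\psi_j}+\Theta\proj{\psi_j}\Theta^{-1}\bigr),
\qquad \sum_j 2\lambda_j=1 .
\]
Each term \(\tfrac12(g_j+\tau(g_j))\) equals \(\sigma_{g_j}^*\) once \(c=1/2\) is known, and so is free. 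Convexity (F1) then gives \(\sigma\in\mathcal F_1\). This step is the one that needs care: one must correctly pass from \(\tau\)-invariance of \(\sigma\) to \(\Theta\)-invariance of its eigenspaces and then to the Kramers pair decomposition. Concretely, \(\tau(\sigma)=\sigma\) means \(\Theta\sigma=\sigma\Theta\), and the eigenvalues are real, so \(\Theta\) maps each eigenspace into itself.

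Finally, to exclude \(d=2\): if \(d=2\), the invariant set is \(\{I/2\}\), a singleton, which contradicts nontriviality. Hence \(d=2n\ge4\). Uniqueness of \(\Theta\) up to phase is not needed here, but it follows because \(\tau=\kappa\) determines \(\Theta\) up to a phase.
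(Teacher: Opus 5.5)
Your proposal is correct. The determination of \(c=1/2\) and the exclusion of \(d=2\) match the paper's argument exactly.

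The only real difference is the reverse inclusion. The paper takes an \emph{arbitrary} pure-state decomposition \(\sigma=\sum_ip_ig_i\) of a \(\tau\)-invariant state. Using only linearity of \(\tau\) and \(\tau(\sigma)=\sigma\), it writes
\(\sigma=\tfrac12(\sigma+\tau(\sigma))=\sum_ip_i\tfrac12(g_i+\tau(g_i))=\sum_ip_i\sigma^*_{g_i}\),
so no spectral analysis is needed.

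Your Kramers-adapted spectral decomposition is also valid:
\begin{itemize}
\item The eigenspaces are \(\Theta\)-invariant because the eigenvalues are real.
\item Inside an eigenspace, the orthogonal complement of a Kramers pair \(\{\psi,\Theta\psi\}\) is again \(\Theta\)-invariant, since \(\Theta\) preserves orthogonality and \(\Theta^2=-I\).
\item Induction then yields the pair basis.
\end{itemize}
This costs an extra structural lemma to reach the same convex decomposition into the optimizers \(\sigma^*_g\). In return it gives more information. Every \(\tau\)-invariant Hermitian operator has even spectral multiplicities (Kramers degeneracy). In particular, this yields \(\mathcal F_1=\{I/2\}\) for \(d=2\) directly from your own argument, rather than as a separate remark.
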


\begin{proof}
From
Proposition~\ref{prop:sm-partner} and
\(\kappa(g)=\tau(g)\),
\[
\sigma_g^*
=
cg+(1-c)\tau(g)
\]
is free.
Proposition~\ref{prop:sm-free-invariant} therefore implies
\[
cg+(1-c)\tau(g)
=
c\tau(g)+(1-c)g.
\]
Since \(g\) and \(\tau(g)\) are distinct orthogonal pure states,
\(c=1/2\).

We already know
\[
\mathcal F_1
\subseteq
\{\sigma:\tau(\sigma)=\sigma\}.
\]
For the reverse inclusion, let
\(\tau(\sigma)=\sigma\) and choose any pure-state decomposition
\(\sigma=\sum_ip_ig_i\).
Then
\[
\sigma
=
\frac{\sigma+\tau(\sigma)}2
=
\sum_i
p_i\frac{g_i+\tau(g_i)}2
=
\sum_ip_i\sigma_{g_i}^* .
\]
Each
\(\sigma_{g_i}^*\) is free, so convexity gives
\(\sigma\in\mathcal F_1\).
{Finally, \(d=2\) is impossible under the standing nontriviality assumption: when \(\Theta^2=-I\) on \(\mathbb C^2\), Kramers degeneracy makes every \(\Theta\)-invariant Hermitian operator proportional to \(I\), so Eq.~\eqref{eq:sm-local-characterization} would give \(\mathcal F_1=\{I/2\}\). Hence \(d=2n\ge4\).}
\end{proof}

\begin{proposition}[Uniqueness]
\label{prop:sm-theta-unique}
The map \(\tau\) is uniquely determined by the single-system
free-state set.
Any two antiunitaries inducing the same \(\tau\) differ only by an
overall phase.
\end{proposition}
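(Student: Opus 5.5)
The statement has two parts. First, \(\tau\) is determined by \(\mathcal F_1\). Second, two antiunitaries inducing the same \(\tau\) differ by a phase. I would prove them separately.

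For the first part, I would argue that \(\tau\) can be recovered from \(\mathcal F_1\) alone. By Theorem~\ref{thm:sm-local-characterization}, \(\mathcal F_1\) is exactly the set of density operators fixed by \(\tau\). The fixed Hermitian operators therefore form the real span \(V^+\) of \(\mathcal F_1\), using the identity-shift argument from Lemma~\ref{lem:sm-aq-cfree-cov}. Since \(\tau\) is a Hilbert--Schmidt self-adjoint involution on \(\operatorname{Herm}(\mathcal H)\), by Eq.~\eqref{eq:sm-tau-self-adjoint}, it is the orthogonal reflection fixing \(V^+\). Its \(-1\) eigenspace is \((V^+)^\perp\). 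Hence \(\tau=2\Pi_{V^+}-\operatorname{id}\), where \(\Pi_{V^+}\) is the orthogonal projection onto \(V^+\), so \(\tau\) is fixed by \(\mathcal F_1\).

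As an independent check, I would recover \(\tau\) on pure states directly. Proposition~\ref{prop:sm-partner} gives a unique maximizer \(\sigma_g^*=\tfrac12(g+\kappa(g))\) of \(\operatorname{Tr}(g\sigma)\) over \(\mathcal F_1\). Thus \(\tau(g)=\kappa(g)=2\sigma_g^*-g\) is computed from \(\mathcal F_1\) alone, and linearity then fixes \(\tau\) on all Hermitian operators.

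For the phase statement, suppose \(\Theta_1X\Theta_1^{-1}=\Theta_2X\Theta_2^{-1}\) for all Hermitian \(X\). Set \(V:=\Theta_2^{-1}\Theta_1\), which is unitary and linear as a product of two antiunitaries. Then \(VXV^\dagger=X\) for all Hermitian \(X\). By complex linearity this extends to all of \(M_d(\mathbb C)\), so \(V\) commutes with every matrix. Schur's lemma, or simply the fact that the center of \(M_d(\mathbb C)\) is the scalars, gives \(V=e^{i\phi}I\), and hence \(\Theta_1=e^{i\phi}\Theta_2\).

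Finally, I would confirm that rephasing is harmless. For an antiunitary \(\Theta\), \((e^{i\phi}\Theta)^2=e^{i\phi}e^{-i\phi}\Theta^2=\Theta^2\), so \(\Theta^2=-I\) is preserved.

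The main obstacle is the first step: making sure the orthogonal-complement characterization genuinely depends only on \(\mathcal F_1\). This requires the self-adjointness of \(\tau\) and the fact that \(V^+\) is spanned by free states. Both are already established in the paper, so I expect no real difficulty there. I would present the \(\sigma_g^*\) formula as the primary argument, since it is the most direct.
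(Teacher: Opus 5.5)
Your proposal is correct, and your designated primary argument for the first part, recovering $\tau(g)=2\sigma_g^*-g$ from the unique optimizer of Proposition~\ref{prop:sm-partner} (with $c=1/2$) and extending by real linearity, is exactly the paper's proof. Your variants are also sound: the reflection formula $\tau=2\Pi_{V^+}-\mathrm{id}$ recovers $\tau$ from $\operatorname{span}_{\mathbb R}\mathcal F_1$ using only $\tau^2=\mathrm{id}$ and trace self-adjointness, so it does not need optimizer uniqueness; and showing that $\Theta_2^{-1}\Theta_1$ commutes with all of $M_d(\mathbb C)$ is an equivalent replacement for the paper's argument that this unitary fixes every ray and therefore has a single eigenphase.
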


\begin{proof}
For each pure \(g\), the maximizing state \(\sigma_g^*\) is unique,
and
\[
\tau(g)
=
2\sigma_g^*-g .
\]
Hence \(\tau\) is determined on every pure projector and therefore, by
real linearity, on the entire Hermitian space.

If antiunitaries
\(\Theta_1\) and \(\Theta_2\) induce the same action on all operators,
then
\(U=\Theta_2^{-1}\Theta_1\)
is unitary and leaves every ray invariant.
Applying this to an orthonormal basis and to pairwise sums of basis
vectors shows that all eigenphases coincide.
Thus
\(U=e^{i\alpha}I\), i.e.
\[
\Theta_1=e^{i\alpha}\Theta_2 .
\]
\end{proof}

\section{Composite hierarchy}
\label{sec:sm-composite}

We now complete the characterization by determining the free-state sets
on arbitrary finite composite systems.
Let \(\Theta\) be the antiunitary reconstructed in
Sec.~\ref{sec:sm-characterization}, and define
\[
\Theta_m:=\Theta^{\otimes m},
\qquad
\tau_m(X):=\Theta_mX\Theta_m^{-1}.
\]
Then \(\Theta_m^2=(-1)^mI\), while
\(\tau_m^2=\operatorname{id}\) on operators.
We write
\[
\mathcal F_m^{\rm Aq}
:=
\left\{
\rho\in\mathcal D_m:
\tau_m(\rho)=\rho
\right\}.
\]

\subsection{Upper inclusion}

\begin{proposition}[Composite free states are fixed points]
\label{prop:sm-composite-upper}
For every \(m\ge1\),
\begin{equation}
\mathcal F_m
\subseteq
\mathcal F_m^{\rm Aq}.
\label{eq:sm-composite-upper}
\end{equation}
\end{proposition}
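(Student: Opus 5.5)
The idea is to transplant the argument of Proposition~\ref{prop:sm-free-invariant} from one system to $m$ systems. The single-system proof used a correlated free reference $\Omega_{q,r}$ and the maximal-overlap inequality to force $\operatorname{Tr}(E\sigma)=\operatorname{Tr}[\tau(E)\sigma]$ for every effect. Here we do the same with a multipartite measurement on $m$ systems, using $m$ resource copies and a reference that records a single ``branch bit.''

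Fix $\rho\in\mathcal F_m$ and arbitrary pure states $q_1,\ldots,q_m,r$. By Lemma~\ref{lem:sm-correlated-free} with $\kappa=\tau$,
\[
\Omega:=c\,q_1\otimes\cdots\otimes q_m\otimes r+(1-c)\,\tau(q_1)\otimes\cdots\otimes\tau(q_m)\otimes\tau(r)\in\mathcal F_{m+1},
\]
and by (F2), $\rho\otimes\Omega$ is free. Now take an arbitrary POVM $\{E_x\}$ on $m$ systems and use alchemicality of a pure state $q$ to simulate it.

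The simulation must be arranged so that the processor's behaviour on the ``partner'' resource is known. The simplest choice is to take all $q_j=q$ and use a one-copy processor $\Psi_x$ acting on $\mathcal H_m\otimes\mathcal H$. We then need the analogue of Lemma~\ref{lem:sm-measurement-duality} for an $m$-system input. Its proof carries over verbatim: feed $\Omega_{h_1,\ldots,h_m,q,r}$, which is free by Lemma~\ref{lem:sm-correlated-free}, into $\Psi_x$ and trace out to the reference. The free-cone element obtained is
\[
W_x=c\,\operatorname{Tr}(E_x\,h_1\otimes\cdots\otimes h_m)\,r+(1-c)\operatorname{Tr}\!\bigl[E'_x\,\tau(h_1)\otimes\cdots\otimes\tau(h_m)\bigr]\tau(r),
\]
where $E'_x$ is the POVM induced with resource $\tau(q)$. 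Lemma~\ref{lem:sm-free-cone-overlap} together with normalization gives equality of the two coefficients. Since product pure states span $\operatorname{Herm}(\mathcal H_m)$ and $\tau^{\otimes m}=\tau_m$ on product operators, this yields $E'_x=\tau_m(E_x)$; here $\tau_m$ is self-adjoint with respect to the trace, as in Eq.~\eqref{eq:sm-tau-self-adjoint}.

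With this in hand, apply $\Psi_x$ to $\rho$ and the resource slot of $\rho\otimes\Omega_{q,r}$, where $\Omega_{q,r}\in\mathcal F_2$. Complete freeness and (F3) show that the reference output
\[
W_x=c\,\operatorname{Tr}(E_x\rho)\,r+(1-c)\operatorname{Tr}[\tau_m(E_x)\rho]\,\tau(r)
\]
lies in $\mathcal C_1$. Lemma~\ref{lem:sm-free-cone-overlap} with $r$, using $r\perp\tau(r)$, gives $\operatorname{Tr}(E_x\rho)\le\operatorname{Tr}[\tau_m(E_x)\rho]$, and normalization turns this into equality. Applying it to binary POVMs $\{E,I-E\}$ for all effects $E$ gives $\operatorname{Tr}(E\rho)=\operatorname{Tr}[E\tau_m(\rho)]$, hence $\tau_m(\rho)=\rho$.

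The main obstacle is the multipartite duality step, namely identifying $E'_x=\tau_m(E_x)$ rather than merely $\tau$ applied on product inputs. Care is also needed with the tracing-out: discarding is not assumed free in general, so one must rely on (F3) applied to the free-cone output of the completely free branch, exactly as in the single-system proof.
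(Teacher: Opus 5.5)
Your argument is correct, but it is organized differently from the paper's. The paper never simulates a global measurement on $\mathcal H_m$. It takes the $(m+1)$-partite correlated free state $\frac12 q^{\otimes m}\otimes r+\frac12\tau_1(q)^{\otimes m}\otimes\tau_1(r)$ and simulates a \emph{local} POVM on each data system, each with its own copy of $q$. For every local processor it reuses the already-established single-system duality $E'=\tau_1(E)$. The branch structure of the correlated state then gives $a_{\boldsymbol x}=b_{\boldsymbol x}$ for product effects; the paper gets this by applying the overlap lemma once with $r$ and once with $\tau_1(r)$, rather than by normalization. Spanning by tensor products of local informationally complete effects finishes the proof.

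You instead use a single resource copy and one global processor on $m+1$ systems. This forces you to prove a new $m$-system duality $E'_x=\tau_m(E_x)$. You do this correctly by rerunning the proof of Lemma~\ref{lem:sm-measurement-duality} on $\Omega_{h_1,\ldots,h_m,q,r}$, which moves the product-spanning argument and the self-adjointness of $\tau_m$ into that step.

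Given what is already proved, the paper's route is more economical, since no new duality lemma is needed. Yours yields a stronger by-product: the partner resource $\tau_1(q)$ induces $\tau_m(E_x)$ for arbitrary, possibly entangled, POVMs on $m$ systems. Your final step can then use arbitrary binary effects on $\mathcal H_m$ directly.

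The $m$-copy state $\Omega$ you introduce in your opening paragraph is never used, since you switch to $\Omega_{q,r}$. That setup can simply be dropped.
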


\begin{proof}
Fix pure states \(q,r\).
Since \(c=1/2\) and \(\kappa=\tau_1\),
Lemma~\ref{lem:sm-correlated-free} gives
\[
\Omega_{q,\ldots,q,r}
=
\frac12 q^{\otimes m}\otimes r
+
\frac12 \tau_1(q)^{\otimes m}\otimes\tau_1(r)
\in\mathcal F_{m+1}.
\]

Let \(\sigma\in\mathcal F_m\).
By product closure,
\(\sigma\otimes\Omega_{q,\ldots,q,r}\)
is free.
For each \(j=1,\ldots,m\), choose a POVM
\(\{E^{(j)}_{x_j}\}_{x_j}\)
{and a free processor using \(q\) as a resource to simulate it on the}
\(j\)th data system.
By the measurement-duality construction of
Sec.~\ref{sec:sm-characterization}, {the same processor with resource state}
\(\tau_1(q)\) induces the POVM
\(\{\tau_1(E^{(j)}_{x_j})\}_{x_j}\).

Apply the tensor product of these processors to the \(m\) data systems
{and the corresponding resource systems.}
For an outcome
\(\boldsymbol{x}=(x_1,\ldots,x_m)\), branchwise complete freeness and
marginal consistency give an unnormalized free state on the reference
system of the form
\[
W_{\boldsymbol{x}}
=
\frac12 a_{\boldsymbol{x}}\,r
+
\frac12 b_{\boldsymbol{x}}\,\tau_1(r),
\]
where
\[
a_{\boldsymbol{x}}
=
\operatorname{Tr}
\!\left[
\left(
E^{(1)}_{x_1}\otimes\cdots\otimes E^{(m)}_{x_m}
\right)\sigma
\right],
\]
and
\[
b_{\boldsymbol{x}}
=
\operatorname{Tr}
\!\left[
\left(
\tau_1(E^{(1)}_{x_1})\otimes\cdots\otimes
\tau_1(E^{(m)}_{x_m})
\right)\sigma
\right].
\]

Lemma~\ref{lem:sm-free-cone-overlap}, applied once with \(r\) and once
with \(\tau_1(r)\), gives
\[
a_{\boldsymbol{x}}\le b_{\boldsymbol{x}},
\qquad
b_{\boldsymbol{x}}\le a_{\boldsymbol{x}}.
\]
Hence
\[
a_{\boldsymbol{x}}=b_{\boldsymbol{x}}.
\]

Choose each local POVM to be informationally complete.
Their tensor-product effects span the Hermitian operators on
\(\mathcal H_m\).
Since
\[
\tau_m
\left(
E_1\otimes\cdots\otimes E_m
\right)
=
\tau_1(E_1)\otimes\cdots\otimes\tau_1(E_m),
\]
the equality above implies
\[
\operatorname{Tr}(E\sigma)
=
\operatorname{Tr}[E\tau_m(\sigma)]
\]
for a real basis of Hermitian operators \(E\).
Therefore
\(\sigma=\tau_m(\sigma)\).
\end{proof}

\subsection{Global pure-state reachability}

{
We will use the following immediate consequence of alchemicality.
For any pure alchemical state \(q\), any finite \(m\ge1\), and any
pure state \(h\in\mathcal P_m\), there exists a free channel
\[
\mathcal C_h:1\to m
\]
such that
\[
\mathcal C_h(q)=h.
\]
Indeed, choose the replacement channel
\(\mathcal N_h(X)=\operatorname{Tr}(X)h\) as the target simulation
and generate the required finite number of resource copies by
replication.
}

\begin{lemma}[Partner transport to arbitrary finite systems]
\label{lem:sm-global-partner-transport}
Let \(q\in\mathcal P_1\), and let
\(\mathcal C:1\to m\)
be a free channel such that
\[
\mathcal C(q)=h
\]
is pure.
Then
\begin{equation}
\mathcal C(\tau_1(q))
=
\tau_m(h).
\label{eq:sm-global-partner-transport}
\end{equation}
\end{lemma}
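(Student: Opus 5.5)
The plan is to transport the correlated free state of Lemma~\ref{lem:sm-correlated-free} through \(\mathcal C\otimes\operatorname{id}_1\), and then read off \(\mathcal C(\tau_1(q))\) using the upper inclusion, Proposition~\ref{prop:sm-composite-upper}. The single-system Kramers orthogonality of a reference system separates the two branches.

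I will not use only \(\sigma_q^*\) itself. Applying \(\mathcal C\) to \(\sigma_q^*=\tfrac12 q+\tfrac12\tau_1(q)\) only shows that \(\tfrac12 h+\tfrac12 X\) is \(\tau_m\)-invariant, where \(X:=\mathcal C(\tau_1(q))\). This gives \(h+X=\tau_m(h)+\tau_m(X)\). For even \(m\) the antiunitary \(\Theta_m\) squares to \(+I\), so \(h\) and \(\tau_m(h)\) need not be orthogonal, and this relation alone does not pin down \(X\). I therefore attach a reference elementary system.

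Fix any pure state \(r\in\mathcal P_1\). Since \(c=1/2\) and \(\kappa=\tau_1\), Lemma~\ref{lem:sm-correlated-free} gives
\[
\Omega_{q,r}=\tfrac12\,q\otimes r+\tfrac12\,\tau_1(q)\otimes\tau_1(r)\in\mathcal F_2 .
\]
The map \(\mathcal C\otimes\operatorname{id}_1\) is free, because \(\mathcal O\) is closed under tensor products and the identity is CFK. Hence it maps \(\Omega_{q,r}\) into \(\mathcal F_{m+1}\), giving
\[
\tfrac12\,h\otimes r+\tfrac12\,X\otimes\tau_1(r)\in\mathcal F_{m+1}\subseteq\mathcal F_{m+1}^{\rm Aq}.
\]
Using \(\tau_{m+1}=\tau_m\otimes\tau_1\) and \(\tau_1^2=\operatorname{id}\), invariance yields
\[
h\otimes r+X\otimes\tau_1(r)=\tau_m(h)\otimes\tau_1(r)+\tau_m(X)\otimes r .
\]

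Let \(\ket{\chi}\) be a unit vector with \(\tau_1(r)=\ket{\chi}\!\bra{\chi}\). By Kramers orthogonality on the single elementary system, Eq.~\eqref{eq:sm-kramers-orthogonality}, we have \(r\perp\tau_1(r)\), so \(\bra{\chi}r\ket{\chi}=0\). Apply the partial matrix element \(Y\mapsto(I_m\otimes\bra{\chi})Y(I_m\otimes\ket{\chi})\) to both sides. The terms containing \(r\) vanish, and we obtain directly
\[
X=\tau_m(h),
\]
which is Eq.~\eqref{eq:sm-global-partner-transport}.

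The main obstacle is exactly the even-\(m\) issue described above: on the composite system there is no Kramers orthogonality to exploit. The reference-system trick moves the needed orthogonality onto a single elementary system, where \(\Theta^2=-I\) always holds. The remaining steps only need to check that \(\mathcal C\otimes\operatorname{id}_1\) is free, which follows from the closure properties of \(\mathcal O\) listed in Sec.~\ref{sec:sm-framework}.
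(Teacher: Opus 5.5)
Your proposal is correct and follows the paper's proof essentially step for step. You transport $\Omega_{q,r}$ through $\mathcal C\otimes\operatorname{id}_1$, invoke Proposition~\ref{prop:sm-composite-upper} to obtain $\tau_{m+1}$-invariance, and separate the two reference blocks using $r\perp\tau_1(r)$; your partial matrix element with $\ket{\chi}$ is just an explicit version of the paper's ``comparing the two orthogonal reference blocks.''
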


\begin{proof}
Choose any pure state \(r\in\mathcal P_1\).
From Lemma~\ref{lem:sm-correlated-free},
\[
\Omega_{q,r}
=
\frac12 q\otimes r
+
\frac12\tau_1(q)\otimes\tau_1(r)
\in\mathcal F_2.
\]
Applying
\(\mathcal C\otimes\operatorname{id}\)
gives
\[
W
=
\frac12h\otimes r
+
\frac12\eta\otimes\tau_1(r)
\in\mathcal F_{m+1},
\qquad
\eta:=\mathcal C(\tau_1(q)).
\]
By Proposition~\ref{prop:sm-composite-upper},
\(W\) is invariant under \(\tau_{m+1}\).
Hence
\[
h\otimes r
+
\eta\otimes\tau_1(r)
=
\tau_m(h)\otimes\tau_1(r)
+
\tau_m(\eta)\otimes r.
\]
Kramers orthogonality gives
\(r\perp\tau_1(r)\).
Comparing the two orthogonal reference blocks therefore yields
\[
\eta=\tau_m(h),
\]
which is Eq.~\eqref{eq:sm-global-partner-transport}.
\end{proof}

\subsection{Full recovery of the composite hierarchy}

\begin{theorem}[Composite characterization]
\label{thm:sm-composite-characterization}
Under assumptions (U1) and (U2), for every finite \(m\ge1\),
\begin{equation}
\mathcal F_m
=
\mathcal F_m^{\rm Aq}
=
\left\{
\rho\in\mathcal D_m:
\Theta_m\rho\Theta_m^{-1}=\rho
\right\}.
\label{eq:sm-composite-characterization}
\end{equation}
Thus the two extremal requirements determine the full aquaternionicity
hierarchy.
\end{theorem}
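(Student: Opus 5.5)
The upper inclusion \(\mathcal F_m\subseteq\mathcal F_m^{\rm Aq}\) is already Proposition~\ref{prop:sm-composite-upper}, so the plan is to prove the reverse inclusion \(\mathcal F_m^{\rm Aq}\subseteq\mathcal F_m\). By convexity and closedness of \(\mathcal F_m\), it suffices to show that every \(\tau_m\)-invariant state is a convex combination of states already known to be free. For any \(\tau_m\)-invariant \(\rho\) with pure decomposition \(\rho=\sum_ip_ih_i\), we have
\[
\rho=\frac{\rho+\tau_m(\rho)}{2}=\sum_ip_i\,\frac{h_i+\tau_m(h_i)}{2}.
\]
The key target is therefore to show that \(\frac12\bigl(h+\tau_m(h)\bigr)\in\mathcal F_m\) for every pure \(h\in\mathcal P_m\).

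To produce these states, fix a pure \(q\in\mathcal P_1\). It is alchemical by (U1), and by Theorem~\ref{thm:sm-local-characterization} with \(c=1/2\) and \(\kappa=\tau_1\), the single-system state \(\sigma_q^*=\frac12\bigl(q+\tau_1(q)\bigr)\) is free. By the global pure-state reachability observation, there is a free channel \(\mathcal C_h:1\to m\) with \(\mathcal C_h(q)=h\). Lemma~\ref{lem:sm-global-partner-transport} then gives \(\mathcal C_h(\tau_1(q))=\tau_m(h)\). Since free channels preserve free states, this yields
\[
\mathcal C_h(\sigma_q^*)=\tfrac12\bigl(h+\tau_m(h)\bigr)\in\mathcal F_m.
\]
Combining this with the convex decomposition above gives \(\rho\in\mathcal F_m\), which establishes \(\mathcal F_m=\mathcal F_m^{\rm Aq}\). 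For \(m=1\) this reproduces Theorem~\ref{thm:sm-local-characterization}.

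The only delicate point is whether \(\mathcal C_h\) is genuinely a free channel \(1\to m\) in the sense of the framework. It is built from \(k\) replications of \(q\) followed by a one-copy simulation processor for the replacement channel \(\mathcal N_h\) on some number of systems. Since input and output sizes may differ, I will realize the replacement channel as a target instrument on \(m\) systems, with \(\mathcal N_h(X)=\operatorname{Tr}(X)h\), fed with a free ancilla data state such as \(I/d^m\in\mathcal F_m\). Assumption (F1) supplies nonemptiness, and (F2) makes products of free states free. Free-ancilla closure and sequential composition, summarized in Sec.~\ref{sec:sm-framework}, then make the whole map free without requiring discarding. This is also exactly the hypothesis under which Lemma~\ref{lem:sm-global-partner-transport} was proved, so no further work is needed. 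Finally, uniqueness of \(\Theta\) up to phase is Proposition~\ref{prop:sm-theta-unique}.
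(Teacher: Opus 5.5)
Your proposal is correct and follows the paper's proof essentially step for step. You take the upper inclusion from Proposition~\ref{prop:sm-composite-upper}, obtain freeness of $\tfrac12\bigl(h+\tau_m(h)\bigr)=\mathcal C_h\bigl(\tfrac12(q+\tau_1(q))\bigr)$ from global reachability together with Lemma~\ref{lem:sm-global-partner-transport}, and then symmetrize a pure-state decomposition of a $\tau_m$-invariant $\rho$ and use convexity. Your extra justification that $\mathcal C_h$ is a free channel $1\to m$ (free data ancilla, replication and simulation, no discarding) is consistent with the paper's reachability remark, and closedness of $\mathcal F_m$ is not needed because the decomposition is finite.
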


\begin{proof}
The inclusion
\(\mathcal F_m\subseteq\mathcal F_m^{\rm Aq}\)
was established in
Proposition~\ref{prop:sm-composite-upper}.

For the reverse inclusion, fix any pure alchemical state
\(q\in\mathcal P_1\).
The single-system characterization,
Theorem~\ref{thm:sm-local-characterization}, gives
\[
f_q
:=
\frac12
\left(
q+\tau_1(q)
\right)
\in\mathcal F_1 .
\]

Let \(h\in\mathcal P_m\) be arbitrary.
{
By the reachability property stated above, choose a free channel
\(\mathcal C_h:1\to m\) such that
\(\mathcal C_h(q)=h\).
}
Lemma~\ref{lem:sm-global-partner-transport} then gives
\[
\mathcal C_h(\tau_1(q))
=
\tau_m(h).
\]
Therefore
\begin{equation}
\frac12
\left(
h+\tau_m(h)
\right)
=
\mathcal C_h(f_q)
\in\mathcal F_m .
\label{eq:sm-symmetrized-pure-free}
\end{equation}

Now take an arbitrary
\(\rho\in\mathcal F_m^{\rm Aq}\).
Choose any pure-state decomposition
\[
\rho=\sum_i p_i h_i.
\]
Since
\(\tau_m(\rho)=\rho\),
\[
\rho
=
\frac12
\left(
\rho+\tau_m(\rho)
\right)
=
\sum_i
p_i
\frac12
\left(
h_i+\tau_m(h_i)
\right).
\]
Every term on the right-hand side is free by
Eq.~\eqref{eq:sm-symmetrized-pure-free}, and convexity therefore
implies
\(\rho\in\mathcal F_m\).
Hence
\(\mathcal F_m^{\rm Aq}\subseteq\mathcal F_m\), completing the proof.
\end{proof}

Combining Theorems~\ref{thm:sm-local-characterization} and
\ref{thm:sm-composite-characterization}, the two operational
conditions
\[
\text{every pure state is alchemical},
\qquad
\dim\mathcal M_g=2d-4
\quad
(\forall g\in\mathcal P_1)
\]
characterize aquaternionicity on every finite composite system, up to
the overall phase of the reconstructed antiunitary and a unitary
change of basis.
{Together with Proposition~\ref{prop:sm-aq-free-cov}, this also fixes the associated free-instrument class to be the \(\Theta\)-covariant one.}

\section{Comparison with imaginarity and parity asymmetry}
\label{sec:sm-comparison}

We compare aquaternionicity with the two previously known alchemical
resource theories, imaginarity and parity asymmetry.
{The qubit classification and the all-or-nothing property of exact
simulation were established in Ref.~\cite{InverseDraft}; here we}
derive the higher-dimensional state-space geometry needed for the
comparisons in the main text.

{Throughout this section let \(d=2n\ge4\).}
For imaginarity, let \(K\) denote complex conjugation in a fixed
basis.
For parity asymmetry, let
\[
Z=I_n\oplus(-I_n).
\]
For aquaternionicity, let \(\Theta\) be the antiunitary satisfying
\(\Theta^2=-I\).
We use the shorthand
\[
W_{\rm Im}=K,\qquad
W_{\rm Par}=Z,\qquad
W_{\rm Aq}=\Theta ,
\]
and on \(m\) systems
\[
W_{\nu,m}=W_\nu^{\otimes m}.
\]
The corresponding conjugations of states are denoted by
\(\vartheta_{\nu,m}\).

The free states in the three standard theories satisfy
\[
\vartheta_{\nu,m}(\rho)=\rho.
\]
Their free instruments coincide with the corresponding covariant
instruments.  For aquaternionicity this was proved in
Proposition~\ref{prop:sm-aq-free-cov}; the imaginarity and parity
cases are the standard covariant formulations used in
Ref.~\cite{InverseDraft}.

\subsection{Maximal alchemical supports}

\begin{lemma}[Extension to maximal alchemical supports]
\label{lem:sm-three-maximal-extension}
Let {\(d=2n\ge4\)}, and let
\(\nu\in\{{\rm Im},{\rm Par},{\rm Aq}\}\).
If a subspace \(S\subseteq\mathcal H\) satisfies
\[
S\perp W_\nu S,
\]
then there exists an \(n\)-dimensional subspace \(L\supseteq S\)
such that
\[
L\perp W_\nu L.
\]
Moreover,
\[
\mathcal H=L\oplus W_\nu L.
\]
\end{lemma}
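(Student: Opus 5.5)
The approach is to recast each condition \(S\perp W_\nu S\) as total isotropy of \(S\) for a nondegenerate form \(\beta_\nu(x,y):=\langle W_\nu x|y\rangle\). We then use the standard fact that isotropic subspaces extend to isotropic subspaces of half dimension. The three cases are treated separately, since the forms have different types.

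For Aq, \(\beta\) is the complex bilinear alternating nondegenerate form \(\Omega\) introduced above. For Im, \(\beta(x,y)=x^Ty\) is complex bilinear, symmetric and nondegenerate. For Par, \(\beta(x,y)=\langle x|Z|y\rangle\) is a Hermitian (sesquilinear) form of signature \((n,n)\). In each case \(S\perp W_\nu S\) is exactly \(\beta_\nu(S,S)=0\), because the map \(y\mapsto\langle W_\nu x|y\rangle\) vanishing on \(S\) means \(W_\nu x\perp S\).

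The extension step works by induction on \(\dim S<n\). Given an isotropic \(S\), we look for a vector \(v\in S^{\beta}\setminus S\) with \(\beta(v,v)=0\); then \(S+\mathbb Cv\) is again isotropic. Here \(S^{\beta}=\{y:\beta(S,y)=0\}\) has complex dimension \(d-\dim S>\dim S\) by nondegeneracy, and \(\beta\) descends to a nondegenerate form of the same type on \(S^{\beta}/S\), which has dimension \(d-2\dim S\ge2\). It therefore suffices to find a nonzero isotropic vector in that quotient and lift it.

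This is where the types differ.
\begin{itemize}
\item Alternating (Aq): every vector is isotropic, which is Kramers orthogonality.
\item Symmetric complex bilinear (Im): over \(\mathbb C\), any nondegenerate quadratic form in dimension \(\ge2\) has a nonzero zero, for instance \(e_1+ie_2\) in an orthonormal basis for the form.
\item Hermitian (Par): the quotient form has signature \((n-k,n-k)\) with \(k=\dim S\), since Witt cancellation removes the hyperbolic part generated by \(S\). It is therefore indefinite and has a nonzero null vector \(u+w\), with \(u\) and \(w\) unit vectors from the positive and negative parts.
\end{itemize}
The induction stops at \(\dim L=n\). Indeed, isotropy forces \(\dim L\le n\): \(L\) and \(W_\nu L\) are orthogonal subspaces of equal dimension, using that \(W_\nu\) is antiunitary or unitary and hence injective.

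Finally, \(L\perp W_\nu L\) together with \(\dim_{\mathbb C}L=\dim_{\mathbb C}W_\nu L=n\) and \(2n=d\) gives the orthogonal direct sum \(\mathcal H=L\oplus W_\nu L\).

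The main obstacle is the Par case. There the form is only sesquilinear, and we must check that the quotient \(S^{\beta}/S\) stays indefinite; this is where the signature \((n,n)\) of \(Z=I_n\oplus(-I_n)\) is essential. I would first try to avoid quotients altogether. Since \(S\perp ZS\), the map \(s\mapsto(P_+s,P_-s)\) embeds \(S\) as the graph of an isometry \(T\colon A\to B\), where \(A=P_+S\), \(B=P_-S\) and \(P_\pm\) are the projectors onto the \(\pm1\) eigenspaces of \(Z\). Because \(P_+\mathcal H\) and \(P_-\mathcal H\) both have dimension \(n\), we can extend \(T\) to a unitary from \(P_+\mathcal H\) onto \(P_-\mathcal H\), and take \(L\) to be its graph. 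This gives an explicit \(L\) in one step, and a similar explicit pairing construction can serve as a sanity check in the Im and Aq cases.
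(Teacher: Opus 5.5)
Your proof is correct. For imaginarity it coincides with the paper's argument: pass to $S^{\perp_B}/S$, get nondegeneracy from $(S^{\perp_B})^{\perp_B}=S$, pick an isotropic vector such as $e_1+ie_2$, lift, and repeat. The graph-of-an-isometry construction you sketch at the end is exactly the paper's proof for parity asymmetry.

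The genuine difference is that you run one Witt-type extension template for all three forms.
\begin{itemize}
\item For aquaternionicity, the paper only appeals to ``completing Kramers pairs.'' You obtain the extension from the same quotient step, plus the fact that $\Omega$ is alternating.
\item For parity asymmetry, your primary route applies the template to the Hermitian form $\langle x|Z|y\rangle$. This costs an extra check: that $S^\beta/S$ still has split signature $(n-k,n-k)$. The check is right. Take a complement $W$ of $S$ in $S^\beta$. Then $W^\beta$ is a nondegenerate $2k$-dimensional space containing the $k$-dimensional totally isotropic $S$, so it has signature $(k,k)$. Additivity of inertia then leaves signature $(n-k,n-k)$ on $W\cong S^\beta/S$. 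You invoke this by name rather than proving it, but it is standard.
\end{itemize}

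What each route buys:
\begin{itemize}
\item The paper's unitary-graph construction avoids both the induction and the inertia bookkeeping. It also directly parametrizes all maximal $Z$-neutral subspaces as graphs of unitaries $\mathcal H_+\to\mathcal H_-$.
\item Your uniform route shows that the only case-dependent input is a nonzero null vector in a nondegenerate quotient of dimension at least two. It also supplies an explicit proof of the Kramers-pair completion, which the paper leaves implicit.
\end{itemize}
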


\begin{proof}
{For aquaternionicity, this extension follows from the Kramers-pair completion described in Sec.~\ref{sec:sm-aquaternionicity}.}

For imaginarity, consider the nondegenerate complex symmetric
bilinear form
\[
B(x,y)=x^Ty.
\]
The condition \(S\perp KS\) is exactly
\(B|_{S\times S}=0\).
Hence
\[
S\subseteq S^{\perp_B}.
\]
If \(r=\dim_{\mathbb C}S<n\), the quotient
\[
S^{\perp_B}/S
\]
has complex dimension \(2(n-r)\), and the bilinear form induced by
\(B\) on this quotient is nondegenerate.
Indeed, the radical of \(B|_{S^{\perp_B}}\) is
\[
(S^{\perp_B})^{\perp_B}=S.
\]
Every nondegenerate symmetric bilinear form over \(\mathbb C\) of
dimension at least two contains a nonzero isotropic vector: after a
linear change of basis the form is represented by the identity, and
\(e_1+ie_2\) is isotropic.
Choose such a vector in the quotient and lift it to
\(v\in S^{\perp_B}\).
Then
\[
B(v,v)=0,\qquad
B(v,S)=0,
\]
so
\(S+\mathbb Cv\)
is again totally isotropic.
Repeating this construction extends \(S\) to an
\(n\)-dimensional totally isotropic subspace \(L\).
Since \(L\perp KL\) and both spaces have dimension \(n\),
\[
\mathcal H=L\oplus KL.
\]

For parity asymmetry, write
\[
\mathcal H=\mathcal H_+\oplus\mathcal H_-,
\qquad
Z=I_{\mathcal H_+}\oplus(-I_{\mathcal H_-}),
\]
with
\(\dim\mathcal H_+=\dim\mathcal H_-=n\).
For \(s=(x,y)\in S\), neutrality gives
\[
\|x\|^2-\|y\|^2=0.
\]
Therefore both coordinate projections
\(P_\pm|_S\) are injective.
Let
\[
M_\pm=P_\pm S.
\]
The condition
\(\langle s|Zs'\rangle=0\)
for all \(s,s'\in S\) implies that
\[
U_0:M_+\to M_-,
\qquad
U_0(P_+s)=P_-s,
\]
is an isometry.
Extend \(U_0\) to a unitary
\[
U:\mathcal H_+\to\mathcal H_-.
\]
Then
\[
L
=
\left\{
\frac{x\oplus Ux}{\sqrt2}:
x\in\mathcal H_+
\right\}
\]
contains \(S\), has complex dimension \(n\), and satisfies
\(L\perp ZL\).
Since \(ZL\) also has dimension \(n\),
\[
\mathcal H=L\oplus ZL.
\]
\end{proof}

\subsection{Alchemical support conditions}

\begin{proposition}[Alchemical support criterion]
\label{prop:sm-three-support}
Let \(\rho\) be a single-system state and
\(S=\operatorname{supp}\rho\).
For
\(\nu\in\{{\rm Im},{\rm Par},{\rm Aq}\}\),
\begin{equation}
\rho\in{\mathcal A^\nu}
\quad\Longleftrightarrow\quad
S\perp W_\nu S .
\label{eq:sm-three-support}
\end{equation}
For a pure state
\(g=\ket{\psi}\!\bra{\psi}\), this becomes
\begin{equation}
\begin{array}{ll}
{\rm Im}:&
\psi^T\psi=0,\\[1mm]
{\rm Par}:&
\langle\psi|Z|\psi\rangle=0,\\[1mm]
{\rm Aq}:&
\langle\Theta\psi|\psi\rangle=0,
\end{array}
\label{eq:sm-three-pure-conditions}
\end{equation}
where the aquaternionicity condition holds for every \(\psi\).
\end{proposition}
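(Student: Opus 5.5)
The plan is to run the two-direction argument of Proposition~\ref{prop:sm-aq-support-criterion} uniformly in \(\nu\). The only ingredients it needs are the following. First, covariance of free channels under \(\vartheta_{\nu,1}\). Second, the fact that every pure state \(g\) satisfies \(\operatorname{supp}\vartheta_{\nu,1}(g)=W_\nu\operatorname{supp}g\). Third, the existence of some pure alchemical target \(g\) with \(g\perp\vartheta_{\nu,1}(g)\). Fourth, the extension Lemma~\ref{lem:sm-three-maximal-extension} together with a controlled-Kraus processor. For \(\nu={\rm Aq}\) the statement is exactly Proposition~\ref{prop:sm-aq-support-criterion}, so only Im and Par need new work.

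\emph{Necessity.} Suppose \(\rho\in\mathcal A^\nu\). I pick a pure state \(g=\ket{\psi}\!\bra{\psi}\) with \(g\perp\vartheta_{\nu,1}(g)\). For Par I take \(\psi=(e_1+f_1)/\sqrt2\), so that \(\langle\psi|Z|\psi\rangle=0\). For Im I take \(\psi=(e_1+ie_2)/\sqrt2\), so that \(\psi^T\psi=0\).

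By state universality there is a free channel \(\Lambda\) with \(\Lambda(\rho)=g\). Free channels are covariant, so \(\Lambda(\vartheta_{\nu,1}(\rho))=\vartheta_{\nu,1}(g)\). Trace-distance contractivity then forces \(\tfrac12\|\rho-\vartheta_{\nu,1}(\rho)\|_1=1\). Hence the supports of \(\rho\) and \(\vartheta_{\nu,1}(\rho)=W_\nu\rho W_\nu^{-1}\) are orthogonal, and the latter support is \(W_\nu S\).

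For Im and Par, the covariance of free channels is the standard covariant formulation cited from Ref.~\cite{InverseDraft}. I will only check that it passes to branch maps in the same way as in Lemma~\ref{lem:sm-aq-cfree-cov}.

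\emph{Sufficiency.} Suppose \(S\perp W_\nu S\). I extend \(S\) to \(L\) by Lemma~\ref{lem:sm-three-maximal-extension}, so that \(\mathcal H=L\oplus W_\nu L\), and set \(P=P_L\), \(\bar P=P_{W_\nu L}=\vartheta_{\nu,1}(P)\). I then mimic Proposition~\ref{prop:sm-aq-uniform-simulation} with Kraus operators
\[
K_{x\mu}=A_{x\mu}\otimes P+A_{x\mu}^{\#}\otimes\bar P,
\qquad
A^{\#}:=W_{\nu,m}AW_{\nu,m}^{-1}.
\]
For Im and Par, \(W_\nu^2=I\), so \((A^\#)^\#=A\). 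This gives \(K^\#=K\) for the \((m+1)\)-system conjugation, so each \(K_{x\mu}\) commutes with \(W_{\nu,m+1}\) and is covariant, hence free. Normalization follows from \(P+\bar P=I\), and \(\rho\in\mathcal D(L)\) yields \(\Lambda_x(\sigma\otimes\rho)=\mathcal N_x(\sigma)\otimes\rho\).

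Discarding the resource register is covariant, since the partial trace commutes with \(\vartheta\), so it is free. This gives universal simulation. Replication comes from the replacement-channel trick with the free data state \(I/d\). Finally, \(\rho\) is nonfree: invariance would force \(S=W_\nu S\subseteq L\cap W_\nu L=0\).

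The pure-state specializations are immediate. For \(S=\mathbb C\psi\), the condition reads \(\langle W_\nu\psi|\psi\rangle=0\). For Im this is \(\overline{\psi}^{\,\dagger}\psi=\psi^T\psi\), for Par it is \(\langle\psi|Z|\psi\rangle\), and for Aq it holds automatically by Eq.~\eqref{eq:sm-kramers-orthogonality}.

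The main obstacle is the freeness claims for Im and Par: that covariant Kraus operators, and the partial trace, are free in those hierarchies, and conversely that free branches are covariant. I will handle this by citing the covariant formulation of Ref.~\cite{InverseDraft}, and if needed reproduce the Lemma~\ref{lem:sm-aq-cfree-cov} argument. That argument transfers directly because \(\vartheta_\nu^2={\rm id}\) and the \(\pm1\) eigenspace trick only uses a nonzero anti-invariant \(Y\) on one system. Such a \(Y\) exists for Im (any imaginary antisymmetric Hermitian matrix) and for Par (any off-diagonal block Hermitian matrix).
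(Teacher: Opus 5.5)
Your proposal is correct and follows essentially the same route as the paper. Necessity goes through state universality onto a pure target $g$ with $g\perp\vartheta_{\nu,1}(g)$ (the same choices $(e+if)/\sqrt2$ and $(e+f)/\sqrt2$), covariance of free channels, and trace-distance contractivity; sufficiency goes through extension to $L$ with $\mathcal H=L\oplus W_\nu L$, the controlled-Kraus processor $A_{x\mu}\otimes P+A_{x\mu}^{\#}\otimes\bar P$, free discarding of the resource register, and the replacement-channel trick for replication. Your explicit check that $\rho$ is nonfree is a small addition that the paper's proof of this proposition leaves implicit: invariance would force $S=W_\nu S\subseteq L\cap W_\nu L=\{0\}$.
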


\begin{proof}
Necessity is common to all three theories.
Choose a pure alchemical state \(g_0\) satisfying
\[
g_0\perp\vartheta_{\nu,1}(g_0).
\]
Such a state exists in each model: for imaginarity one may take
\((e+if)/\sqrt2\) with real orthonormal \(e,f\); for parity asymmetry
one may take \((e+f)/\sqrt2\) with \(e\) and \(f\) in the \(+1\) and
\(-1\) sectors of \(Z\); for aquaternionicity every pure state has
this property.

If \(\rho\) is alchemical, state universality gives a free channel
\(\Lambda\) with
\(\Lambda(\rho)=g_0\).
Covariance implies
\[
\Lambda(\vartheta_{\nu,1}(\rho))
=
\vartheta_{\nu,1}(g_0).
\]
Since the two output states have orthogonal supports, their trace
distance is one.
Contractivity therefore gives
\[
D(\rho,\vartheta_{\nu,1}(\rho))=1,
\]
{and unit trace distance is equivalent to orthogonality of the
supports, i.e.,
\[
S\perp W_\nu S.
\]}

Conversely, suppose \(S\perp W_\nu S\).
By Lemma~\ref{lem:sm-three-maximal-extension}, choose an
\(n\)-dimensional \(L\supseteq S\) with
\[
\mathcal H=L\oplus W_\nu L.
\]
Let
\(P=P_L\) and \(Q=P_{W_\nu L}\).
For a target instrument
\[
\mathcal N_x(X)
=
\sum_\mu A_{x\mu}XA_{x\mu}^\dagger,
\]
define
\[
A_{x\mu}^{\#}
=
W_{\nu,m}A_{x\mu}W_{\nu,m}^{-1},
\]
with the natural conjugate-linear interpretation for the
antiunitary cases, and
\[
K_{x\mu}
=
A_{x\mu}\otimes P
+
A_{x\mu}^{\#}\otimes Q.
\]
The two blocks are orthogonal and exchanged by the relevant symmetry,
so
\[
\sum_{x,\mu}K_{x\mu}^\dagger K_{x\mu}=I
\]
and each branch is covariant.
Hence the resulting instrument is free.
For every
\(\omega\in\mathcal D(L)\),
\[
\Lambda_{x,L}(\rho\otimes\omega)
=
\mathcal N_x(\rho)\otimes\omega.
\]

{The processor returns the resource state unchanged.}
Partial trace is covariant in all three standard theories and hence
{free, so discarding the returned resource state gives the simulator required}
by Definition~\ref{def:sm-alchemical}.
Choosing a replacement channel as the target gives exact replication.
Thus every state supported in \(L\), and in particular \(\rho\), is
alchemical.

The pure-state conditions in
Eq.~\eqref{eq:sm-three-pure-conditions}
are the one-dimensional instances of
Eq.~\eqref{eq:sm-three-support}.
\end{proof}

Thus all pure states are alchemical only in aquaternionicity.
Imaginarity and parity asymmetry select proper subsets of the
pure-state space.

\subsection{Dimensions of the admissible mixing sets}

\begin{proposition}[Admissible mixing dimensions]
\label{prop:sm-three-mixing}
Let \(d=2n\ge4\), and fix a pure alchemical state
\(g=\ket{\psi}\!\bra{\psi}\) in each theory.
Then
\begin{equation}
\dim\mathcal M_g^{\rm Im}=2d-6,\qquad
\dim\mathcal M_g^{\rm Par}=2d-5,\qquad
\dim\mathcal M_g^{\rm Aq}=2d-4.
\label{eq:sm-three-mixing}
\end{equation}
\end{proposition}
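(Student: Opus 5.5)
The plan is to reduce everything to the support criterion of Proposition~\ref{prop:sm-three-support}. Let \(h=\ket{\phi}\!\bra{\phi}\) with \(h\neq g\). For \(0<t<1\), the support of \((1-t)g+th\) is \(S=\operatorname{span}_{\mathbb C}\{\psi,\phi\}\). So \(h\in\mathcal M_g^\nu\) holds if and only if \(S\perp W_\nu S\). The degenerate case \(h=g\) adds a single point and does not affect the dimension. By Proposition~\ref{prop:sm-mixing-rate-independence}, there is nothing further to check in the \(t\)-dependence.

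Expanding \(S\perp W_\nu S\) by sesquilinearity (or bilinearity) gives three conditions. The first is \(\langle W_\nu\psi|\psi\rangle=0\), which already holds because \(g\) is alchemical. The second is the cross condition, which involves both \(\langle W_\nu\psi|\phi\rangle\) and \(\langle W_\nu\phi|\psi\rangle\). The third is the self-condition \(\langle W_\nu\phi|\phi\rangle=0\).

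The cross conditions reduce as follows.
\begin{itemize}
\item For Im, they collapse to the single complex equation \(\psi^T\phi=0\), by symmetry of \(B(x,y)=x^Ty\).
\item For Aq, they collapse to the single complex equation \(\Omega(\psi,\phi)=0\), by antisymmetry of \(\Omega\).
\item For Par, \(\langle\psi|Z|\phi\rangle=0\) and its conjugate give one complex condition, i.e.\ two real conditions.
\end{itemize}
The self-condition contributes one complex (two real) condition for Im, one real condition for Par, and nothing for Aq by Eq.~\eqref{eq:sm-kramers-orthogonality}.

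Aq is then immediate: \(\mathcal M_g^{\rm Aq}=\mathcal P((\Theta\psi)^\perp)\cong\mathbb{CP}^{d-2}\), which has dimension \(2d-4\). This is exactly Proposition~\ref{prop:sm-aq-saturation}, which I would cite.

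For Im, the set is the projectivization of the cone \(\{\phi\in\psi^{\perp_B}:\phi^T\phi=0\}\), a quadric inside the \((d-1)\)-dimensional complex hyperplane \(\psi^{\perp_B}\). Since \(\psi\in\psi^{\perp_B}\), the restricted form is degenerate, with radical \(\mathbb C\psi\). It therefore induces a nondegenerate form of rank \(d-2\ge2\) on the quotient. Its zero set is a quadric cone of complex dimension \(d-2\), whose projectivization has complex dimension \(d-3\), hence real dimension \(2d-6\). I need the quadric to be an irreducible hypersurface of the expected dimension. This holds because \(\phi^T\phi\) restricted to \(\psi^{\perp_B}\) is a nonzero quadratic polynomial: the rank is at least \(2\), so it is not identically zero. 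I would exhibit an explicit smooth point to make the dimension count rigorous, for instance a lift of an isotropic vector of the quotient that is not in the radical.

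For Par, write \(\mathcal M_g^{\rm Par}\) as the projectivization of \(\{\phi\in V:\langle\phi|Z|\phi\rangle=0\}\), where \(V=(Z\psi)^\perp\) has complex dimension \(d-1\). Here \(\psi\in V\) and \(\langle\psi|Z|\psi\rangle=0\), so the Hermitian form \(\langle\cdot|Z|\cdot\rangle\) restricted to \(V\) has radical \(\mathbb C\psi\). On the quotient it is nondegenerate, with signature \((n-1,n-1)\). This follows by counting: \(Z\) has signature \((n,n)\), and restricting to the hyperplane \(V\), which contains the isotropic radical vector, lowers each index by one. The zero set of an indefinite nondegenerate Hermitian form is a real hypersurface. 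Projectivized, it has real dimension \(2(d-1)-1-1=2d-4\) once the radical direction is included. One must check that this hypersurface is regular somewhere, which indefiniteness guarantees. I would confirm the final value \(2d-5\) against the End Matter count: two real conditions from the cross term plus one from the self-condition, removed from \(\dim\mathbb{CP}^{d-1}=2d-2\).

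The main obstacle is making the real-dimension counts rigorous rather than heuristic. The constraints must be independent, so I will verify smoothness by finding a point where the differentials are surjective. In the Par case this bookkeeping needs care, because the hyperplane \(V\) is complex (two real conditions) while the self-condition is real (one condition). I would settle it with the explicit \(n=2\) basis from Sec.~\ref{sec:sm-comparison} and then argue that the same transversality holds for general \(n\).
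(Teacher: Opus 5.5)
Your strategy is the paper's: you reduce \(h\in\mathcal M_g^\nu\) to isotropy of \(\operatorname{span}_{\mathbb C}\{\psi,\phi\}\) via Proposition~\ref{prop:sm-three-support}, then count the surviving cross and self conditions. Your quotient-by-radical language is a coordinate-free version of the paper's normal forms. For Im these are \(\psi=(e_1+ie_2)/\sqrt2\) and \(\phi=a\psi+\sum_{j\ge3}z_je_j\) with \(\sum_j z_j^2=0\). For Par they are \(\psi=(u+v)/\sqrt2\) and \(\phi=a\psi+x+y\) with \(\|x\|^2=\|y\|^2\). Your Aq and Im parts are fine, apart from one false but harmless remark. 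For \(d=4\) the induced form on the quotient has rank \(2\), so the quadric \(z_3^2+z_4^2=(z_3+iz_4)(z_3-iz_4)=0\) is reducible: it is two projective lines through \([\psi]\). You do not need irreducibility. A nonzero quadratic on the quotient already gives pure complex dimension \(d-3\).

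The Par step, however, fails as written. Your own count, ``\(2(d-1)-1-1=2d-4\)'', contradicts the claim. You then fall back on the End Matter's tally of real conditions, but that tally is the statement you are supposed to prove. The error is in the projectivization. The null set \(\{\phi\in V:\langle\phi|Z|\phi\rangle=0\}\) is a cone of real dimension \(2(d-1)-1=2d-3\), invariant under all of \(\mathbb C^\times\). Passing to \(\mathbb P(V)\cong\mathbb{CP}^{d-2}\) removes two real dimensions (modulus and phase), not one, which gives \(2d-5\).

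To make this rigorous you do not need the \(n=2\) detour. In the coordinates \(\phi=a\psi+x+y\), with \(x\in\mathcal H_+\cap u^\perp\) and \(y\in\mathcal H_-\cap v^\perp\), the form on \(V\) is \(\|x\|^2-\|y\|^2\), independent of \(a\). Its real differential vanishes only on \(x=y=0\), i.e.\ on \(\mathbb C\psi\). Its null set contains points with \(\|x\|=\|y\|=1\), because \(n\ge2\). So away from the single point \([\psi]\), \(\mathcal M_g^{\rm Par}\) is a smooth real hypersurface in the \((2d-4)\)-dimensional \(\mathbb P(V)\), of dimension \(2d-5\). With this correction your argument coincides with the paper's.
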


\begin{proof}
For
\(h=\ket{\phi}\!\bra{\phi}\)
and \(0<t<1\), the support of
\((1-t)g+th\)
is
\(\operatorname{span}_{\mathbb C}\{\psi,\phi\}\).
Proposition~\ref{prop:sm-three-support} therefore reduces the problem
to the isotropy of this two-dimensional support.

For imaginarity the conditions are
\[
\psi^T\psi=0,\qquad
\psi^T\phi=0,\qquad
\phi^T\phi=0.
\]
Choose a real orthonormal basis such that
\[
\psi=\frac{e_1+ie_2}{\sqrt2}.
\]
The mutual condition
\(\psi^T\phi=0\)
allows us to write
\[
\phi
=
a\psi+\sum_{j=3}^d z_je_j .
\]
The remaining self-condition becomes
\[
\sum_{j=3}^d z_j^2=0.
\]
Inside the projective hyperplane defined by the mutual complex-linear
condition, this is one nontrivial complex equation.
Hence
\[
\dim\mathcal M_g^{\rm Im}=2d-6.
\]

For parity asymmetry, write
\[
\psi=\frac{u+v}{\sqrt2},
\qquad
u\in\mathcal H_+,\quad
v\in\mathcal H_-,
\qquad
\|u\|=\|v\|=1,
\]
where
\(\mathcal H_\pm\)
are the two eigenspaces of \(Z\).
The mutual condition
\(\langle\psi|Z|\phi\rangle=0\)
removes one complex degree of freedom.
After resolving this condition,
\[
\phi=a\psi+x+y,
\]
with
\(x\in\mathcal H_+\cap u^\perp\) and
\(y\in\mathcal H_-\cap v^\perp\).
The self-condition
\(\langle\phi|Z|\phi\rangle=0\)
then imposes one additional real constraint,
\[
\|x\|^2=\|y\|^2.
\]
The regular locus is nonempty and has real dimension \(2d-5\);
the locus where this real constraint is singular is lower-dimensional.
Therefore
\[
\dim\mathcal M_g^{\rm Par}=2d-5.
\]

For aquaternionicity, the self-condition
\(\langle\Theta\phi|\phi\rangle=0\)
is automatic.
Only the complex-linear condition
\[
\langle\Theta\psi|\phi\rangle=0
\]
remains, and hence
\[
\mathcal M_g^{\rm Aq}
=
\mathcal P((\Theta\psi)^\perp)
\simeq
\mathbb{CP}^{d-2},
\]
which has dimension \(2d-4\).
\end{proof}

The separation reflects the number of independent self-conditions
that survive after imposing compatibility with the reference resource:
one additional complex condition for imaginarity, one additional real
condition for parity asymmetry, and none for aquaternionicity.

\subsection{Operational benchmarks inherited from imaginarity}

The enlarged state-space abundance of aquaternionicity does not come
from weakening the operational tasks studied previously for
imaginarity \cite{Wu2021,Wu2024}.
We record three representative counterparts.

\paragraph{Resource distillation.---}
Fix any pure aquaternionicity resource \(g_0\), and for \(N\ge1\)
define
\[
F_N^{\rm Aq}(\rho)
:=
\sup_{\Lambda\in{\mathcal O^{\rm Aq}}(N\to1)}
\operatorname{Tr}
\!\left[
g_0\Lambda(\rho^{\otimes N})
\right].
\]

\begin{proposition}[Optimal multi-copy distillation]
\label{prop:sm-aq-distillation}
For every state \(\rho\) and \(N\ge1\),
\begin{equation}
F_N^{\rm Aq}(\rho)
=
\frac12
+
\frac14
\left\|
\rho^{\otimes N}
-
\tau_1(\rho)^{\otimes N}
\right\|_1 .
\label{eq:sm-aq-distillation}
\end{equation}
In particular, every pure state satisfies
\(F_1^{\rm Aq}=1\).
\end{proposition}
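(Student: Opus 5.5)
The plan is to prove Eq.~\eqref{eq:sm-aq-distillation} by a matching upper bound (a converse valid for every free channel) and an explicit free channel achieving it. Throughout I write $A:=\rho^{\otimes N}$ and use $\tau_N(A)=\tau_1(\rho)^{\otimes N}$, which holds because $\Theta_N=\Theta^{\otimes N}$. By Proposition~\ref{prop:sm-aq-free-cov}, $\mathcal O^{\rm Aq}(N\to1)$ consists exactly of the $\Theta$-covariant channels, $\Lambda\circ\tau_N=\tau_1\circ\Lambda$. I also use that $\tau_m$ is a real-linear, trace-preserving involution on Hermitian operators which is self-adjoint for the Hilbert--Schmidt pairing, $\operatorname{Tr}[X\tau_m(Y)]=\operatorname{Tr}[\tau_m(X)Y]$. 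This holds for every $m$, since $\tau_m^2=\operatorname{id}$ regardless of the sign of $\Theta_m^2$.

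\emph{Converse.} Fix a free $\Lambda$ and set $E:=\Lambda^\dagger(g_0)$ and $E':=\Lambda^\dagger(\tau_1(g_0))$. Covariance together with self-adjointness of $\tau$ gives $E'=\tau_N(E)$. By Kramers orthogonality, $g_0\perp\tau_1(g_0)$, so $g_0+\tau_1(g_0)\le I$ and hence $0\le E,E'$ with $E+E'\le I$. I will decompose
\[
\operatorname{Tr}(EA)=\tfrac12\operatorname{Tr}[(E+E')A]+\tfrac12\operatorname{Tr}[(E-E')A].
\]
The first term is at most $1/2$. For the second, $\operatorname{Tr}[(E-E')\tau_N(A)]=-\operatorname{Tr}[(E-E')A]$ because $\tau_N$ exchanges $E$ and $E'$. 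Therefore $\operatorname{Tr}[(E-E')A]=\tfrac12\operatorname{Tr}[(E-E')(A-\tau_N A)]$. Since $\|E-E'\|_\infty\le1$, this is at most $\tfrac12\|A-\tau_NA\|_1$. Combining the two terms gives $F_N^{\rm Aq}(\rho)\le\tfrac12+\tfrac14\|A-\tau_NA\|_1$.

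\emph{Achievability.} Let $\Delta:=A-\tau_N(A)$. Then $\Delta$ is Hermitian and traceless, and $\tau_N(\Delta)=-\Delta$. Let $\Pi$ and $\Pi_0$ be the spectral projectors onto the positive part and the kernel of $\Delta$. Because $\tau_N$ is implemented by an antiunitary, it maps spectral projectors of $\Delta$ to those of $-\Delta$. Hence $\tau_N(\Pi)$ is the negative-part projector, $\tau_N(\Pi_0)=\Pi_0$, and $\Pi+\tau_N(\Pi)+\Pi_0=I$. I then define the measure-and-prepare channel
\[
\Lambda(X)=\operatorname{Tr}(\Pi X)\,g_0+\operatorname{Tr}[\tau_N(\Pi)X]\,\tau_1(g_0)+\operatorname{Tr}(\Pi_0X)\,\tfrac12\bigl(g_0+\tau_1(g_0)\bigr).
\]
A direct check using the self-adjointness of $\tau$ shows $\Lambda\circ\tau_N=\tau_1\circ\Lambda$, so $\Lambda$ is free. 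Its fidelity is $\operatorname{Tr}(\Pi A)+\tfrac12\operatorname{Tr}(\Pi_0A)$. Using the normalization $\operatorname{Tr}(\Pi A)+\operatorname{Tr}[\tau_N(\Pi)A]+\operatorname{Tr}(\Pi_0A)=1$, the fidelity equals $\tfrac12+\tfrac12\operatorname{Tr}(\Pi\Delta)$. Since $\Delta$ is traceless, $\operatorname{Tr}(\Pi\Delta)=\tfrac12\|\Delta\|_1$, which matches the upper bound.

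The step I expect to need the most care is the bookkeeping for $\tau_N$ when $N$ is odd, where $\Theta_N^2=-I$. Three facts must hold there: $\tau_N$ still maps the positive-part projector of $\Delta$ onto the negative-part projector; it fixes the kernel projector; and the identity $\tau_N(E)=\Lambda^\dagger(\tau_1(g_0))$ holds. All three should follow from $\tau_N$ being conjugation by an antiunitary, hence an involutive, trace-preserving Jordan automorphism. The final claim follows by taking $\rho$ pure and $N=1$. Kramers orthogonality gives $\rho\perp\tau_1(\rho)$, so $\|\rho-\tau_1(\rho)\|_1=2$ and $F_1^{\rm Aq}=1$.
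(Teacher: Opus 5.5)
Your proof is correct and follows the paper's argument. Your achieving channel is the paper's measure-and-prepare channel with $M_N=(I+\operatorname{sgn}\Delta_N)/2=\Pi+\tfrac12\Pi_0$, and your explicit treatment of the kernel makes the $\operatorname{sgn}(0)=0$ convention transparent. Your converse uses the same split $p=\tfrac12(p+q)+\tfrac12(p-q)$, written in the Heisenberg picture: $E'=\tau_N(E)$ and $\|E-E'\|_\infty\le 1$ replace the paper's Schr\"odinger-picture use of trace-distance contractivity.
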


\begin{proof}
Let
\[
\Delta_N
=
\rho^{\otimes N}
-
\tau_1(\rho)^{\otimes N},
\qquad
M_N
=
\frac{I+\operatorname{sgn}\Delta_N}{2}.
\]
Since
\(\tau_N(\Delta_N)=-\Delta_N\),
we have
\(\tau_N(M_N)=I-M_N\).
The measure-and-prepare channel
\[
\Lambda_N(X)
=
\operatorname{Tr}(M_NX)\,g_0
+
\operatorname{Tr}[(I-M_N)X] \,\tau_1(g_0)
\]
is therefore covariant and hence free.
{
Since \(\tau_N(M_N)=I-M_N\), self-adjointness of \(\tau_N\) gives
\[
\operatorname{Tr}\!\left[
(I-M_N)\tau_1(\rho)^{\otimes N}
\right]
=
\operatorname{Tr}\!\left[
M_N\rho^{\otimes N}
\right].
\]
Hence the equal-prior Helstrom success probability for discriminating
\(\rho^{\otimes N}\) from \(\tau_1(\rho)^{\otimes N}\) is precisely
\(\operatorname{Tr}[M_N\rho^{\otimes N}]\), which is also the overlap
of \(\Lambda_N(\rho^{\otimes N})\) with \(g_0\).
The Helstrom formula therefore gives Eq.~\eqref{eq:sm-aq-distillation}.
}

To prove optimality, let
\[
\sigma=\Lambda(\rho^{\otimes N})
\]
for an arbitrary free channel \(\Lambda\), and set
\[
p=\operatorname{Tr}(g_0\sigma),
\qquad
q=\operatorname{Tr}[\tau_1(g_0)\sigma].
\]
Since
\(g_0\perp\tau_1(g_0)\),
\[
p+q\le1.
\]
Moreover, covariance of \(\Lambda\) gives
\[
\tau_1(\sigma)
=
\Lambda\!\left(\tau_1(\rho)^{\otimes N}\right).
\]
Using the self-adjointness of \(\tau_1\) with respect to the trace
pairing,
\[
p-q
=
\operatorname{Tr}
\!\left[
g_0\bigl(\sigma-\tau_1(\sigma)\bigr)
\right].
\]
Since \(g_0\) is an effect and
\(\operatorname{Tr}[\sigma-\tau_1(\sigma)]=0\),
\[
p-q
\le
\frac12
\left\|
\sigma-\tau_1(\sigma)
\right\|_1.
\]
Therefore
\[
2p
=
(p+q)+(p-q)
\le
1+
\frac12
\left\|
\sigma-\tau_1(\sigma)
\right\|_1.
\]
Trace-distance contractivity then gives
\[
p
\le
\frac12
+
\frac14
\left\|
\rho^{\otimes N}
-
\tau_1(\rho)^{\otimes N}
\right\|_1.
\]
This coincides with the value achieved above and proves
Eq.~\eqref{eq:sm-aq-distillation}.

For a pure input \(g\), Kramers orthogonality gives
\(g\perp\tau_1(g)\), so the right-hand side equals one already for
\(N=1\).
\end{proof}

{
\begin{corollary}[Asymptotic distillation]
\label{cor:sm-aq-asymptotic-distillation}
Let \(\rho\) be nonfree, i.e.,
\(\rho\neq\tau_1(\rho)\).
Then the optimal one-copy distillation error satisfies
\begin{equation}
\lim_{N\to\infty}
-\frac1N
\log\!\left(1-F_N^{\rm Aq}(\rho)\right)
=
\xi_{\rm QCB}\!\left(\rho,\tau_1(\rho)\right),
\label{eq:sm-aq-distillation-exponent}
\end{equation}
where
\[
\xi_{\rm QCB}(\rho,\sigma)
:=
-\log
\min_{0\le s\le1}
\operatorname{Tr}\!\left(\rho^s\sigma^{1-s}\right)
\]
is the quantum Chernoff exponent~\cite{Audenaert2007}.

Moreover, in the usual asymptotic resource-distillation sense, the
distillation yield of \(g_0\) is unbounded.  More precisely, for
\emph{any} sequence of positive integers \(m_N\), there exist free
channels
\[
\Gamma_N:N\to m_N
\]
such that
\begin{equation}
D\!\left(
\Gamma_N(\rho^{\otimes N}),
g_0^{\otimes m_N}
\right)
\longrightarrow0.
\label{eq:sm-aq-unbounded-yield}
\end{equation}
Consequently, if the conventional distillation rate is defined as the
supremum of achievable \(m_N/N\), then
\[
R_{\rm dist}(\rho\to g_0)=+\infty.
\]
\end{corollary}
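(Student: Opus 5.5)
The proof has two parts: the error exponent in Eq.~\eqref{eq:sm-aq-distillation-exponent}, and the unbounded yield in Eq.~\eqref{eq:sm-aq-unbounded-yield}.

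\emph{Error exponent.} Proposition~\ref{prop:sm-aq-distillation} gives
\[
1-F_N^{\rm Aq}(\rho)
=
\frac12-\frac14\left\|\rho^{\otimes N}-\tau_1(\rho)^{\otimes N}\right\|_1 .
\]
This is exactly the minimal equal-prior Helstrom error probability for discriminating $\rho^{\otimes N}$ from $\tau_1(\rho)^{\otimes N}=\tau_N(\rho^{\otimes N})$. By hypothesis $\rho\neq\tau_1(\rho)$, so the quantum Chernoff theorem~\cite{Audenaert2007} applies directly. It says that $-\frac1N\log$ of this error converges to $\xi_{\rm QCB}(\rho,\tau_1(\rho))$, which is finite and strictly positive. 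I would only check that the Chernoff theorem needs nothing beyond $\rho\ne\sigma$; in particular it does not need full support, since it holds for arbitrary pairs of distinct states.

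\emph{Unbounded yield.} I would use a measure-and-prepare channel, generalizing $\Lambda_N$ from the proof of Proposition~\ref{prop:sm-aq-distillation}. Let $M_N$ be the Helstrom projector constructed there, which satisfies $\tau_N(M_N)=I-M_N$. Define
\[
\Gamma_N(X)
=
\operatorname{Tr}(M_NX)\,g_0^{\otimes m_N}
+
\operatorname{Tr}[(I-M_N)X]\,\tau_{m_N}\!\left(g_0^{\otimes m_N}\right),
\]
where $\tau_{m_N}(g_0^{\otimes m_N})=\tau_1(g_0)^{\otimes m_N}$. The key step is covariance: $\Gamma_N\circ\tau_N=\tau_{m_N}\circ\Gamma_N$. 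To see this, note that $\operatorname{Tr}[M_N\tau_N(X)]=\operatorname{Tr}[\tau_N(M_N)X]=\operatorname{Tr}[(I-M_N)X]$ by self-adjointness of $\tau_N$ on Hermitian operators, and that $\tau_{m_N}$ is an involution on operators. Proposition~\ref{prop:sm-aq-free-cov} then shows $\Gamma_N$ is free. The parity mismatch between $N$ and $m_N$ causes no trouble, because that proposition already handles it through the free-ancilla closure.

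\emph{Convergence.} The output is $p\,g_0^{\otimes m_N}+(1-p)\,\tau_1(g_0)^{\otimes m_N}$ with $p=F_N^{\rm Aq}(\rho)$. Its trace distance to $g_0^{\otimes m_N}$ is therefore at most $1-p$, and by the first part $1-p\to0$ (indeed exponentially fast), independently of $m_N$. Consequently every sequence $m_N$ is achievable, the supremum of achievable $m_N/N$ is unbounded, and $R_{\rm dist}(\rho\to g_0)=+\infty$.

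The main obstacle is conceptual rather than technical: the yield statement must hold for arbitrary, even superexponentially growing, $m_N$. This works only because the free class allows preparing arbitrarily many copies of an orthogonal pair in a covariant way. One point still needs care. The channel that prepares $g_0^{\otimes m_N}$ or its $\tau$-image must be a genuine channel in $\mathcal O(N\to m_N)$, not merely covariant at the level of the preparation step, and the covariance computation above is what settles this.
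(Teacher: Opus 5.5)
Your proof is correct. The error-exponent part matches the paper's argument: the Helstrom error comes from Proposition~\ref{prop:sm-aq-distillation} and the quantum Chernoff theorem is then applied. One small correction is needed. $\xi_{\rm QCB}(\rho,\tau_1(\rho))$ is not always finite. It equals $+\infty$ exactly when $\operatorname{supp}\rho\perp\Theta\operatorname{supp}\rho$, i.e.\ when $\rho$ is itself alchemical (for instance, pure). In that case $1-F_N^{\rm Aq}(\rho)=0$ for every $N$, and the identity holds in the extended-real sense. Your later argument uses only strict positivity, so nothing else is affected.

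For the unbounded yield you take a different route from the paper.
\begin{itemize}
\item The paper composes the optimal one-copy channel $\Lambda_N:N\to1$ with the exact replicator $\mathcal R_{g_0,m_N}$ supplied by alchemicality of $g_0$. Freeness then follows from closure under composition, and the error is bounded by contractivity and Fuchs--van de Graaf, $D\le\sqrt{1-F_N^{\rm Aq}(\rho)}$.
\item You instead write down a single measure-and-prepare channel $N\to m_N$ and certify its freeness directly via Proposition~\ref{prop:sm-aq-free-cov}. Your covariance check is sound: it suffices to verify it on Hermitian inputs, where the coefficients $\operatorname{Tr}(M_NX)$ are real. Replication is never needed.
\end{itemize}

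Your construction also gives the exact value $D=1-F_N^{\rm Aq}(\rho)$. The error therefore decays with the full exponent $\xi_{\rm QCB}$, not the half-exponent suggested by the square-root bound. The paper's argument, in turn, is modular. Its step from one copy to $m_N$ copies uses only exact replicability of the target, so it carries over verbatim to any alchemical theory. It also makes explicit that replication is what makes the rate infinite.

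In aquaternionicity the two channels actually produce the same output on $\rho^{\otimes N}$. Covariance of $\mathcal R_{g_0,m_N}$ forces $\mathcal R_{g_0,m_N}(\tau_1(g_0))=\tau_1(g_0)^{\otimes m_N}$. So the two routes differ only in how freeness is certified and in how tight the final estimate is; the paper's square-root bound is simply loose.
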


\begin{proof}
By Eq.~\eqref{eq:sm-aq-distillation},
\[
1-F_N^{\rm Aq}(\rho)
=
\frac12
-\frac14
\left\|
\rho^{\otimes N}
-
\tau_1(\rho)^{\otimes N}
\right\|_1,
\]
which is exactly the minimum equal-prior error probability for
discriminating
\(\rho^{\otimes N}\) from
\(\tau_1(\rho)^{\otimes N}\).
Equation~\eqref{eq:sm-aq-distillation-exponent} therefore follows
from the quantum Chernoff theorem~\cite{Audenaert2007}.

For the second statement, let
\(\Lambda_N:N\to1\) be the free channel attaining
Eq.~\eqref{eq:sm-aq-distillation}, and write
\[
\sigma_N:=\Lambda_N(\rho^{\otimes N}).
\]
Since \(F_N^{\rm Aq}(\rho)\to1\), the
Fuchs--van de Graaf inequality gives
\[
D(\sigma_N,g_0)
\le
\sqrt{1-F_N^{\rm Aq}(\rho)}
\longrightarrow0.
\]
By repeated exact replication, for every \(m_N\) there exists a free
channel
\[
\mathcal R_{g_0,m_N}:1\to m_N
\]
satisfying
\[
\mathcal R_{g_0,m_N}(g_0)
=
g_0^{\otimes m_N}.
\]
Thus
\[
\Gamma_N
:=
\mathcal R_{g_0,m_N}\circ\Lambda_N
\]
is free, and contractivity of trace distance yields
\[
D\!\left(
\Gamma_N(\rho^{\otimes N}),
g_0^{\otimes m_N}
\right)
\le
D(\sigma_N,g_0)
\longrightarrow0.
\]
Since \(m_N\) is arbitrary, the asymptotic yield is unbounded.
\end{proof}
}

\paragraph{Assisted state conversion.---}

\begin{proposition}[Deterministic pure-source assisted conversion]
\label{prop:sm-aq-assisted-conversion}
Let \(\Psi^{AB}\) be any finite-dimensional pure bipartite state, and
let Bob's system carry an aquaternionicity structure
\(\Theta_B^2=-I\).
For every target state
\(\sigma^B\), Alice can perform a local measurement and communicate
its outcome to Bob such that Bob, using aquaternionicity-free
operations, prepares
\(\sigma^B\) with unit probability.
\end{proposition}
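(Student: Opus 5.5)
The plan is to reduce the statement to two facts already available: every pure state on Bob's system is alchemical (Corollary~\ref{cor:sm-aq-all-pure-alchemical}), and every alchemical state is state universal. The target can then be reached deterministically as soon as Bob holds any \emph{pure} state after Alice's measurement, whatever that pure state is. So Alice only needs a measurement that leaves Bob's conditional state pure for every outcome. Bob's free operation may depend on the classical outcome, and classical control of free operations is allowed: a branchwise-covariant choice indexed by a classical register is again covariant, hence free by Proposition~\ref{prop:sm-aq-free-cov}.

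\textbf{Step 1 (Alice's measurement).} Write the Schmidt decomposition \(\ket{\Psi}^{AB}=\sum_k\sqrt{\lambda_k}\ket{a_k}\ket{b_k}\). Alice measures in any rank-one basis \(\{\ket{\alpha_x}\}\) of her system. This can be, but need not be, the Schmidt basis; the choice does not matter. Outcome \(x\) occurs with probability \(p_x=\|(\bra{\alpha_x}\otimes I)\ket{\Psi}\|^2\). On every outcome with \(p_x>0\), Bob's conditional state is the pure state \(\ket{\beta_x}\propto(\bra{\alpha_x}\otimes I)\ket{\Psi}\). Alice sends \(x\) to Bob.

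\textbf{Step 2 (Bob's conversion).} Bob's system may be a composite of elementary aquaternionic systems, or a single one of dimension \(d_B=2n\) with \(\Theta_B^2=-I\). In either case, Kramers orthogonality \(\langle\Theta_B\beta_x|\beta_x\rangle=0\) holds, so \(\mathbb C\beta_x\) is isotropic. Strictly, this holds directly only when Bob's total antiunitary squares to \(-I\), which is the hypothesis as stated. By Proposition~\ref{prop:sm-aq-support-criterion}, \(\beta_x\) is therefore alchemical. By Proposition~\ref{prop:sm-aq-uniform-simulation}, composed with the free discard of Corollary~\ref{cor:sm-aq-free-discard}, applied to the replacement channel \(X\mapsto\operatorname{Tr}(X)\sigma^B\), there is a free channel \(\Lambda_x\) with \(\Lambda_x(\beta_x)=\sigma^B\). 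A maximally mixed free data register \(I/d\) is supplied as the input, as in the replication argument of Corollary~\ref{cor:sm-aq-lagrangian-alchemical}. Bob applies \(\Lambda_x\) conditioned on \(x\). Every branch then outputs \(\sigma^B\), so the overall success probability is \(\sum_xp_x=1\).

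\textbf{Main obstacle.} The main delicate point is whether the proposition intends Bob's system to be an arbitrary composite. In that case \(\Theta_B=\Theta^{\otimes m}\) squares to \((-1)^m\), and for even \(m\) not every pure state is isotropic. The statement explicitly assumes \(\Theta_B^2=-I\), so I will restrict to that case and remark that this assumption is exactly what makes Kramers orthogonality available.

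A secondary check is that conditioning Bob's operation on classical data fits the framework's free-instrument definition. I will handle it by noting that the full LOCC-assisted map on Bob's side, \(\sum_x\Lambda_x(\cdot)\) applied to the outcome-labeled states, is implemented branchwise by covariant maps. No freeness constraint is imposed on Alice, so her measurement needs no justification.
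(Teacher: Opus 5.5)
Your proposal is correct and follows the paper's proof: Alice's rank-one measurement leaves Bob's conditional state pure on every outcome, and every such state is alchemical by Kramers orthogonality. Bob then applies an outcome-conditioned free channel to that state to prepare $\sigma^B$. The differences are cosmetic: you specialize to a rank-one basis rather than a general rank-one POVM, and you build the conversion channel explicitly from the reusable processor plus free discard instead of citing state universality. Your caveat that the argument needs $\Theta_B^2=-I$, and would fail for even composites, is accurate.
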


\begin{proof}
Alice performs a rank-one POVM.
For every nonzero outcome, Bob's conditional state is pure.
Every pure state of Bob's system is alchemical, and hence state
universal.
Bob therefore applies, conditioned on Alice's outcome, a free channel
mapping the conditional pure state to
\(\sigma^B\).
The protocol succeeds for every outcome.
\end{proof}

Thus the probability--fidelity tradeoff appearing in the corresponding
imaginarity task collapses to the deterministic endpoint for pure
sources.

\paragraph{Resource-assisted channel discrimination.---}
Consider the minimal Kramers system with
\(\Theta=i\sigma_yK\), and the two free channels
\begin{equation}
\mathcal N(\rho)
=
\frac12
(\rho+\sigma_y\rho\sigma_y),
\qquad
\mathcal M(\rho)
=
\frac12
(\sigma_x\rho\sigma_x+\sigma_z\rho\sigma_z).
\label{eq:sm-aq-discrimination-channels}
\end{equation}
The only free state of this system is \(I/2\), and
\[
\mathcal N(I/2)=\mathcal M(I/2)=I/2.
\]
Hence no measurement can distinguish the two channels when only a
free probe is available.

By contrast, for
\(g_\pm=\ket{\pm y}\!\bra{\pm y}\),
\[
\mathcal N(g_+)=g_+,
\qquad
\mathcal M(g_+)=g_-,
\qquad
g_+\perp g_-.
\]
An alchemical copy of \(g_+\) can be replicated so that one copy is
{used as the probe and further copies are used to simulate the required}
measurement.
The two channels are therefore perfectly distinguishable with the
resource.
This reproduces the qualitative resource-assisted discrimination
advantage known from imaginarity.

\section{Recalibration-free exact simulation under noise}
\label{sec:sm-noise}

{
We now formalize the recalibration-free simulation mechanism discussed
in the main text.  The setting is the following.  A pure resource
state is prepared and is then affected, before use, by an unknown
member \(\Gamma_\alpha\) of a known noise family
\(\{\Gamma_\alpha\}_{\alpha\in\Xi}\).  The realized value of
\(\alpha\) is not assumed to be known at the time of operation, and
the noise channel itself need not be free.  We ask when one can
compile a single free processor in advance that exactly simulates the
desired target instrument for every resulting noisy resource state,
without estimating \(\alpha\), performing tomography, or
recalibrating the processor.

The section has three steps.  First, using the uniform simulation theorem
proved above, we derive a common-support corollary: a fixed processor
exists whenever all possible noisy resource states are supported on one
common alchemical subspace \(S\).  For aquaternionicity this means
\(S\perp\Theta S\).  Second, to compare
different alchemical theories under the same noise family, we
introduce \(r_\nu(\Gamma)\), the largest dimension of an alchemical
subspace that is invariant under every channel in the family.
This quantity measures the size of a resource sector that survives
the entire noise family while remaining usable by one uniformly
compiled processor.  Third, we evaluate this quantity for two
standard structured noise models---basis dephasing and pairwise
amplitude damping---and compare imaginarity, parity asymmetry, and
aquaternionicity.  We finally show that the resulting advantage is
specific to these structured noise classes and does not define a
universal ordering under arbitrary CPTP noise.
}

\subsection{Uniform simulation on a common support}

For a subspace \(S\subseteq\mathcal H\), write
\[
\mathcal D(S)
:=
\left\{
\omega\in\mathcal D_1:
\operatorname{supp}\omega\subseteq S
\right\}.
\]

{
\begin{corollary}[Uniform simulation on an isotropic support]
\label{cor:sm-uniform-support}
Let \(S\subseteq\mathcal H\) satisfy
\[
S\perp\Theta S.
\]
For every finite \(m\ge1\) and every quantum instrument
\(\{\mathcal N_x:m\to m\}_x\), there exists a single free instrument
\(\{\Lambda_{x,S}:m+1\to m+1\}_x\) such that
\begin{equation}
\Lambda_{x,S}(\rho\otimes\omega)
=
\mathcal N_x(\rho)\otimes\omega
\label{eq:sm-uniform-support}
\end{equation}
for every
\(\rho\in\mathcal D_m\),
\(\omega\in\mathcal D(S)\), and \(x\).
The processor is independent of the particular state
\(\omega\in\mathcal D(S)\).
\end{corollary}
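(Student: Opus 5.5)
The plan is to reduce the statement directly to Proposition~\ref{prop:sm-aq-uniform-simulation}. The only extra ingredient is an extension of \(S\) to a maximal isotropic subspace. Given an isotropic \(S\) with \(S\perp\Theta S\), I will first invoke the Kramers-pair completion described in Sec.~\ref{sec:sm-aquaternionicity}, or equivalently Lemma~\ref{lem:sm-three-maximal-extension} for \(\nu={\rm Aq}\). This produces an \(n\)-dimensional subspace \(L\supseteq S\) with \(L\perp\Theta L\) and \(\mathcal H=L\oplus\Theta L\). The choice of \(L\) depends only on \(S\), not on any state, so the construction is fixed once \(S\) is given.

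Next I apply Proposition~\ref{prop:sm-aq-uniform-simulation} to this \(L\) and to the target instrument \(\{\mathcal N_x\}_x\). This yields the free instrument \(\Lambda_{x,L}=\sum_\mu\operatorname{Ad}_{K_{x\mu}}\) with
\[
K_{x\mu}=A_{x\mu}\otimes P_L+A_{x\mu}^\#\otimes P_{\Theta L},
\]
and it satisfies \(\Lambda_{x,L}(\rho\otimes\omega)=\mathcal N_x(\rho)\otimes\omega\) for all \(\omega\in\mathcal D(L)\). I then set \(\Lambda_{x,S}:=\Lambda_{x,L}\).

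Since \(\operatorname{supp}\omega\subseteq S\subseteq L\), we have \(\mathcal D(S)\subseteq\mathcal D(L)\). Hence Eq.~\eqref{eq:sm-uniform-support} holds for every \(\omega\in\mathcal D(S)\), every \(\rho\), and every \(x\). Freeness is inherited from the proposition, through the \(\Theta_{m+1}\)-invariance of each \(K_{x\mu}\) and Proposition~\ref{prop:sm-aq-free-cov}. Independence from \(\omega\) is evident, because the Kraus operators involve only \(P_L\), \(P_{\Theta L}\), and the fixed Kraus decomposition of \(\mathcal N_x\).

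The only point that is not immediate is the existence of the maximal isotropic extension, which the paper asserts via Kramers-pair completion. To make it explicit, I would argue inductively using the nondegenerate alternating form \(\Omega(x,y)=\langle\Theta x|y\rangle\).

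\emph{Inductive step.} Suppose \(\dim_{\mathbb C}S=r<n\). The space \(S+\Theta S\) has dimension \(2r<d\), so its orthogonal complement \(T\) is nonzero. The complement \(T\) is \(\Theta\)-invariant, because \(\Theta\) is antiunitary and \(\Theta^2=-I\) preserves \(S+\Theta S\). Pick any unit \(v\in T\). Then \(v\perp\Theta S\) gives \(\Omega(S,v)=0\). Also \(\Theta v\perp S\), and Kramers orthogonality gives \(\Omega(v,v)=0\). Hence \(S+\mathbb Cv\) is isotropic, and iterating this step reaches dimension \(n\).

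I expect no serious obstacle. The main care is checking that \(\Theta(S+\Theta S)=S+\Theta S\), which follows from \(\Theta^2S=S\).
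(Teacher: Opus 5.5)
Your proposal is correct and follows the paper's proof: extend \(S\) to a maximal isotropic \(L\supseteq S\), apply Proposition~\ref{prop:sm-aq-uniform-simulation} to \(L\), and restrict to \(\mathcal D(S)\subseteq\mathcal D(L)\). Your inductive construction of \(L\) is a valid spelling-out of the Kramers-pair completion that the paper only asserts; it picks \(v\perp S\oplus\Theta S\) and uses the \(\Theta\)-invariance of that complement together with Kramers orthogonality.
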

}

\begin{proof}
{Extend \(S\) to a maximal isotropic subspace \(L\supseteq S\).}
Then
\(\mathcal D(S)\subseteq\mathcal D(L)\), and
Proposition~\ref{prop:sm-aq-uniform-simulation} gives a free
instrument
\(\{\Lambda_{x,L}\}_x\) satisfying
Eq.~\eqref{eq:sm-uniform-support} for every
\(\omega\in\mathcal D(L)\).
{Restricting the allowed resource states to}
\(\mathcal D(S)\) proves the claim.
\end{proof}

Consider now a family of noise channels
\(\{\Gamma_\alpha\}_{\alpha\in\Xi}\).

\begin{corollary}[Recalibration-free simulation]
\label{cor:sm-recalibration-free}
Suppose there exist a pure state \(g\) and an isotropic subspace \(S\)
such that
\begin{equation}
\operatorname{supp}\Gamma_\alpha(g)
\subseteq S
\qquad
(\forall\alpha\in\Xi).
\label{eq:sm-noise-common-support}
\end{equation}
Then, for every target instrument, one free processor can be chosen in
advance and used exactly for every noisy resource state
\(\Gamma_\alpha(g)\), independently of \(\alpha\).
In particular, neither estimation of the realized noise parameter nor
tomography of the resulting resource state is required.
\end{corollary}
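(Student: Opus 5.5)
The plan is to derive this directly from Corollary~\ref{cor:sm-uniform-support}, applied with the isotropic subspace \(S\) supplied by the hypothesis.

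\emph{Design stage.} Fix the target instrument \(\{\mathcal N_x:m\to m\}_x\). Corollary~\ref{cor:sm-uniform-support} gives a single free instrument \(\{\Lambda_{x,S}\}_x\). Concretely, we extend \(S\) to a maximal isotropic \(L\) and use the controlled-Kraus operators \(K_{x\mu}=A_{x\mu}\otimes P_L+A_{x\mu}^\#\otimes P_{\Theta L}\) of Proposition~\ref{prop:sm-aq-uniform-simulation}. The key observation is that this construction uses only \(S\) (through \(L\)) and the Kraus data of the target. It uses neither \(g\) nor any \(\alpha\in\Xi\), so the processor is fixed before the noise acts.

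\emph{Operation stage.} For each \(\alpha\), the state \(\omega_\alpha:=\Gamma_\alpha(g)\) is a density operator on the single system, because \(\Gamma_\alpha\) is a channel. The hypothesis \eqref{eq:sm-noise-common-support} says \(\omega_\alpha\in\mathcal D(S)\). Eq.~\eqref{eq:sm-uniform-support} then yields
\[
\Lambda_{x,S}(\rho\otimes\omega_\alpha)=\mathcal N_x(\rho)\otimes\omega_\alpha
\]
for all \(\rho\), all \(x\), and all \(\alpha\). Exact simulation with the resource returned follows immediately. If only the simulated output is wanted, we compose with the free partial trace over the resource register (Corollary~\ref{cor:sm-aq-free-discard}), which gives \(\mathcal N_x(\rho)\) for all \(\alpha\). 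By Proposition~\ref{prop:sm-aq-support-criterion}, each \(\omega_\alpha\) is also alchemical in its own right, although this is not needed.

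\emph{Main point.} The one step needing care is the claim that no knowledge of \(\alpha\) is required. It holds because the processor was compiled independently of the resource state, and because the identity above holds pointwise for every element of \(\mathcal D(S)\). Hence neither estimating \(\alpha\) nor performing tomography on \(\omega_\alpha\) enters the protocol. We should also stress that the noise channels need not be free, since they act only on the resource before it enters the processor. Beyond this, the proof is routine bookkeeping.
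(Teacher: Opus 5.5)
Your proposal is correct and is the paper's proof: the hypothesis places every \(\Gamma_\alpha(g)\) in \(\mathcal D(S)\), and Corollary~\ref{cor:sm-uniform-support} then supplies one free processor that does not depend on the resource state. Your added remarks are accurate elaborations of steps the paper leaves implicit: compiling via the maximal isotropic extension \(L\), optionally discarding the resource register with the free partial trace, and noting that the noise channels need not be free.
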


\begin{proof}
Equation~\eqref{eq:sm-noise-common-support} gives
\(\Gamma_\alpha(g)\in\mathcal D(S)\)
for every \(\alpha\).
{Apply Corollary~\ref{cor:sm-uniform-support}.}
\end{proof}

Because every pure state is alchemical in aquaternionicity, the choice
of the initial pure resource \(g\) is not constrained by
resourcefulness itself.
It may therefore be chosen solely to satisfy the noise-compatibility
condition Eq.~\eqref{eq:sm-noise-common-support}.
This does not assert that such a pair \((g,S)\) exists for every noise
family.

The admissible mixing geometry is the rank-two instance of the same
mechanism.
If
\(g=\ket{\psi}\!\bra{\psi}\)
and
\(h=\ket{\phi}\!\bra{\phi}\in\mathcal M_g^{\rm Aq}\), then
\[
S=\operatorname{span}_{\mathbb C}\{\psi,\phi\}
\]
is isotropic.
Hence a single processor works for the full family
\((1-t)g+th\), \(0\le t\le1\).

\begin{definition}[Noise-invariant alchemical support]
\label{def:sm-noise-invariant-support}
For a noise family
\(\Gamma=\{\Gamma_\alpha\}_{\alpha\in\Xi}\)
and a resource theory \(\nu\), define
\[
r_\nu(\Gamma)
:=
\max_S
\left\{
\dim_{\mathbb C}S:
\Gamma_\alpha(\mathcal D(S))
\subseteq
\mathcal D(S)\ \forall\alpha,\;
\mathcal D(S)\subseteq{\mathcal A^\nu}
\right\}.
\]
If no nonzero such subspace exists, set
\(r_\nu(\Gamma)=0\).
\end{definition}

{
For the two benchmark noise models analyzed below, the resulting
values are summarized in Table~\ref{tab:sm-noise-r-summary}.  Here
{\(d=2n\ge4\)}.
\begin{table}[h]
\centering
\begin{tabular}{c|ccc}
\hline
Noise family
& \(r_{\rm Im}\)
& \(r_{\rm Par}\)
& \(r_{\rm Aq}\)
\\
\hline
Rank-one basis dephasing
& \(0\)
& \(0\)
& \(n\)
\\
Pairwise amplitude damping
& \(\lfloor n/2\rfloor\)
& \(0\)
& \(n\)
\\
\hline
\end{tabular}
\caption{Maximal dimensions of noise-invariant alchemical supports
for the two structured noise models considered in this section.}
\label{tab:sm-noise-r-summary}
\end{table}
}

For aquaternionicity, the condition
\(\mathcal D(S)\subseteq{\mathcal A^{\rm Aq}}\)
is equivalent to
\(S\perp\Theta S\).
For imaginarity and parity asymmetry the corresponding conditions are
\(S\perp KS\) and \(S\perp ZS\), respectively.

\begin{lemma}[Invariant supports and Kraus operators]
\label{lem:sm-kraus-invariant-support}
Let
\[
{\Gamma(X)=\sum_\mu A_\mu XA_\mu^\dagger}
\]
be a quantum channel and \(S\subseteq\mathcal H\) a subspace.
Then
\begin{equation}
\Gamma(\mathcal D(S))
\subseteq
\mathcal D(S)
\quad\Longleftrightarrow\quad
{
A_\mu S\subseteq S
\quad
\text{for every }\mu .
}
\label{eq:sm-kraus-invariant-support}
\end{equation}
The equivalence holds for every Kraus representation of \(\Gamma\).
\end{lemma}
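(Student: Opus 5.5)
Both directions of Eq.~\eqref{eq:sm-kraus-invariant-support} reduce to a statement about the Kraus operators acting on vectors. The link is the standard fact that, for positive operators, \(\operatorname{supp}\sum_\mu A_\mu XA_\mu^\dagger=\sum_\mu A_\mu\operatorname{supp}X\). Here \(\operatorname{supp}(A_\mu XA_\mu^\dagger)=A_\mu\operatorname{supp}X\) because \(\operatorname{ran}(A X^{1/2})=A\operatorname{ran}X^{1/2}\). The support of a sum of positive operators is the sum of their supports, since \(\ker\sum_\mu Y_\mu=\bigcap_\mu\ker Y_\mu\) for \(Y_\mu\ge0\).

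\emph{Sufficiency.} Suppose \(A_\mu S\subseteq S\) for every \(\mu\). Take \(\omega\in\mathcal D(S)\), so \(\operatorname{supp}\omega\subseteq S\). By the support formula, \(\operatorname{supp}\Gamma(\omega)=\sum_\mu A_\mu\operatorname{supp}\omega\subseteq S\). Since \(\Gamma\) is trace preserving, \(\Gamma(\omega)\) is a state, and therefore \(\Gamma(\omega)\in\mathcal D(S)\).

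\emph{Necessity.} Suppose \(\Gamma(\mathcal D(S))\subseteq\mathcal D(S)\). If \(S=0\), both sides hold vacuously. Otherwise take the state \(\omega=P_S/\dim S\), whose support is exactly \(S\). Then \(\operatorname{supp}\Gamma(\omega)=\sum_\mu A_\mu S\subseteq S\), and so \(A_\mu S\subseteq S\) for each \(\mu\). An equivalent route avoids the support formula: for each \(\psi\in S\), the state \(\proj{\psi}\) is mapped into \(\mathcal D(S)\). Hence \(\operatorname{Tr}[(I-P_S)\Gamma(\proj{\psi})]=\sum_\mu\|(I-P_S)A_\mu\psi\|^2=0\), which gives \(A_\mu\psi\in S\) term by term, because each summand is nonnegative.

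\emph{Independence of the Kraus representation.} The left-hand side of Eq.~\eqref{eq:sm-kraus-invariant-support} depends only on \(\Gamma\), and the argument above used an arbitrary Kraus decomposition. Consequently the right-hand condition holds for one Kraus representation exactly when it holds for all of them. One could also check this directly, since any two Kraus representations are related by an isometry, \(B_\nu=\sum_\mu u_{\nu\mu}A_\mu\), and invariance of \(S\) passes to linear combinations in one direction. The converse direction, from the \(B_\nu\) back to the \(A_\mu\), is the only mildly delicate point, and the channel-level characterization handles it automatically.
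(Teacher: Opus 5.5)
Your proof is correct and essentially the paper's. The paper proves necessity exactly by your ``equivalent route'': it compresses with $Q=I-P_S$ and notes that $Q\Gamma(\proj{\psi})Q=\sum_\mu\ket{QA_\mu\psi}\!\bra{QA_\mu\psi}=0$ forces $QA_\mu\psi=0$ for every $\mu$ and every $\psi\in S$. It treats sufficiency as immediate, which your support formula $\operatorname{supp}\Gamma(X)=\sum_\mu A_\mu\operatorname{supp}X$ justifies. Since neither direction uses a particular decomposition, independence of the Kraus representation follows as you say. The converse step you flag is not actually delicate: after padding with zeros the relating matrix is unitary, so each Kraus family is a linear combination of the other.
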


\begin{proof}
{If every \(A_\mu\) preserves \(S\),} the support inclusion is
immediate.

Conversely, assume
\[
\Gamma(\mathcal D(S))
\subseteq\mathcal D(S),
\]
and let \(Q=I-P_S\).
For any \(\psi\in S\),
\[
0
=
Q\Gamma(\ket{\psi}\!\bra{\psi})Q
=
{
\sum_\mu
\ket{QA_\mu\psi}\!\bra{QA_\mu\psi}.
}
\]
Every term on the right-hand side is positive semidefinite.
Their sum can vanish only if
\[
{
QA_\mu\psi=0
\qquad
(\forall\mu).
}
\]
{
Thus
\(A_\mu\psi\in S\)
for every \(\psi\in S\), proving
\(A_\mu S\subseteq S\).
}
\end{proof}

\subsection{A broader dephasing family}

Let
\(\{\Pi_j\}_{j=1}^r\)
be a PVM satisfying
\[
\Theta\Pi_j\Theta^{-1}
=
\Pi_j
\qquad
(\forall j),
\]
and define
\[
\mathsf D_{\Pi,\lambda}(X)
=
\lambda X
+
(1-\lambda)
\sum_j\Pi_jX\Pi_j,
\qquad
0\le\lambda\le1.
\]

\begin{proposition}[Kramers-compatible dephasing]
\label{prop:sm-kramers-dephasing}
For every pure input
\(g=\ket{\psi}\!\bra{\psi}\), define
\[
S_{\psi,\Pi}
:=
\operatorname{span}_{\mathbb C}
\{\Pi_j\psi:\Pi_j\psi\neq0\}.
\]
Then \(S_{\psi,\Pi}\) is isotropic, and
\[
\operatorname{supp}
\mathsf D_{\Pi,\lambda}(g)
\subseteq
S_{\psi,\Pi}
\qquad
(\forall\lambda\in[0,1]).
\]
Consequently, for every target instrument there exists a single free
processor, independent of \(\lambda\), that uses
\(\mathsf D_{\Pi,\lambda}(g)\)
exactly for all
\(\lambda\in[0,1]\).
\end{proposition}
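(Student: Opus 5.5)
The argument has three parts. First we show that \(S_{\psi,\Pi}\) is isotropic, using only that each \(\Pi_j\) commutes with \(\Theta\) and the orthogonality relations of the PVM. Next we check that the dephased state lives inside \(S_{\psi,\Pi}\). Finally, Corollary~\ref{cor:sm-recalibration-free} (equivalently Corollary~\ref{cor:sm-uniform-support}) is applied with \(S=S_{\psi,\Pi}\).

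For isotropy, we need \(\Omega(\Pi_j\psi,\Pi_k\psi)=\langle\Theta\Pi_j\psi|\Pi_k\psi\rangle=0\) for all \(j,k\). The condition \(\Theta\Pi_j\Theta^{-1}=\Pi_j\) gives \(\Theta\Pi_j\psi=\Pi_j\Theta\psi\), so the pairing becomes \(\langle\Pi_j\Theta\psi|\Pi_k\psi\rangle=\langle\Theta\psi|\Pi_j\Pi_k\psi\rangle\).

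\begin{itemize}
\item For \(j\neq k\) this vanishes, since \(\Pi_j\Pi_k=0\).
\item For \(j=k\) it equals \(\langle\Theta\psi|\Pi_j\psi\rangle=\langle\Pi_j\Theta\psi|\Pi_j\psi\rangle=\langle\Theta(\Pi_j\psi)|\Pi_j\psi\rangle\), which vanishes by Kramers orthogonality, Eq.~\eqref{eq:sm-kramers-orthogonality}, applied to the vector \(\Pi_j\psi\).
\end{itemize}

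Since \(\Omega\) is complex bilinear, vanishing on these spanning vectors gives \(\Omega(S_{\psi,\Pi},S_{\psi,\Pi})=0\), that is, \(S_{\psi,\Pi}\perp\Theta S_{\psi,\Pi}\). The only subtle point in this step is that \(\Omega\) is bilinear rather than sesquilinear. Because bilinearity holds for antiunitary \(\Theta\), extending from the spanning vectors to the whole span is legitimate.

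For the support inclusion, note that \(\psi=\sum_j\Pi_j\psi\in S_{\psi,\Pi}\). Hence
\[
\mathsf D_{\Pi,\lambda}(g)=\lambda\ket{\psi}\!\bra{\psi}+(1-\lambda)\sum_j\ket{\Pi_j\psi}\!\bra{\Pi_j\psi}
\]
is a positive combination of rank-one projectors onto vectors lying in \(S_{\psi,\Pi}\). Its support is therefore contained in \(S_{\psi,\Pi}\) for every \(\lambda\in[0,1]\).

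The family \(\{\mathsf D_{\Pi,\lambda}\}_\lambda\) with input \(g\) now satisfies Eq.~\eqref{eq:sm-noise-common-support} with the isotropic subspace \(S=S_{\psi,\Pi}\). Corollary~\ref{cor:sm-recalibration-free} then yields, for every target instrument, a single free processor \(\{\Lambda_{x,S}\}_x\) that works exactly for all \(\lambda\). This processor is obtained by extending \(S\) to a maximal isotropic \(L\) and using Proposition~\ref{prop:sm-aq-uniform-simulation}, and it is independent of \(\lambda\) because \(S\) does not depend on \(\lambda\).
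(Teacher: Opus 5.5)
Your proof is correct and takes essentially the same route as the paper: isotropy is checked on the generators \(\Pi_j\psi\), using \(\Theta\)-invariance of the \(\Pi_j\) with PVM orthogonality for \(j\neq k\) and Kramers orthogonality for \(j=k\); the support inclusion is read off the explicit dephased output; and the conclusion follows from Corollary~\ref{cor:sm-recalibration-free}. You also make explicit two points the paper leaves implicit, namely that \(\Omega\) is complex bilinear (so isotropy on generators extends to the whole span) and that \(\psi=\sum_j\Pi_j\psi\in S_{\psi,\Pi}\).
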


\begin{proof}
Write
\(\psi_j=\Pi_j\psi\).
Because each \(\Pi_j\) is \(\Theta\)-invariant,
\(\Theta\psi_j\in\operatorname{ran}\Pi_j\).
For \(j\neq k\), orthogonality of the PVM gives
\[
\langle\Theta\psi_j|\psi_k\rangle=0,
\]
while for \(j=k\), Kramers orthogonality gives
\[
\langle\Theta\psi_j|\psi_j\rangle=0.
\]
Thus
\(S_{\psi,\Pi}\perp\Theta S_{\psi,\Pi}\).

The output is
\[
\lambda\ket{\psi}\!\bra{\psi}
+
(1-\lambda)
\sum_j
\ket{\psi_j}\!\bra{\psi_j},
\]
whose support lies in
\(S_{\psi,\Pi}\).
Corollary~\ref{cor:sm-recalibration-free} then gives the fixed
processor.
\end{proof}

\subsection{Rank-one basis dephasing}

Let
\(\{e_j,f_j\}_{j=1}^n\)
be orthonormal and satisfy
\[
Ke_j=e_j,\qquad Kf_j=f_j,
\]
\[
Ze_j=e_j,\qquad Zf_j=-f_j,
\]
and
\[
\Theta e_j=f_j,\qquad
\Theta f_j=-e_j.
\]

Consider
\begin{equation}
\Gamma_\lambda^{\rm deph}(X)
=
\lambda X
+
(1-\lambda)
\sum_{j=1}^n
\left(
P_{e_j}XP_{e_j}
+
P_{f_j}XP_{f_j}
\right),
\qquad
0\le\lambda<1.
\label{eq:sm-basis-dephasing}
\end{equation}

\begin{lemma}[Invariant supports for basis dephasing]
\label{lem:sm-dephasing-invariant-support}
For \(0\le\lambda<1\), a subspace \(S\) satisfies
\[
\Gamma_\lambda^{\rm deph}(\mathcal D(S))
\subseteq
\mathcal D(S)
\]
if and only if \(S\) is a coordinate subspace of the basis
\(\{e_j,f_j\}\).
\end{lemma}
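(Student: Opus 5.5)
The approach is to use Lemma~\ref{lem:sm-kraus-invariant-support} with the Kraus representation
\[
\Gamma_\lambda^{\rm deph}(X)=\lambda X+(1-\lambda)\sum_j\bigl(P_{e_j}XP_{e_j}+P_{f_j}XP_{f_j}\bigr),
\]
whose Kraus operators are
\[
\sqrt{\lambda}\,I,\qquad \sqrt{1-\lambda}\,P_{e_j},\qquad \sqrt{1-\lambda}\,P_{f_j}.
\]
The equivalence stated there holds for every Kraus representation, so it suffices to determine which subspaces are invariant under all of these operators. When \(\lambda=0\) the identity term is absent, but this does not matter because only the rank-one projectors carry information. Because \(1-\lambda>0\), invariance of \(\mathcal D(S)\) is equivalent to \(P_vS\subseteq S\) for every basis vector \(v\in\{e_j,f_j\}\). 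The identity term, when present, imposes nothing.

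For the ``if'' direction, take \(S=\operatorname{span}\{v:v\in B\}\) for some subset \(B\) of the basis. Each \(P_v\) either annihilates \(S\) (when \(v\notin B\)) or maps \(S\) onto \(\mathbb Cv\subseteq S\) (when \(v\in B\)). Hence all Kraus operators preserve \(S\), and the lemma gives invariance.

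For the ``only if'' direction, let \(B:=\{v: \langle v|s\rangle\neq0\text{ for some }s\in S\}\). We show \(S=\operatorname{span}B\).
\begin{itemize}
\item For \(v\in B\), pick \(s\in S\) with \(\langle v|s\rangle\neq0\). Then \(P_vs=\langle v|s\rangle v\in S\), so \(v\in S\). This gives \(\operatorname{span}B\subseteq S\).
\item Conversely, every \(s\in S\) expands only over basis vectors with nonzero overlap, so \(s\in\operatorname{span}B\).
\end{itemize}
Thus \(S\) is a coordinate subspace.

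There is no real obstacle here. The only care needed is to invoke Lemma~\ref{lem:sm-kraus-invariant-support} for the specific representation above, which is legitimate since that lemma applies to any Kraus representation, and to use \(\lambda<1\) so that the dephasing projectors appear with nonzero weight. At \(\lambda=1\) the channel is the identity and every subspace would be invariant.
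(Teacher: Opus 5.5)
Your proof is correct and follows the paper's route: you feed the same Kraus representation $\sqrt{\lambda}\,I$, $\sqrt{1-\lambda}\,P_{e_j}$, $\sqrt{1-\lambda}\,P_{f_j}$ into Lemma~\ref{lem:sm-kraus-invariant-support} and reduce the question to invariance under the rank-one coordinate projectors. The only difference is that you explicitly verify the final step, namely that the subspaces invariant under every $P_v$ are exactly the coordinate subspaces, whereas the paper just asserts it.
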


\begin{proof}
A Kraus representation is
\[
\sqrt{\lambda}\,I,\qquad
\sqrt{1-\lambda}\,P_{e_j},\qquad
\sqrt{1-\lambda}\,P_{f_j}.
\]
By Lemma~\ref{lem:sm-kraus-invariant-support}, invariance of \(S\)
is therefore equivalent to invariance under every rank-one coordinate
projector.
This holds exactly for subspaces spanned by subsets of the coordinate
basis.
\end{proof}

\begin{proposition}[Basis-dephasing comparison]
\label{prop:sm-basis-dephasing-comparison}
For the family in
Eq.~\eqref{eq:sm-basis-dephasing},
\begin{equation}
r_{\rm Im}=0,\qquad
r_{\rm Par}=0,\qquad
r_{\rm Aq}=n.
\label{eq:sm-basis-dephasing-comparison}
\end{equation}
\end{proposition}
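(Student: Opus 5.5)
The plan is to combine Lemma~\ref{lem:sm-dephasing-invariant-support} with the support criteria of Proposition~\ref{prop:sm-three-support}. By Lemma~\ref{lem:sm-dephasing-invariant-support}, for $0\le\lambda<1$ the admissible supports in Definition~\ref{def:sm-noise-invariant-support} are exactly the coordinate subspaces
\[
S_{I,J}=\operatorname{span}_{\mathbb C}\{e_j:j\in I\}\oplus\operatorname{span}_{\mathbb C}\{f_j:j\in J\}.
\]
A single channel $\Gamma^{\rm deph}_\lambda$ with $\lambda<1$ already forces this, so the same holds for any family in the range. By Proposition~\ref{prop:sm-three-support}, the condition $\mathcal D(S)\subseteq\mathcal A^\nu$ is equivalent to $S\perp W_\nu S$, since every state supported in $S$ has support contained in $S$, and $S$ itself is the support of $P_S/\dim S$. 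Computing $r_\nu$ therefore reduces to checking which coordinate subspaces satisfy $S\perp W_\nu S$.

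For imaginarity and parity asymmetry, I will show that no nonzero coordinate subspace qualifies. Any nonzero $S_{I,J}$ contains some basis vector $b\in\{e_j,f_j\}$.

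\emph{Imaginarity.} Since $Kb=b$, we get $\langle Kb|b\rangle=1\neq0$. Hence $b\not\perp KS$, and therefore $S\not\perp KS$.

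\emph{Parity asymmetry.} Since $Zb=\pm b$, we get $\langle Zb|b\rangle=\pm1\neq0$. Hence $S\not\perp ZS$.

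This gives $r_{\rm Im}=r_{\rm Par}=0$.

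For aquaternionicity, the upper bound $r_{\rm Aq}\le n$ follows from the fact that every isotropic subspace has complex dimension at most $n$; this was recorded in Sec.~\ref{sec:sm-aquaternionicity} via the nondegenerate alternating form $\Omega$. For the matching lower bound, I take $E=S_{\{1,\dots,n\},\emptyset}$. It is a coordinate subspace, so it is noise invariant. It is also isotropic, because $\Theta E=\operatorname{span}_{\mathbb C}\{f_j\}$, which is orthogonal to $E$; antilinearity does not affect the span here. Hence $r_{\rm Aq}=n$.

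The only step requiring care is justifying that the union over $\alpha$ is handled correctly and that the definition's ``$\mathcal D(S)\subseteq\mathcal A^\nu$'' matches the support criterion. I expect no real obstacle beyond citing the earlier lemma and proposition.
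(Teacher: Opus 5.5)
Your proposal is correct and follows essentially the same route as the paper: restrict to coordinate subspaces via Lemma~\ref{lem:sm-dephasing-invariant-support}, rule out imaginarity and parity asymmetry because every coordinate basis vector is fixed by $K$ or is a $Z$-eigenvector, and obtain $r_{\rm Aq}=n$ from $E\perp\Theta E$ together with the bound $\dim_{\mathbb C}S\le n$ on isotropic subspaces. Your explicit check that $\mathcal D(S)\subseteq\mathcal A^\nu$ is equivalent to $S\perp W_\nu S$, using $P_S/\dim S$, spells out a step the paper states only in passing.
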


\begin{proof}
Every nonzero coordinate subspace contains a basis vector fixed by
\(K\), and hence cannot satisfy
\(S\perp KS\).
Thus \(r_{\rm Im}=0\).

Likewise, every coordinate basis vector is an eigenvector of \(Z\),
so no nonzero coordinate subspace can satisfy
\(S\perp ZS\).
Hence \(r_{\rm Par}=0\).

For aquaternionicity,
\[
E
:=
\operatorname{span}_{\mathbb C}
\{e_1,\ldots,e_n\}
\]
is a coordinate subspace and satisfies
\[
E\perp\Theta E.
\]
Thus \(r_{\rm Aq}\ge n\).
No alchemical support can exceed dimension \(n=d/2\), so equality
holds.
\end{proof}

For \(d=4\) (\(n=2\)), this yields
\[
(r_{\rm Im},r_{\rm Par},r_{\rm Aq})
=
(0,0,2),
\]
as reported in the main text.

\subsection{Pairwise amplitude damping}

Let
\[
E
=
\operatorname{span}_{\mathbb C}
\{e_1,\ldots,e_n\},
\qquad
F
=
\operatorname{span}_{\mathbb C}
\{f_1,\ldots,f_n\},
\]
and denote their projectors by
\(P_E,P_F\).
For \(0<\gamma\le1\), consider the channel with Kraus operators
\begin{equation}
A_0
=
P_E+\sqrt{1-\gamma}\,P_F,
\qquad
A_j
=
\sqrt{\gamma}\,
\ket{e_j}\!\bra{f_j},
\quad
j=1,\ldots,n.
\label{eq:sm-pair-damping}
\end{equation}

\begin{proposition}[Amplitude-damping comparison]
\label{prop:sm-damping-comparison}
For the channel in
Eq.~\eqref{eq:sm-pair-damping},
\begin{equation}
r_{\rm Im}
=
\left\lfloor\frac n2\right\rfloor,
\qquad
r_{\rm Par}=0,
\qquad
r_{\rm Aq}=n.
\label{eq:sm-damping-comparison}
\end{equation}
\end{proposition}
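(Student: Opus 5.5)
Everything reduces to invariant-subspace linear algebra. By Lemma~\ref{lem:sm-kraus-invariant-support}, applied with the Kraus operators of Eq.~\eqref{eq:sm-pair-damping}, a subspace $S$ is noise invariant iff $A_0S\subseteq S$ and $A_jS\subseteq S$ for all $j$. We then intersect this with the respective support conditions $S\perp KS$, $S\perp ZS$, $S\perp\Theta S$ from Proposition~\ref{prop:sm-three-support}.

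\textbf{Step 1: structure of invariant subspaces.} Take $v=x+y\in S$ with $x\in E$ and $y=\sum_j y_jf_j\in F$. Since $A_0v=x+\sqrt{1-\gamma}\,y\in S$, we get $(1-\sqrt{1-\gamma})\,y\in S$ whenever $\gamma>0$. Hence $x,y\in S$ separately, so $S=(S\cap E)\oplus(S\cap F)$. In addition, $A_jy=\sqrt\gamma\,y_je_j\in S$, so $e_j\in S$ whenever some vector of $S$ has a nonzero $f_j$-component. Conversely, every subspace of $E$ is invariant, because $A_0$ acts as the identity on $E$ and each $A_j$ annihilates $E$.

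\textbf{Step 2: aquaternionicity.} $E$ is invariant, and $\Theta E=F\perp E$, so $r_{\rm Aq}\ge n$. The general bound $\dim_{\mathbb C}S\le n$ for isotropic subspaces then gives equality.

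\textbf{Step 3: parity.} $E$ and $F$ are $Z$-eigenspaces, so any nonzero $u\in S\cap E$ or $u\in S\cap F$ satisfies $\langle u|Z|u\rangle=\pm\|u\|^2\neq0$, which violates $S\perp ZS$. Since $S$ splits as in Step 1, $S=0$ and $r_{\rm Par}=0$.

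\textbf{Step 4: imaginarity.} Suppose $S\cap F\ne0$, and pick $y\in S\cap F$ with $y_j\ne0$. Step 1 gives $e_j\in S$. But $e_j$ is real, so $e_j^Te_j=1\neq0$, contradicting total $B$-isotropy, where $B(x,y)=x^Ty$. Hence $S\subseteq E$. Then $S$ is a totally isotropic subspace of the nondegenerate symmetric form $B$ restricted to $E\cong\mathbb C^n$, so $\dim S\le\lfloor n/2\rfloor$. This bound is attained by $\operatorname{span}\{e_{2k-1}+ie_{2k}\}_{k=1}^{\lfloor n/2\rfloor}\subseteq E$, which is invariant by Step 1. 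The bound itself is the step needing the most care, and I would prove it by the standard argument: if $S\subseteq S^{\perp_B}$ within $E$, then nondegeneracy gives $\dim S+\dim S^{\perp_B}=n$, hence $2\dim S\le n$. This yields $r_{\rm Im}=\lfloor n/2\rfloor$.
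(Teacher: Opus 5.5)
Your proposal is correct and follows the paper's proof essentially step for step. You have the same ingredients: the splitting $S=(S\cap E)\oplus(S\cap F)$ from $A_0$-invariance (you obtain it by the explicit subtraction $v-A_0v$, the paper by noting that $P_E,P_F$ are polynomials in $A_0$), the jump operators forcing $e_j\in S$, sign-definiteness of $Z$ on $E$ and $F$ for parity, and the reduction to a totally $B$-isotropic subspace of $E$ with $2\dim_{\mathbb C}S\le n$ for imaginarity. The only cosmetic difference is that for aquaternionicity you invoke the general isotropic bound $\dim_{\mathbb C}S\le n$ directly, whereas the paper first shows $S\subseteq E$; both give $r_{\rm Aq}=n$.
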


\begin{proof}
By Lemma~\ref{lem:sm-kraus-invariant-support}, every invariant support
\(S\) is invariant under
\(A_0,A_1,\ldots,A_n\).
Since
\[
A_0=P_E+\sqrt{1-\gamma}P_F
\]
has two distinct eigenvalues, its spectral projectors \(P_E,P_F\) are
polynomials in \(A_0\).
Hence \(S\) is invariant under both and therefore
\[
S=(S\cap E)\oplus(S\cap F).
\]

Suppose first that
\(S\cap F\neq\{0\}\).
Take
\[
v=\sum_jy_jf_j\in S\cap F,
\qquad
v\neq0.
\]
For every \(j\) with \(y_j\neq0\),
\[
A_jv
=
\sqrt{\gamma}\,y_je_j
\in S.
\]

For aquaternionicity,
\[
\langle\Theta e_j|v\rangle
=
\langle f_j|v\rangle
=
y_j\neq0,
\]
contradicting
\(S\perp\Theta S\).
Thus every invariant alchemical support satisfies
\(S\subseteq E\).
The full subspace \(E\) obeys
\(E\perp\Theta E\), so
\[
r_{\rm Aq}=n.
\]

For parity asymmetry, the decomposition
\[
S=(S\cap E)\oplus(S\cap F)
\]
is incompatible with
\(S\perp ZS\):
every nonzero vector in \(E\) has positive \(Z\)-norm, while every
nonzero vector in \(F\) has negative \(Z\)-norm.
Hence no nonzero invariant alchemical support exists and
\(r_{\rm Par}=0\).

For imaginarity, if
\(S\cap F\neq0\), the jump operators generate some real basis vector
\(e_j\in S\), which violates
\(S\perp KS\).
Hence again
\(S\subseteq E\).

The remaining problem is therefore to find the maximal dimension of a
subspace
\(S\subseteq\mathbb C^n\)
that is totally isotropic for the nondegenerate symmetric bilinear
form
\[
B(x,y)=x^Ty.
\]
Since
\(S\subseteq S^{\perp_B}\),
\[
2\dim_{\mathbb C}S\le n.
\]
The bound is attained by
\[
S
=
\operatorname{span}_{\mathbb C}
\left\{
\frac{e_1+ie_2}{\sqrt2},
\frac{e_3+ie_4}{\sqrt2},
\ldots
\right\},
\]
with
\(\lfloor n/2\rfloor\)
generators.
Therefore
\[
r_{\rm Im}
=
\left\lfloor\frac n2\right\rfloor.
\]
\end{proof}

For \(d=4\) (\(n=2\)),
\[
(r_{\rm Im},r_{\rm Par},r_{\rm Aq})
=
(1,0,2).
\]

\subsection{No universal ordering for arbitrary noise}

The preceding examples demonstrate an advantage of aquaternionicity
for two standard noise structures.
They do not imply that aquaternionicity dominates every other
alchemical theory for every possible CPTP noise channel.

\begin{proposition}[No universal ordering under arbitrary CPTP noise]
\label{prop:sm-no-universal-ordering}
In \(d=4\), there exist replacement channels for which the maximal
nonzero noise-invariant alchemical support exists only for
aquaternionicity, only for parity asymmetry, or only for imaginarity,
respectively.
Hence no universal ordering of the three theories holds for arbitrary
CPTP noise.
\end{proposition}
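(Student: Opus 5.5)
The plan is to use replacement channels \(\Gamma_\tau(X)=\operatorname{Tr}(X)\,\tau\) with a fixed \(\tau\in\mathcal D_1\) on \(d=4\). The key simplification is that, for a replacement channel, the invariance condition \(\Gamma(\mathcal D(S))\subseteq\mathcal D(S)\) holds for a nonzero \(S\) exactly when \(\operatorname{supp}\tau\subseteq S\). By Proposition~\ref{prop:sm-three-support}, \(\mathcal D(S)\subseteq\mathcal A^\nu\) is equivalent to \(S\perp W_\nu S\), and this condition is inherited by subspaces. Hence a nonzero noise-invariant alchemical support exists for theory \(\nu\) if and only if \(T:=\operatorname{supp}\tau\) itself satisfies \(T\perp W_\nu T\). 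In that case the maximal such support is a maximal \(W_\nu\)-isotropic extension of \(T\), available by Lemma~\ref{lem:sm-three-maximal-extension}. The proposition therefore reduces to exhibiting three pure states \(\tau=\proj{\chi}\), each isotropic for exactly one of the three conditions in Eq.~\eqref{eq:sm-three-pure-conditions}.

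I would work in the basis \(\{e_1,e_2,f_1,f_2\}\) with \(K\) real on all four vectors, \(Z=\operatorname{diag}(1,1,-1,-1)\) and \(\Theta e_j=f_j\), \(\Theta f_j=-e_j\). Since every one-dimensional support is \(\Theta\)-isotropic by Kramers orthogonality, a pure \(\tau\) always gives \(r_{\rm Aq}=2\). The ``only aquaternionicity'' and ``only parity'' cases therefore need a rank-two \(\tau\), while the ``only imaginarity'' case also needs rank two, with \(T\) chosen to be \(\Theta\)-non-isotropic.

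For the three cases I will take the following supports. \emph{Only Aq:} \(T=\operatorname{span}\{e_1,e_2\}\), which is \(\Theta\)-isotropic but contains real \(Z\)-eigenvectors, so both the Im and Par conditions fail. \emph{Only Par:} \(T=\operatorname{span}\{(e_1+f_1)/\sqrt2,(e_2+f_2)/\sqrt2\}\). This is \(Z\)-isotropic, since the vectors are \(Z\)-neutral and \(\langle e_1+f_1|Z|e_2+f_2\rangle=0\). It fails the Im condition because it contains real vectors. It also fails the \(\Theta\) condition, since \(\langle\Theta(e_1+f_1)|e_2+f_2\rangle=\langle f_1-e_1|e_2+f_2\rangle=0\) but \(\langle\Theta(e_2+f_2)|e_1+f_1\rangle=0\) as well, so I would replace the support by \(\operatorname{span}\{(e_1+f_2)/\sqrt2,(e_2+f_1)/\sqrt2\}\). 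For this choice \(\Omega\) pairs the two vectors nontrivially: \(\langle\Theta(e_1+f_2)|e_2+f_1\rangle=\langle f_1-e_2|e_2+f_1\rangle=0\)? This also vanishes, so I would instead take \(\operatorname{span}\{(e_1+f_1)/\sqrt2,(e_2-f_2)/\sqrt2\}\). For that pair the inner product \(\langle f_1-e_1|e_2-f_2\rangle\) is again zero. The general fact to use here is that a two-dimensional \(T\) fails \(\Theta\)-isotropy exactly when \(\Omega(u,v)\neq0\) for its basis \(u,v\), and I would search among \(Z\)-Lagrangians \(\{x\oplus Ux\}\) for a unitary \(U\) making \(\Omega\) nonzero. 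A workable choice is \(u=(e_1+f_2)/\sqrt2\), \(v=(e_2-f_1)/\sqrt2\), for which \(\langle\Theta u|v\rangle=\langle f_1-e_2|e_2-f_1\rangle/2=-1\neq0\). This pair is \(Z\)-isotropic (\(U\) maps \(e_1\mapsto f_2\), \(e_2\mapsto -f_1\), which is unitary) and contains no real isotropic obstruction issue, because the Im condition fails already from \(u^Tu=1\neq0\). \emph{Only Im:} take \(T=\operatorname{span}\{(e_1+ie_2)/\sqrt2,(f_1+if_2)/\sqrt2\}\). It is \(K\)-isotropic because \(B\) vanishes on it. It is not \(Z\)-isotropic, since \((e_1+ie_2)\) has positive \(Z\)-norm. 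It is not \(\Theta\)-isotropic, because \(\langle\Theta(e_1+ie_2)|f_1+if_2\rangle=\langle f_1-if_2|f_1+if_2\rangle\)? That evaluates to \(1-1=0\), so I would swap the second vector to \(f_1-if_2\), for which \(B\) still vanishes and \(\Omega\) gives \(2\neq0\).

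The only real obstacle is this bookkeeping: in each case one must check the three bilinear or sesquilinear conditions, and a naive choice can accidentally satisfy \(\Omega=0\), as the attempts above show. I would verify each final choice by computing the Gram-type matrices \(u^Tv\), \(\langle u|Z|v\rangle\) and \(\langle\Theta u|v\rangle\) explicitly. Then \(\tau:=P_T/2\) in each case gives the claimed exclusivity, and no universal ordering follows.
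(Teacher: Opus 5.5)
Your proposal is correct and is essentially the paper's proof: a replacement channel forces every invariant support to contain $\operatorname{supp}\tau$, and isotropy passes to subspaces. Your final supports $E$, $\operatorname{span}_{\mathbb C}\{(e_1+f_2)/\sqrt2,(e_2-f_1)/\sqrt2\}$, and $\operatorname{span}_{\mathbb C}\{(e_1+ie_2)/\sqrt2,(f_1-if_2)/\sqrt2\}$ coincide with the paper's $S_{\rm Aq}$, $S_{\rm Par}$ (with $U$ replaced by $-U$), and $S_{\rm Im}$. Your failed trials all used a symmetric $U$, which is exactly why $\Omega$ vanished there. In a write-up, delete those trials and correct the side remark that all three cases need rank two: only the Par and Im cases do, since a pure $\tau=\proj{e_1}$ already works for the Aq case.
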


\begin{proof}
Use the decomposition
\[
\mathcal H=E\oplus F,
\qquad
E=\operatorname{span}\{e_1,e_2\},
\qquad
F=\operatorname{span}\{f_1,f_2\},
\]
with
\(Ze_j=e_j\),
\(Zf_j=-f_j\),
and
\(\Theta e_j=f_j\).

For aquaternionicity, take
\[
S_{\rm Aq}=E.
\]
Then
\(S_{\rm Aq}\perp\Theta S_{\rm Aq}\),
whereas
\(KS_{\rm Aq}=S_{\rm Aq}\) and
\(ZS_{\rm Aq}=S_{\rm Aq}\);
hence it is alchemical only for aquaternionicity.

For parity asymmetry, let
\[
U=
\begin{pmatrix}
0&1\\
-1&0
\end{pmatrix},
\qquad
S_{\rm Par}
=
\left\{
\frac{(x,Ux)}{\sqrt2}:
x\in\mathbb C^2
\right\}.
\]
Since \(U\) is unitary,
\(S_{\rm Par}\perp ZS_{\rm Par}\).
However,
\(U^T=-U\), so this support fails the aquaternionicity isotropy
condition; it also fails the imaginarity condition.

For imaginarity, take
\[
u
=
\frac{e_1+ie_2}{\sqrt2},
\qquad
v
=
\frac{f_1-if_2}{\sqrt2},
\qquad
S_{\rm Im}
=
\operatorname{span}\{u,v\}.
\]
Then
\[
u^Tu=v^Tv=u^Tv=0,
\]
so
\(S_{\rm Im}\perp KS_{\rm Im}\).
It is not parity-alchemical because \(u\) lies entirely in the
positive \(Z\)-sector, and it is not aquaternionicity-alchemical
because
\[
\Theta u=v
\]
and hence
\(\langle\Theta u|v\rangle=1\).

For each \(\nu\), let the noise channel replace every input by a
{state of full rank on \(S_\nu\).}
The subspace \(S_\nu\) is then an invariant alchemical support in the
corresponding theory.
If an alchemical support of either of the other theories contained
\(S_\nu\), then its subspace \(S_\nu\) would itself satisfy that
theory's support-orthogonality condition, contrary to the construction.
Thus no nonzero invariant alchemical support exists for the other two
theories.
\end{proof}

Accordingly, our comparison concerns structured noise classes rather
than an absolute ordering of resource theories under arbitrary noise.
For the standard dephasing and dissipative models analyzed above, the
absence of pure-state directional restrictions and the Kramers
isotropy of aquaternionicity lead to the separations reported in the
main text.